\documentclass[twoside,a4paper]{amsart}
\providecommand{\text}[1]{\mbox{#1}}
\usepackage[OT2, OT1]{fontenc}
%
%

  \usepackage[pdfauthor={Vladimir V. Kisil},%
      pdftitle={Classical/Quantum=Commutative/Noncommutative?},%
      pdfsubject={mathematics, physics},%
      backref=page,%
    pdfkeywords={symmetries, quantum mechanics, hypercomplex numbers},
  breaklinks=true,%
  colorlinks=true,%
  bookmarks=true,%
  backref=page,%
  pagebackref=true
  ]{hyperref}
\usepackage[backrefs,msc-links,nobysame,initials]{amsrefs}
\usepackage{graphicx,amssymb}

\def\fileversion{v0.1}
\def\filedate{2012/01/23}
\ProvidesPackage{myamsrefs}[2012/01/21 v0.1]
\typeout{Package: `MyAmsRefs' \fileversion
         \@spaces <\filedate>  Adjusted bibliography.}
\RequirePackageWithOptions
{amsrefs}\relax


\BibSpecAlias{incollection}{inproceedings}
\BibSpec{article}{%
    +{}  {\PrintAuthors}                {author}
    +{.} { }                            {title}
    +{.} { }                            {part}
    +{:} { }                            {subtitle}
    +{.} { \PrintContributions}         {contribution}
    +{.} { \PrintPartials}              {partial}
    +{.} { \emph}                       {journal}
    +{,} { \textbf}                            {volume}
    +{}  { \parenthesize}               {number}
    +{:} {}                             {pages}
    +{,} { \PrintDateB}                 {date}
    +{,} { }                            {status}
    +{.} { \PrintTranslation}           {translation}
    +{.} { Reprinted in \PrintReprint}  {reprint}
    +{.} { }                            {note}
    +{.} {}                             {transition}
}

\BibSpec{partial}{%
    +{}  {}                             {part}
    +{:} { }                            {subtitle}
    +{.} { \PrintContributions}         {contribution}
    +{.} { \emph}                       {journal}
    +{,} { \textbf}                            {volume}
    +{}  { \parenthesize}               {number}
    +{:} {}                             {pages}
    +{,} { \PrintDateB}                 {date}
}

\BibSpec{book}{%
    +{}  {\PrintPrimary}                {transition}
    +{.} { \emph}                       {title}
    +{.} { }                            {part}
    +{:} { \emph}                       {subtitle}
    +{.} { }                            {series}
    +{,} { \voltext}                    {volume}
    +{.} { Edited by \PrintNameList}    {editor}
    +{.} { Translated by \PrintNameList}{translator}
    +{.} { \PrintContributions}         {contribution}
    +{.} { }                            {publisher}
    +{.} { }                            {organization}
    +{,} { }                            {address}
    +{,} { \PrintEdition}               {edition}
    +{,} { \PrintDateB}                 {date}
    +{.} { }                            {note}
    +{.} {}                             {transition}
    +{.} { \PrintTranslation}           {translation}
    +{.} { Reprinted in \PrintReprint}  {reprint}
    +{.} {}                             {transition}
}

\BibSpec{collection.article}{%
    +{}  {\PrintAuthors}                {author}
    +{.} { }                            {title}
    +{.} { }                            {part}
    +{:} { }                            {subtitle}
    +{.} { \PrintContributions}         {contribution}
    +{.} { \PrintConference}            {conference}
    +{.} { \PrintBook}                  {book}
    +{.} { In \PrintEditorsA}              {editor}
    +{.} { \emph}                         {booktitle}
    +{,} { pages~}                      {pages}
    +{,} { }                            {publisher}
    +{,} { }                            {organization}
    +{,} { }                            {address}
    +{,} { \PrintDateB}                 {date}
    +{.} { \PrintTranslation}           {translation}
    +{.} { Reprinted in \PrintReprint}  {reprint}
    +{.} { }                            {note}
    +{.} {}                             {transition}
}

 \BibSpec{inproceedings}{%
     +{}  {\PrintAuthors}                {author}
     +{.} { }                            {title}
     +{.} { }                            {part}
     +{:} { }                            {subtitle}
     +{.} { \PrintContributions}         {contribution}
     +{.} { \PrintConference}            {conference}
     +{.} { \PrintBook}                  {book}
     +{.} { In \PrintEditorsA}              {editor}
     +{.} {  \emph}                         {booktitle}
     +{,} { pages~}                      {pages}
     +{,} { }                            {publisher}
     +{,} { }                            {organization}
     +{,} { }                            {address}
     +{,} { \PrintDateB}                 {date}
     +{.} { \PrintTranslation}           {translation}
     +{.} { Reprinted in \PrintReprint}  {reprint}
     +{.} { }                            {note}
     +{.} {}                             {transition}
 }

\BibSpec{conference}{%
    +{}  {}                        {title}
    +{}  {\PrintConferenceDetails} {transition}
}

\BibSpec{innerbook}{%
    +{.} { \emph}                       {title}
    +{.} { }                            {part}
    +{:} { \emph}                       {subtitle}
    +{.} { }                            {series}
    +{,} { \voltext}                    {volume}
    +{.} { Edited by \PrintNameList}    {editor}
    +{.} { Translated by \PrintNameList}{translator}
    +{.} { \PrintContributions}         {contribution}
    +{.} { }                            {publisher}
    +{.} { }                            {organization}
    +{,} { }                            {address}
    +{,} { \PrintEdition}               {edition}
    +{,} { \PrintDateB}                 {date}
    +{.} { }                            {note}
    +{.} {}                             {transition}
}

\BibSpec{report}{%
    +{}  {\PrintPrimary}                {transition}
    +{.} { \emph}                       {title}
    +{.} { }                            {part}
    +{:} { \emph}                       {subtitle}
    +{.} { \PrintContributions}         {contribution}
    +{.} { Technical Report }           {number}
    +{,} { }                            {series}
    +{.} { }                            {organization}
    +{,} { }                            {address}
    +{,} { \PrintDateB}                 {date}
    +{.} { \PrintTranslation}           {translation}
    +{.} { Reprinted in \PrintReprint}  {reprint}
    +{.} { }                            {note}
    +{.} {}                             {transition}
}

\BibSpec{thesis}{%
    +{}  {\PrintAuthors}                {author}
    +{,} { \emph}                       {title}
    +{:} { \emph}                       {subtitle}
    +{.} { \PrintThesisType}            {type}
    +{.} { }                            {organization}
    +{,} { }                            {address}
    +{,} { \PrintDateB}                 {date}
    +{.} { \PrintTranslation}           {translation}
    +{.} { Reprinted in \PrintReprint}  {reprint}
    +{.} { }                            {note}
    +{.} {}                             {transition}
}




\makeatletter
  \providecommand{\limlike}[1]{\mathop {\operator@font #1}}
  \providecommand{\loglike}[1]{\mathop {\operator@font #1}\nolimits}
\makeatother

\providecommand{\href}[2]{#2}
\providecommand{\SL}[1][2]{\ensuremath{\mathrm{SL}_{#1}(\Space{R}{})}}
\providecommand{\Mp}[1][2]{\ensuremath{\mathrm{Mp}(#1)}}
\providecommand{\Sp}[1][n]{\ensuremath{\mathrm{Sp}(#1)}}
\newcommand{\Aprime}{A'}
\providecommand{\Space}[3][]{\ensuremath{\mathbb{#2}^{#3}_{#1}{}}}
  \providecommand{\FSpace}[3][]{\ensuremath{\ifx#2l \ell_{#3}^{#1}{}\else
  \mathcal{#2}_{#3}^{#1}{}\fi}} 
\providecommand{\norm}[2][\relax]{\left\|#2\right\|\ifx#1\relax\else_{#1}\fi}
\providecommand{\modulus}[2][\relax]{\left| #2 \right|\ifx#1\relax\else_{#1}\fi}
\providecommand{\scalar}[3][\relax]{\left\langle #2,#3 
        \right\rangle\ifx#1\relax\else_{#1}\fi}
\providecommand{\myh}{h}
\providecommand{\such}{\,\mid\,}
\providecommand{\alli}{\iota}
\providecommand{\myhbar}{\hslash}
\providecommand{\notingiq}{}
\providecommand{\rmi}{\mathrm{i}}
\providecommand{\rmd}{\mathrm{d}}
\providecommand{\rme}{\mathrm{e}}
\providecommand{\rmh}{\mathrm{j}}
\providecommand{\rmp}{\varepsilon}

\providecommand{\map}[1]{\mathsf{#1}}
\providecommand{\spec}[1][]{\ensuremath{\mathbf{sp}}\,}
\providecommand{\cosp}{\loglike{cosp}}
\providecommand{\sinp}{\loglike{sinp}}

\providecommand{\ladder}[2][]{L_{#1}^{\!#2}}
\providecommand{\linv}[2][\relax]{\mathfrak{L}^{#2}_{#1}}
\providecommand{\anti}{\mathcal{A}}
\providecommand{\ub}[3][]{\left\{\!#1\left[#2,#3\right]\!#1\right\}}
\providecommand{\cycle}[3][\relax]{\ifx#1\relax C\else{\ifx#1\tilde S
    \else G \fi}\fi^{#2}_{#3}}
\providecommand{\Cliff}[2][\comment]{{\ensuremath{%
\mathcal{C}\kern-0.18em\ell(#1,#2)}}}
\providecommand{\comment}[1]{}
\providecommand{\uir}[3][0]{\ifcase #1{\rho^{#2}_{#3}}%
\or {\breve{\rho}^{#2}_{#3}}%
\or {\tilde{\rho}^{#2}_{#3}}\fi}

\providecommand{\algebra}[1]{\ensuremath{\mathfrak{#1}}}
  \providecommand{\Zbl}[1]{Zbl\href{http://www.emis.de:80/cgi-bin/zmen/ZMATH/en/zmathf.html?first=1&maxdocs=3&type=html&an=#1&format=complete}{#1}}

\providecommand{\myeprint}[2]{E-print: \href{#1}{\texttt{#2}}}
\providecommand{\doi}[1]{doi: \href{http://dx.doi.org/#1}{#1}}

\providecommand{\wiki}[2]{#2}
\providecommand{\oper}[1]{\mathcal{#1}}

        \newtheorem{theorem}{Theorem}[section]

    \PassOptionsToPackage{british}{babel}
    \newtheorem{proposition}[theorem]{Proposition}
    \newtheorem{lemma}[theorem]{Lemma}
    \newtheorem{corollary}[theorem]{Corollary}
    
    \theoremstyle{definition}

    \newtheorem{definition}[theorem]{Definition}
    \newtheorem{notation}[theorem]{Notation}
\newtheorem{principle}[theorem]{Principle}
    \newtheorem{example}[theorem]{Example}

    \newtheorem{exercise}[theorem]{Exercise}

    \theoremstyle{remark}
    \newtheorem{remark}[theorem]{Remark}

\makeindex

\usepackage[all]{xy}
\DeclareFontFamily{U}{mathx}{\hyphenchar\font45}
\DeclareFontShape{U}{mathx}{m}{n}{
      <5> <6> <7> <8> <9> <10>
      <10.95> <12> <14.4> <17.28> <20.74> <24.88>
      mathx10
      }{}
\DeclareSymbolFont{mathx}{U}{mathx}{m}{n}
\DeclareFontSubstitution{U}{mathx}{m}{n}
\DeclareMathAccent{\wideparen}{0}{mathx}{"75}
\DeclareFontFamily{OT1}{cyr}{}
\DeclareFontShape{OT1}{cyr}{m}{n}
   {  <5> <6> <7> <8> <9> gen * wncyr
      <10> <10.95> <12> <14.4> <17.28> <20.74> <24.88> wncyr10}{}
\DeclareFontShape{OT1}{cyr}{m}{it}
    {
       <5> <6> <7> <8> <9> gen * wncyi
      <10> <10.95> <12> <14.4> <17.28> <20.74> <24.88>wncyi10
      }{}
\DeclareFontShape{OT1}{cyr}{m}{ss}
    {
       <5> <6> <7> <8> wncyss8
       <9> wncy9
      <10> <10.95> <12> <14.4> <17.28> <20.74> <24.88>wncyss10
      }{}
\DeclareFontShape{OT1}{cyr}{m}{sc}
    {
       <5> <6> <7> <8> <9> <10> <10.95> <12> <14.4> <17.28> <20.74> <24.88>wncysc10
      }{}
\DeclareFontShape{OT1}{cyr}{bx}{n}
   {
       <5> <6> <7> <8> <9> gen * wncyb
      <10> <10.95> <12> <14.4> <17.28> <20.74> <24.88>wncyb10
      }{}
\DeclareTextFontCommand{\textcyr}{\fontfamily{cyr}\selectfont}
\newcommand{\cyr}{\fontfamily{cyr}\selectfont\def\cprime{\~}}
\providecommand{\cprime}{'}

\begin{document}

\title[Symmetry, Geometry, Quantization with Hypercomplex Numbers]
{Symmetry, Geometry, and Quantization\\ with Hypercomplex Numbers}

\author[Vladimir V. Kisil]%
{Vladimir V. Kisil}

\thanks{On  leave from Odessa University.}

\address{School of Mathematics,
University of Leeds, 
Leeds, LS2\,9JT, England}

\email{\href{mailto:kisilv@maths.leeds.ac.uk}{kisilv@maths.leeds.ac.uk}}

\urladdr{\href{http://www.maths.leeds.ac.uk/~kisilv/}%
{http://www.maths.leeds.ac.uk/\~{}kisilv/}}


\begin{abstract}
  These notes describe some links between the group \(\SL\),
  the Heisenberg group and hypercomplex numbers\,--\,complex, dual and
  double numbers. Relations between quantum and classical mechanics
  are clarified in this framework.  In particular, classical mechanics
  can be obtained as a theory with \emph{noncommutative} observables
  and a \emph{non-zero} Planck constant if we replace complex numbers
  in quantum mechanics by dual numbers. Our consideration is based on
  induced representations which are build from complex-/\-dual-\-/double-valued characters. Dynamic equations, rules of additions of
  probabilities, ladder operators and uncertainty relations are
  discussed. Finally, we prove a Calder\'on--Vaillancourt-type norm
  estimation for relative convolutions.
\end{abstract}
\keywords{Quantum mechanics, classical mechanics, Heisenberg
  commutation relations, observables, path integral, Heisenberg group,
complex numbers, dual numbers, nilpotent unit}
\subjclass[2000]{Primary 81P05; Secondary 22E27.}

\maketitle

\tableofcontents

\section*{Introduction}
These lecture notes describe some links between the
group \(\SL\),
the Heisenberg group and hypercomplex numbers. The
described relations appear in a natural way without any enforcement
from our side. The discussion is
illustrated by mathematical models of various physical systems.

By hypercomplex numbers we mean two-dimensional real associative
commutative algebras. It is known~\cite{LavrentShabat77}, that any
such algebra is isomorphic either to complex, dual or double numbers,
that is collection of elements \(a+\alli b\),
where \(a\),
\(b\in\Space{R}{}\)
and \(\alli^2=-1\),
\(0\)
or \(1\).
Complex numbers are crucial in quantum mechanics (or, in fact, any
wave process), dual numbers similarly serve classical mechanics and
double numbers are perfect to encode relativistic
space-time\footnote{The last case is not discussed much in these
  notes, see~\cite{BocCatoniCannataNichZamp07} for details.}.

Section~\ref{sec:overview} contains an easy-reading overview of the
r\^ole of complex numbers in quantum mechanics and indicates that
classical mechanics can be described as a theory with
\emph{noncommutative} observables and a \emph{non-zero} Planck
constant if we replace complex numbers by dual numbers. The Heisenberg
group is the main ingredient for both\,--\,quantum and
classic\,--\,models. The detailed exposition of the theory is provided
in the following sections.

Section~\ref{sec:groups-homog-spac} introduces the group \(\SL\)
and describes all its actions on two-di\-men\-si\-onal homogeneous
spaces: it turns out that they are M\"obius transformations of
complex, dual and double numbers. We also re-introduce the Heisenberg
group in more details. In particular, we point out Heisenberg group's
automorphisms from the symplectic action of \(\SL\).

Section~\ref{sec:induc-repr} uses Mackey's induced representation to
construct linear representations of \(\SL\)
and the Heisenberg group. We use all sorts (complex, dual and double)
of characters of one-dimensional subgroups to induce representations of
\(\SL\).
The similarity between obtained representations in hypercomplex
numbers is illustrated by corresponding ladder operators.

Section~\ref{sec:quantum-mechanics-1} systematically presents
the Hamiltonian formalism obtained from linear representations of the
Heisenberg group. Using complex, dual and double numbers we recover
principal elements of quantum, classical and hyperbolic
(relativistic?) mechanics. This includes both the Hamilton--Heisenberg
dynamical equation, rules of addition of probabilities and some
examples.

Section~\ref{sec:gaussian} introduces co- and contra-variant
transforms, which are also known under many other names, e.g. wavelet
transform. These transforms intertwine the given representation with
left and right regular representations. We use this observation to
derive a connection between the uncertainty relations and analyticity
condition\,--\,both in the standard meaning for the Heisenberg group
and a new one for \(\SL\).
We also obtain a Calder\'on--Vaillancourt-type norm estimation for
integrated representation.

\section{Preview: Quantum and Classical Mechanics}
\label{sec:overview}
\par
\hfill\parbox{0.6\textwidth}{\footnotesize\ldots it was on a Sunday
  that the idea first occurred to me that \(ab- ba\) might correspond to a
  Poisson bracket.
 \par \hfil P.A.M. Dirac, \url{http://www.aip.org/history/ohilist/4575_1.html}
}\par\medskip
In this section we will demonstrate that a Poisson bracket do not only
corresponds to a commutator, in fact a Poisson bracket is the image of
the commutator under a transformation which uses dual numbers.

\subsection{Axioms of Mechanics}
\label{sec:quant-class-mech}

There is a recent revival of interest in foundations of quantum
mechanics, 
which is essentially motivated by engineering challenges at the
nano-scale. There are strong indications that we need to revise the
development of the quantum theory from its early days.

In 1926, Dirac discussed the idea that quantum mechanics can be obtained
from classical description through a change in the only rule, cf.~\cite{Dirac26a}:
\begin{quote}
  \ldots there is one basic assumption of the classical theory which
  is false, and that if this assumption were removed and replaced by
  something more general, the whole of atomic theory would follow
  quite naturally. Until quite recently, however, one has had no idea
  of what this assumption could be.
\end{quote}

In Dirac's view, such a condition is provided by the Heisenberg commutation
relation of coordinate and momentum variables~\cite{Dirac26a}*{(1)}:
\begin{equation}
  \label{eq:heisenberg-comm-basic}
  q_r p_r-p_r q_r=\rmi \myh.
\end{equation}
Algebraically, this identity declares noncommutativity of \(q_r\) and
\(p_r\). Thus, Dirac stated~\cite{Dirac26a} that classical mechanics is formulated
through commutative quantities (``c-numbers'' in his terms) while
quantum mechanics requires noncommutative quantities (``q-numbers''). The
rest of theory may be unchanged if it does not contradict to the above
algebraic rules. This was explicitly re-affirmed at the
first sentence of the subsequent paper~\cite{Dirac26b}:
\begin{quote}
  The new mechanics of the atom introduced by Heisenberg may be based
  on the assumption that the variables that describe a dynamical
  system do not obey the commutative law of multiplication, but
  satisfy instead certain quantum conditions.
\end{quote}
The same point of view is expressed in his later works \citelist{\cite{DiracDirections}*{p.~6}
\cite{DiracPrinciplesQM}*{p.~26}}.

Dirac's approach was largely approved, especially by researchers
on the mathematical side of the board. Moreover, the vague version\,--\,``quantum is something noncommutative''\,--\,of the original statement
 was lightly reverted to ``everything
noncommutative is quantum''. For example, there is a fashion to label
any noncommutative algebra as a ``quantum space''~\cite{Cuntz01a}. 

Let us carefully review Dirac's  idea about noncommutativity as the
principal source of quantum theory.

\subsection{``Algebra'' of Observables}
\label{sec:algebra-observables}

Dropping the commutativity hypothesis on observables, Dirac
made~\cite{Dirac26a} the following (apparently flexible) assumption:
\begin{quote}
  All one knows about q-numbers is that if \(z_1\) and \(z_2\) are two
  q-numbers, or one q-number and one c-number, there exist the numbers
  \(z_1 + z_2\), \(z_1 z_2\), \(z_2 z_1\), which will in general be
  q-numbers but may be c-numbers.
\end{quote}
Mathematically, this (together with some natural identities) means
that observables form an algebraic structure known as a \emph{ring}.
Furthermore, the linear \emph{superposition principle} imposes a liner
structure upon observables, thus their set becomes an \emph{algebra}.
Some mathematically-oriented texts,
e.g.~\cite{FaddeevYakubovskii09}*{\S~1.2}, directly speak about an
``algebra of observables'' which is not far from the above
quote~\cite{Dirac26a}. It is also deducible from two connected
statements in Dirac's canonical textbook:
\begin{enumerate}
\item ``the linear operators corresponds to the dynamical variables at
  that time''~\cite{DiracPrinciplesQM}*{\S~7, p.~26}.
\item ``Linear operators can be added
  together''~\cite{DiracPrinciplesQM}*{\S~7, p.~23}. 
\end{enumerate}

However, the assumption that any two observables may be added cannot
fit into a physical theory. To admit addition, observables need to
have the same dimensionality. In the simplest example of the
observables of coordinate \(q\) and momentum \(p\), which units shall
be assigned to the expression \(q+p\)? Meters or
\(\frac{\text{kilos}\times\text{meters}}{\text{seconds}}\)? If we get the value \(5\)
for \(p+q\) in the metric units, what is then the result in the imperial
ones? 
Since these questions cannot be answered, the above
Dirac's assumption is not a part of any physical theory.

Another common definition suffering from the same problem is used in
many excellent books written by distinguished mathematicians, see for
example \citelist{\cite{Mackey63}*{\S~2-2} \cite{Folland89}*{\S~1.1}}.
It declares that quantum observables are projection-valued Borel
measures on the \emph{dimensionless} real line. Such a definition
permit an instant construction (through the functional
calculus) of new observables, including
algebraically formed~\cite{Mackey63}*{\S~2-2, p.~63}:
\begin{quote}
  Because of Axiom III, expressions such as \(A^2\), \(A^3+A\),
  \(1-A\), and \( \rme^A\) all make sense whenever \(A\) is an
  observable. 
\end{quote}
However, if \(A\) has a physical dimension (is not a scalar) then the
expression \(A^3+A\) cannot be assigned a dimension in a consistent
manner.

Of course, physical defects of the above (otherwise perfect)
mathematical constructions do not prevent physicists from making
correct calculations
, which are in a good
agreement with experiments. We are not going to analyse methods which
allow researchers to escape the indicated dangers. Instead, 
it will be more beneficial to outline alternative mathematical
foundations of quantum theory, which do not have those shortcomings.

\subsection{Non-Essential  Noncommutativity}
\label{sec:essent-noncomm}

While we can add two observables if they have the same dimension only,
physics allows us to multiply any observables freely. Of course, the
dimensionality of a product is the product of dimensionalities, thus
the commutator \([A,B]=AB-BA\) is well defined for any two observables
\(A\) and \(B\). In particular, the
commutator~\eqref{eq:heisenberg-comm-basic} is also well-defined, but
is it indeed so important?

In fact, it is easy to argue that noncommutativity of observables is
not an essential prerequisite for quantum mechanics: there are
constructions of quantum theory which do not relay on it at all. The
most prominent example is the Feynman path integral. To focus on the
really cardinal moments, we firstly take the popular
lectures~\cite{Feynman1990qed}, which present the main elements in a
very enlightening way. Feynman managed to tell the fundamental
features of quantum electrodynamics without any reference to
(non-)commutativity: the entire text does not mention it anywhere.

Is this an artefact of the popular nature of these lecture? Take the
academic presentation of path integral technique given in
\cite{FeynHibbs65}. It mentioned (non-)com\-mu\-ta\-ti\-vi\-ty only on
pages~115--6 and 176. In addition, page~355 contains a remark on
noncommutativity of quaternions, which is irrelevant to our topic.
Moreover, page~176 highlights that noncommutativity of quantum
observables is a consequence of the path integral formalism rather
than an indispensable axiom.

But what is the mathematical source of quantum theory if noncommutativity
is not? The vivid presentation in~\cite{Feynman1990qed} uses stopwatch
with a single hand to explain the calculation of path
integrals. The angle of stopwatch's hand presents the \emph{phase} 
for a path \(x(t)\) between two points in the configuration space.
The mathematical expression for the path's phase is
\cite{FeynHibbs65}*{(2-15)}:
\begin{equation}
  \label{eq:phase-path}
  \phi[x(t)]=\mathrm{const}\cdot  \rme^{(\rmi/\myhbar)S[x(t)]}\notingiq
\end{equation}
where \(S[x(t)]\) is the \emph{classic action} along the path
\(x(t)\). Summing up contributions~\eqref{eq:phase-path} along all
paths between two points \(a\) and \(b\) we obtain the amplitude
\(K(a,b)\). This amplitude presents very accurate description of many
quantum phenomena. Therefore, expression~\eqref{eq:phase-path} is also
a strong contestant for the r\^ole of the cornerstone of quantum
theory.

Is there anything common between two ``principal''
identities~\eqref{eq:heisenberg-comm-basic} and~\eqref{eq:phase-path}?
Seemingly, not. A more attentive reader may say that there are only two
common elements there (in order of believed significance):
\begin{enumerate}
\item The non-zero Planck constant \(\myhbar\).
\item The imaginary unit \(\rmi\).
\end{enumerate}

The Planck constant was the first manifestation of quantum (discrete) 
behaviour and it is at the heart of the whole theory. In contrast, classical
mechanics is oftenly obtained as a semiclassical limit \(\myhbar
\rightarrow 0\). Thus, the non-zero Planck constant looks like a clear
marker of quantum world in its opposition to the classical
one. Regrettably, there is a common practice to ``chose our units such
that \(\myhbar=1\)''. Thus, the Planck constant becomes oftenly invisible in many
formulae even being implicitly present there. Note also, that \(1\) in
the identity \(\myhbar=1\) is not a scalar but a physical quantity
with the dimensionality of the action. Thus, the simple omission of the Planck
constant invalidates dimensionalities of physical equations.

The complex imaginary unit is also a mandatory element of quantum
mechanics in all its possible formulations. It is enough to point out
that the popular lectures~\cite{Feynman1990qed} managed to avoid any
mention of noncommutativity but did uses complex numbers both
explicitly (see the Index there) and implicitly (as rotations of the
hand of a stopwatch). However, it is a common perception that complex
numbers are a useful but manly technical tool in quantum theory.

\subsection{Quantum Mechanics from the Heisenberg Group}
\label{sec:heisenberg-group-qm}

Looking for a source of quantum theory we again return to the
Heisenberg commutation relations~\eqref{eq:heisenberg-comm-basic}:
they are an important part of quantum mechanics (either as a
prerequisite or as a consequence). It was observed for a long time
that these relations are a representation of the structural identities
of the Lie algebra of the Heisenberg
group~\cites{Folland89,Howe80a,Howe80b}. In the simplest case of
one dimension, the Heisenberg group \(\Space{H}{}=\Space{H}{1}\) can
be realised by the Euclidean space \(\Space{R}{3}\) with the group
law:
\begin{equation}
  \label{eq:H-n-group-law0}
  \textstyle
  (s,x,y)*(s',x',y')=(s+s'+\frac{1}{2}\omega(x,y;x',y'),x+x',y+y')\notingiq
\end{equation} 
where \(\omega\)
is the \emph{symplectic form} on
\(\Space{R}{2}\)~\cite{Arnold91}*{\S~37},
which is behind the entire classical Hamiltonian formalism:
\begin{equation}
  \label{eq:symplectic-form}
  \omega(x,y;x',y')=xy'-x'y.
\end{equation}
Here, like for the path integral, we see another example of a quantum notion
being defined through a classical object. 

The Heisenberg group is noncommutative since
\(\omega(x,y;x',y') =-\omega(x',y';x,y)\).
The collection of points \((s,0,0)\)
forms the centre of \(\Space{H}{}\),
that is \((s,0,0)\)
commutes with any other element of the group. We are interested in the
unitary irreducible representations (UIRs) \(\uir{}{}\)
of \(\Space{H}{}\)
in an infinite-dimensional Hilbert space \(H\),
that is a group homomorphism
(\(\uir{}{}(g_1)\uir{}{}(g_2)=\uir{}{}(g_1*g_2)\))
from \(\Space{H}{}\)
to unitary operators on \(H\).
By Schur's lemma, for such a representation \(\uir{}{}\),
the action of the centre shall be multiplication by an unimodular
complex number, i.e. \(\uir{}{}(s,0,0)= \rme^{2\pi\rmi\myhbar s} I\)
for some real \(\myhbar\neq 0\).

Furthermore, the celebrated Stone--von~Neumann
theorem~\cite{Folland89}*{\S~1.5} tells that all UIRs of
\(\Space{H}{}\)
in complex Hilbert spaces with the same value of \(\myhbar\)
are unitary equivalent. In particular, this implies that any
realisation of quantum mechanics, e.g. the Schr\"odinger wave
mechanics, which provides the commutation
relations~\eqref{eq:heisenberg-comm-basic} shall be unitary equivalent
to the Heisenberg matrix mechanics based on these relations.

In particular, any UIR of \(\Space{H}{}\) is equivalent to a
subrepresentation of the following representation on
\(\FSpace{L}{2}(\Space{R}{2})\):  
\begin{equation}
  \label{eq:stone-inf0}
  \textstyle
  \uir{}{\myhbar}(s,x,y): f (q,p) \mapsto 
   \rme^{-2\pi\rmi(\myhbar s+qx+py)}
  f \left(q-\frac{\myhbar}{2} y, p+\frac{\myhbar}{2} x\right).
\end{equation}
Here \(\Space{R}{2}\)
has the physical meaning of the classical \emph{phase space} with
\(q\)
representing the coordinate in the configurational space and
\(p\)---the
respective momentum. The function \(f(q,p)\)
in~\eqref{eq:stone-inf0} presents a state of the physical system as an
amplitude over the phase space.  Thus, the
action~\eqref{eq:stone-inf0} is more intuitive and has many technical
advantages~\cites{Howe80b,Zachos02a,Folland89} in comparison with the
well-known Schr\"odinger representation
(cf.~\eqref{eq:schroedinger-rep-conf}), to which it is unitary
equivalent, of course.

Infinitesimal generators of the one-parameter subgroups
\(\uir{}{\myhbar}(0,x,0)\) and \(\uir{}{\myhbar}(0,0,y)\)
from~\eqref{eq:stone-inf0} are the operators
\(\frac{1}{2}\myhbar\partial_p-2\pi\rmi q\) and
\(-\frac{1}{2}\myhbar\partial_q-2\pi\rmi p\). For these, we can
directly verify the commutator identity:
\begin{displaymath}
\textstyle   [-\frac{1}{2}\myhbar\partial_q-2\pi\rmi p, 
\frac{1}{2}\myhbar\partial_p-2\pi\rmi q]= \rmi \myh,\quad
  \text{ where } \myh =2\pi\myhbar.
\end{displaymath}
Since we have a representation
of~\eqref{eq:heisenberg-comm-basic}, these operators can be used as
a model of the quantum coordinate and momentum.

For a Hamiltonian \(H(q,p)\) we can integrate the representation
\(\uir{}{\myhbar}\) with the Fourier transform \(\hat{H}(x,y)\) of
\(H(q,p)\):
\begin{equation}
  \label{eq:weyl-quantisation}
  \tilde{H}=\int_{\Space{R}{2}} \hat{H}(x,y)\,
  \uir{}{\myhbar}(0,x,y)\,\rmd x\,\rmd y
\end{equation}
and obtain (possibly unbounded) operator \(\tilde{H}\) on
\(\FSpace{L}{2}(\Space{R}{2})\).  This assignment of the operator
\(\tilde{H}\) (quantum observable) to a function \(H(q,p)\) (classical
observable) is known as the Weyl quantization or a Weyl
calculus~\cite{Folland89}*{\S~2.1}.  The Hamiltonian \(\tilde{H}\)
defines the dynamics of a quantum observable \(\tilde{k}\) by the
\emph{Heisenberg equation}:
\begin{equation}
  \label{eq:Heisenberg-dynamics}
   \rmi\myh \frac{\rmd \tilde{k}}{\rmd t}=\tilde{H} \tilde{k} - \tilde{k} \tilde{H}.
\end{equation}
This is sketch of the well-known construction of quantum mechanics
from infinite-di\-men\-sional UIRs of the Heisenberg group, which can
be found in numerous sources~\cites{Kisil02e,Folland89,Howe80b}.
%

\subsection{Classical Noncommutativity}
\label{sec:class-noncomm}

Now we are going to show that the priority of importance in quantum
theory shall be shifted from the Planck constant towards the imaginary
unit.  Namely, we describe a model of \emph{classical} mechanics with a
\emph{non-zero} Planck constant but with a different hypercomplex unit.
Instead of the imaginary unit with the property \(\rmi^2=-1\) we will
use the nilpotent unit \(\rmp\) such that \(\rmp^2=0\). The \emph{dual
  numbers} generated by nilpotent unit were already known for there
connections with Galilean relativity~\cites{Yaglom79,Gromov90a}\,--\,the
fundamental symmetry of classical mechanics\,--\,thus its appearance in
our discussion shall not be very surprising after all. Rather, we may
wander why the following construction was unnoticed for such a
long time.

Another important feature of our scheme is that the classical
mechanics is presented by a noncommutative model. Therefore, it will
be a refutation of Dirac's claim about the exclusive r\^ole of
noncommutativity for quantum theory. Moreover, the model is developed
from the same Heisenberg group, which were used above to describe the
quantum mechanics.

Consider a four-dimensional algebra \(\algebra{C}\) spanned by
\(1\), \(\rmi\), \(\rmp\) and \(\rmi\rmp\).  We can define the following
representation \(\uir{}{\rmp\myh}\) of the Heisenberg group in a space
of \(\algebra{C}\)-valued smooth functions~\cites{Kisil10a,Kisil11c}:
\begin{eqnarray}
  \label{eq:dual-repres0}
  \lefteqn{\uir{}{\rmp\myh}(s,x,y):\  f(q,p) \mapsto}\\
  &&  \rme^{-2\pi\rmi(xq+yp)}\left(f(q,p)
    +\rmp\hbar \left(2\pi s f(q,p)
      -\frac{\rmi y}{2}f'_q(q,p)+\frac{\rmi x}{2}f'_p(q,p)\right)\right).
  \nonumber 
\end{eqnarray}
A simple calculation shows the representation property
\begin{displaymath}
\uir{}{\rmp\myh}(s,x,y)
\uir{}{\rmp\myh}(s',x',y')=\uir{}{\rmp\myh}((s,x,y)*(s',x',y'))  
\end{displaymath}
for the multiplication~\eqref{eq:H-n-group-law0} on \(\Space{H}{}\).
Since this is not a unitary representation in a complex-valued Hilbert
space its existence does not contradict the Stone--von~Neumann
theorem.  Both representations~\eqref{eq:stone-inf0}
and~\eqref{eq:dual-repres0} are \emph{noncommutative} and act on
functions over the
phase space. The important distinction is:
\begin{itemize}
\item The representation~\eqref{eq:stone-inf0} is induced (in the
  sense of Mackey~\cite{Kirillov76}*{\S~13.4}) by the
  \emph{complex-valued} unitary character
  \(\uir{}{\myhbar}(s,0,0)= \rme^{2\pi\rmi\myhbar s}\)
  of the centre of \(\Space{H}{}\).
\item The representation~\eqref{eq:dual-repres0} is similarly induced
  by the \emph{dual number-valued} character
  \(\uir{}{\rmp\myh}(s,0,0)= \rme^{\rmp\myh s}=1+\rmp\myh s\) of the
  centre of \(\Space{H}{}\), cf.~\cite{Kisil09c}. Here dual numbers
  are the associative and commutative two-dimensional algebra spanned
  by \(1\) and \(\rmp\).
\end{itemize}

Similarity between~\eqref{eq:stone-inf0} and~\eqref{eq:dual-repres0}
is even more striking  if~\eqref{eq:dual-repres0} is written\footnote{I am grateful to Prof.~N.A.Gromov, who
  suggested this expression.} 
as:
\begin{equation}
  \label{eq:dual-as-SB}
  \uir{}{\myhbar}(s,x,y): f (q,p) \mapsto 
   \rme^{-2\pi(\rmp \myhbar s+\rmi(qx+py))}
  f \left(q-\frac{\rmi\myhbar}{2} \rmp y, p+\frac{\rmi\myhbar}{2} \rmp
    x\right).
\end{equation}
Here, for a differentiable function \(k\) of a real variable \(t\),
the expression \(k(t+\rmp a)\) is understood as \(k(t)+\rmp a k'(t)\),
where \(a\in\Space{C}{}\) is a constant. For a real-analytic function
\(k\) this can be justified through its Taylor's expansion,
see~\citelist{\cite{CatoniCannataNichelatti04}
  \cite{Zejliger34}*{\S~I.2(10)} \cite{Gromov90a} \cite{Dimentberg78a}
\cite{Dimentberg78b}}. The same expression appears
within the non-standard analysis based on the idempotent unit
\(\rmp\)~\cite{Bell08a}.

The infinitesimal generators of one-parameter subgroups
\(\uir{}{\rmp\myh}(0,x,0)\) and \(\uir{}{\rmp\myh}(0,0,y)\) in~\eqref{eq:dual-repres0} are
\begin{displaymath}
 \rmd\uir{X}{\rmp\myh}= -2\pi\rmi q-\frac{\rmp\myh}{4\pi\rmi}\partial_p \quad \text{ and }
  \quad
 \rmd\uir{Y}{\rmp\myh}= -2\pi\rmi p+\frac{\rmp\myh}{4\pi\rmi}\partial_q,
\end{displaymath}
respectively. We calculate their commutator:
\begin{equation}
  \label{eq:dual-classical-commutator}
  \rmd\uir{X}{\rmp\myh}\cdot  \rmd\uir{Y}{\rmp\myh}-
   \rmd\uir{Y}{\rmp\myh}\cdot  \rmd\uir{X}{\rmp\myh}=\rmp\myh.
\end{equation}
It is similar to the Heisenberg
relation~\eqref{eq:heisenberg-comm-basic}: the commutator is non-zero
and is proportional to the Planck constant. The only difference is the
replacement of the imaginary unit by the nilpotent one. The radical
nature of this change becomes clear if we integrate this representation
with the Fourier transform \(\hat{H}(x,y)\) of a Hamiltonian function
\(H(q,p)\):
\begin{equation}
  \label{eq:classical-int-represe}
  \mathring{H} =  \int_{\Space{R}{2n}}\hat{H}(x,y)\,
  \uir{}{\rmp\myh}(0,x,y)\,\rmd x\,\rmd y
  =H+\frac{\rmp\myh}{2} \left(\frac{\partial  H}{\partial p}\frac{\partial\  }{\partial q}
    - \frac{\partial  H}{\partial q} \frac{\partial\ }{\partial p}\right). 
\end{equation}
This is a first order differential operator on the phase space. It 
generates a dynamics of a classical observable \(k\)\,--\,a smooth real-valued
function on the phase space\,--\,through the equation isomorphic to the
Heisenberg equation~\eqref{eq:Heisenberg-dynamics}:
\begin{displaymath}
  \rmp \myh \frac{\rmd  \mathring{k}}{\rmd  t}= \mathring{H} \mathring{k} - \mathring{k} \mathring{H}.
\end{displaymath}
Making a substitution from~\eqref{eq:classical-int-represe} and using the
identity \(\rmp^2=0\) we obtain:
\begin{equation}
  \label{eq:hamilton-poisson}
  \frac{\rmd  {k}}{\rmd  t} =\frac{\partial  H}{\partial p}\frac{\partial k }{\partial q}
    - \frac{\partial  H}{\partial q} \frac{\partial k }{\partial p}\, .
\end{equation}
This is, of course, the \emph{Hamilton equation} of classical
mechanics based on the \emph{Poisson bracket}.  Dirac suggested, see
the paper's epigraph, that  the commutator  \emph{corresponds} to the
Poisson bracket. However, the commutator in the
representation~\eqref{eq:dual-repres0} \emph{is exactly} the Poisson
bracket. 

Note also, that both the Planck constant and the nilpotent unit
disappeared from~\eqref{eq:hamilton-poisson}, however we did use the
fact \(\myh\neq 0\) to make this cancellation.  Also, the shy
disappearance of the nilpotent unit \(\rmp\) at the very last minute
can explain why its r\^ole remain unnoticed for a long time.

\subsection{Summary}
\label{sec:conclusions}

We revised mathematical foundations of quantum and classical
mechanics and the r\^ole of hypercomplex units \(\rmi^2=-1\) and
\(\rmp^2=0\) there. 
To make the consideration complete, one may wish
to consider the third logical possibility of the hyperbolic unit
\(\rmh\) with the property
\(\rmh^2=1\)~\cites{Hudson66a,Khrennikov09book,Kisil10a,Ulrych10a,Pilipchuk10a,Kisil12a,Kisil09c},
see Section~\ref{sec:hyperb-repr-addt}.

The above discussion
provides the following observations~\cite{Kisil12c}:
\begin{enumerate}
\item Noncommutativity is not a crucial prerequisite for
  quantum theory, it can be obtained as a consequence of other
  fundamental assumptions.
\item Noncommutativity is not a distinguished feature of quantum
  theory, there are noncommutative formulations of classical mechanics
  as well.
\item The non-zero Planck constant is compatible with classical
  mechanics. Thus, there is no a necessity to consider the
  semiclassical limit \(\myhbar \rightarrow 0\), where the
  \emph{constant} has to tend to zero.
\item There is no a necessity to request that physical observables
  form an algebra, which is a physical non-sense since we cannot add two
  observables of different dimensionalities. Quantization can be
  performed by the Weyl recipe, which requires only a structure of a
  linear space in the collection of all observables with the same
  physical dimensionality.   
\item It is the imaginary unit in~\eqref{eq:heisenberg-comm-basic},
  which is ultimately responsible for most of quantum effects.
  Classical mechanics can be obtained from the similar commutator
  relation~\eqref{eq:dual-classical-commutator} using the nilpotent
  unit \(\rmp^2=0\).
\end{enumerate}
In Dirac's opinion, quantum noncommutativity was so important because
it guaranties a non-trivial commutator, which is required to
substitute the Poisson bracket. In our model, multiplication of
classical observables is also non-commutative and the Poisson bracket
exactly is the commutator. Thus, these elements do not separate quantum
and classical models anymore.

Our consideration illustrates the following statement on the
exceptional r\^ole of the complex numbers in quantum
theory~\cite{Penrose78a}: 
\begin{quote}
  \ldots for the first time, the complex field \(\mathbb{C}\) was brought
  into physics at a fundamental and universal level, not just as a
  useful or elegant device, as had often been the case earlier for
  many applications of complex numbers to physics, but at the very
  basis of physical law.
\end{quote}

Thus, Dirac may be right that we need to change a single assumption
to get a transition between classical mechanics and quantum. But, it
shall not be a move from commutative to noncommutative. Instead, we
need to replace a representation of the Heisenberg group induced from
a dual number-valued character by the representation induced by a
complex-valued character. Our conclusion can be stated like a
proportionality: 
\begin{quote}
  Classical\(/\)Quantum\(=\)Dual numbers\(/\)Complex numbers.
\end{quote}

\section{Groups, Homogeneous Spaces and Hypercomplex Numbers}
\label{sec:groups-homog-spac}
This section shows that the group \(\SL\) naturally requires complex,
dual and double numbers to describe its action on homogeneous
space. And the group \(\SL\) acts by automorphism on the Heisenberg
group, thus the Heisenberg group is naturally linked to hypercomplex
numbers as well.

\subsection{The Group $\SL$ and Its Subgroups}
\label{sec:group-sl-its}

The \(\SL\) group%
\index{$\SL$ group}%
\index{group!$\SL$}~\cite{Lang85} consists of \(2\times 2\)
real matrices with unit determinant. This is the smallest semis-simple
Lie group, its Lie algebra is formed by zero-trace \(2\times 2\)
real matrices. The \wiki{Affine_transformation}{$ax+b$ group}%
\index{$ax+b$ group}%
\index{group!$ax+b$}%
\index{group!affine|see{$ax+b$ group}}%
\index{affine group|see{$ax+b$ group}}, which is used wavelet theory
and harmonic analysis~\cite{Kisil12d}, is only a subgroup of \(\SL\)
consisting of the upper-triangular matrices \(
 \begin{pmatrix}
   a^{1/2}&b\\0&a^{-1/2}
 \end{pmatrix}\).

Consider the Lie
algebra \(\algebra{sl}_2\) of the group \(\SL\). Pick up the following
basis in \(\algebra{sl}_2\)~\cite{MTaylor86}*{\S~8.1}:  
\begin{equation}
  \label{eq:sl2-basis}
  A= \frac{1}{2}
  \begin{pmatrix}
    -1&0\\0&1
  \end{pmatrix},\quad 
  B= \frac{1}{2} \
  \begin{pmatrix}
    0&1\\1&0
  \end{pmatrix}, \quad 
  Z=
  \begin{pmatrix}
    0&1\\-1&0
  \end{pmatrix}.
\end{equation}
The commutation relations between the elements are:
\begin{equation}
  \label{eq:sl2-commutator}
  [Z,A]=2B, \qquad [Z,B]=-2A, \qquad [A,B]=- \frac{1}{2} Z.
\end{equation} 

Any element \(X\)
of the Lie algebra \(\algebra{sl}_2\)
defines a one-parameter continuous subgroup \(A(t)\) of \(\SL\)
through the exponentiation: \(A(t)=\exp(tX)\).
There are only \emph{three} different types of such subgroups under
the matrix similarity \(A(t)\mapsto MA(t)M^{-1}\)
for some constant \(M\in\SL\).
\begin{proposition}
  \label{pr:ank-subgroups}
  Any continuous one-parameter subgroup of \(\SL\) is conjugate to one of
  the following subgroups:%
  \index{$A$ subgroup|indef}%
  \index{subgroup!$A$|indef}%
  \index{$N$ subgroup|indef}%
  \index{subgroup!$N$|indef}%
  \index{$K$ subgroup|indef}%
  \index{subgroup!$K$|indef}
  \begin{eqnarray}
    \label{eq:a-subgroup}
    A&=&\left\{  
      \begin{pmatrix}  \rme^{-t/2} & 0\\0& \rme^{t/2}
      \end{pmatrix}=\exp \begin{pmatrix} -t/2 & 0\\0&t/2
      \end{pmatrix},\  t\in\Space{R}{}\right\},\\
    \label{eq:n-subgroup}
    N&=&\left\{   {\begin{pmatrix} 1&t \\0&1
        \end{pmatrix}=\exp \begin{pmatrix} 0 & t\\0&0
        \end{pmatrix},}\  t\in\Space{R}{}\right\},\\
    \label{eq:k-subgroup}
    K&=&\left\{ {\begin{pmatrix}
          \cos t &  \sin t\\
          -\sin t & \cos t
        \end{pmatrix}=   \exp \begin{pmatrix} 0& t\\-t&0
        \end{pmatrix}},\ t\in(-\pi,\pi]\right\}.
  \end{eqnarray}
\end{proposition}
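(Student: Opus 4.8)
The plan is to pass from subgroups to their generators: every continuous one-parameter subgroup of \(\SL\) has the form \(A(t)=\exp(tX)\) for some nonzero \(X\) in the Lie algebra \(\algebra{sl}_2\), and two such subgroups coincide precisely when their generators span the same line \(\Space{R}{}X\). Since the similarity \(A(t)\mapsto MA(t)M^{-1}\) corresponds under exponentiation to the adjoint action \(X\mapsto MXM^{-1}\), the whole statement reduces to classifying nonzero elements of \(\algebra{sl}_2\) up to \(\SL\)-conjugacy and nonzero real rescaling. The organising invariant is the determinant: for a traceless \(2\times 2\) matrix the characteristic polynomial is \(\lambda^2+\det X\), so \(\det(MXM^{-1})=\det X\) while \(\det(\lambda X)=\lambda^2\det X\). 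Hence \(\operatorname{sign}(\det X)\) is invariant under both operations, and it is forced to take exactly one of the three values \(-\), \(0\), \(+\). I would show that each of these three strata is a single orbit, realised by the generators of \(A\), \(N\), \(K\) respectively.

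Next I would produce the normal forms, using the Cayley--Hamilton identity \(X^2=-\det(X)\,I\). If \(\det X<0\), then \(X\) has two real eigenvalues \(\pm\mu\) with \(\mu>0\) and is diagonalisable; writing \(M\) for the matrix of (suitably scaled) eigenvectors gives \(M^{-1}XM=\operatorname{diag}(-\mu,\mu)\), which is proportional to the generator \(\bigl(\begin{smallmatrix}-1/2&0\\0&1/2\end{smallmatrix}\bigr)\) of \(A\). If \(\det X=0\) and \(X\neq 0\), then \(X^2=0\), so \(X\) is nilpotent with a single Jordan block; choosing \(v\) with \(Xv\neq 0\) and using the basis \(\{Xv,v\}\) brings \(X\) to \(\bigl(\begin{smallmatrix}0&1\\0&0\end{smallmatrix}\bigr)\), the generator of \(N\). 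If \(\det X>0\), then \(X^2=-\omega^2 I\) with \(\omega>0\); for any \(v\neq 0\) the real basis \(\{v,\omega^{-1}Xv\}\) is well defined (\(X\) has no real eigenvector) and puts \(X\) in the form \(\omega Z\), proportional to the generator \(Z=\bigl(\begin{smallmatrix}0&1\\-1&0\end{smallmatrix}\bigr)\) of \(K\). Exponentiating each normal form and absorbing the positive scalar into the parameter \(t\) then reproduces exactly the matrices in \eqref{eq:a-subgroup}, \eqref{eq:n-subgroup} and \eqref{eq:k-subgroup}, with the periodicity of \(\exp(tZ)\) accounting for the range \(t\in(-\pi,\pi]\) in the elliptic case.

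The one genuinely delicate point, and the step I expect to be the main obstacle, is that the conjugating matrix must lie in \(\SL\) rather than merely in \(\mathrm{GL}_2(\Space{R}{})\), together with the discrete sign ambiguities in the construction (which eigenvalue is listed first, the orientation of the eigenbasis, or the two nilpotent forms \(\bigl(\begin{smallmatrix}0&1\\0&0\end{smallmatrix}\bigr)\) and \(\bigl(\begin{smallmatrix}0&0\\1&0\end{smallmatrix}\bigr)\)). I would dispose of this in two strokes. First, the conjugation \(MXM^{-1}\) is unchanged under rescaling \(M\mapsto cM\), so one may always normalise \(|\det M|=1\); moreover, because we work in the eigenbasis or the cyclic basis above, the determinant of \(M\) can be set to \(+1\) by scaling a single basis vector, since such scalings leave the normal form untouched. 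Second, any residual sign that cannot be removed this way only replaces the chosen generator \(X\) by \(-X\) (for instance \(Z\bigl(\begin{smallmatrix}0&1\\0&0\end{smallmatrix}\bigr)Z^{-1}=-\bigl(\begin{smallmatrix}0&0\\1&0\end{smallmatrix}\bigr)\)), and \(\Space{R}{}X=\Space{R}{}(-X)\) spans the same subgroup after the harmless reparametrisation \(t\mapsto -t\). Thus all ambiguities descend to the level of lines in \(\algebra{sl}_2\), confirming that there are precisely three \(\SL\)-conjugacy classes, namely \(A\), \(N\) and \(K\).
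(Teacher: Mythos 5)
The paper states Proposition~\ref{pr:ank-subgroups} without giving a proof, treating it as the standard classification of one-parameter subgroups of \(\SL\) (the surrounding text only sets up the reduction to generators \(X\in\algebra{sl}_2\)); so there is no in-text argument to measure yours against, and I can only assess your proposal on its own terms. It is correct, and it is the canonical argument: reduce to classifying lines \(\Space{R}{}X\) in \(\algebra{sl}_2\) under the adjoint action, observe that the sign of \(\det X\) (equivalently, of the Killing form) is a complete invariant for this coarser problem, and produce the three normal forms via Cayley--Hamilton. You are also right that the only delicate point is performing the conjugation inside \(\SL\) rather than \(\mathrm{GL}_2(\Space{R}{})\): the three strata of nonzero elements of \(\algebra{sl}_2\) are genuinely \emph{not} single \(\SL\)-orbits (for instance \(Z\) and \(-Z\) lie on different sheets of the hyperboloid \(\det X=1\), and the two nilpotent rays are not \(\SL\)-conjugate, since the \((1,2)\)-entry of \(M\bigl(\begin{smallmatrix}0&1\\0&0\end{smallmatrix}\bigr)M^{-1}\) equals \(a^2\ge 0\) for \(M=\bigl(\begin{smallmatrix}a&b\\c&d\end{smallmatrix}\bigr)\in\SL\)), but, as you say, the residual ambiguity always amounts to \(X\mapsto\lambda X\) with \(\lambda<0\), which preserves the line and hence the subgroup. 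One small tightening: in the nilpotent case, rescaling \(v\) rescales \emph{both} vectors of the cyclic basis \(\{Xv,v\}\) and so multiplies \(\det M\) by a positive factor only, so the sign of \(\det M\) there cannot in fact be fixed by rescaling a single basis vector without perturbing the normal form; your fallback via \(t\mapsto-t\) already absorbs this, so no gap results.
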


\subsection{Action of $\SL$ as a Source of Hypercomplex Numbers}
\label{sec:homogeneous-spaces}

We recall the following standard
construction~\cite{Kirillov76}*{\S~13.2}. Let \(H\)
be a closed subgroup of a Lie group \(G\).
Let \( \Omega=G / H\)
be the corresponding homogeneous space and \(s: \Omega \rightarrow G\)
be a smooth section, which is a right inverse to the natural
projection \(p: G\rightarrow \Omega \).
The choice of \(s\)
is inessential in the sense that by a smooth map
\(\Omega\rightarrow \Omega\) we can always reduce one to another.

Any \(g\in G\) has a unique decomposition of the form
\(g=s(\omega)h\), where \(\omega=p(g)\in \Omega\) and \(h\in H\).
Note that \(\Omega \) is a left homogeneous space with the
\(G\)-action defined in terms of \(p\) and \(s\) as follows:
\begin{equation}
  \label{eq:g-action}
  g: \omega  \mapsto g\cdot \omega=p(g* s(\omega))\notingiq
\end{equation}
where \(*\) is the multiplication on \(G\). This is also illustrated
by the following commutative diagram:
\begin{displaymath}
        \xymatrix{ 
        G \ar@<.5ex>[d]^p \ar[r]^{g*} & G \ar@<.5ex>[d]^p \\ 
        \Omega   \ar@<.5ex>[u]^s \ar[r]^{g\cdot}   & \Omega     \ar@<.5ex>[u]^s }
\end{displaymath}

We want to describe homogeneous spaces obtained from \(G=\SL\)
and \(H\)
be one-dimensional continuous subgroup of \(\SL\).
For \(G=\SL\),
as well as for other semisimple groups, it is common to consider only
the case of \(H\)
being the maximal compact subgroup \(K\).
However, in this paper we admit \(H\)
to be any one-dimensional continuous subgroup. Due to
Prop.~\ref{pr:ank-subgroups} it is sufficient to take \(H=K\),
\(N\)
or \(A\).
Then \(\Omega\)
is a two-dimensional manifold and for any choice of \(H\)
we define~\cite{Kisil97c}*{Ex.~3.7(a)}:
\begin{equation}
  \label{eq:s-map}
  s: (u,v) \mapsto
  \frac{1}{\sqrt{v}}
  \begin{pmatrix}
    v & u \\ 0 & 1
  \end{pmatrix}, \qquad (u,v)\in\Space{R}{2},\  v>0.
\end{equation}
A direct (or computer algebra~\cite{Kisil07a}) calculation show that:
\begin{proposition}
  \label{pr:sl2-act-brute}
  The \(\SL\) action~\eqref{eq:g-action} associated to the map
  \(s\)~\eqref{eq:s-map} is:
  \begin{equation}
    \label{eq:sl2-act-brute}
    \begin{pmatrix}
      a&b\\c&d
    \end{pmatrix}: (u,v)\mapsto
    \left(\frac{(au+b)(c u+d) -\sigma cav^2}{( c u+d)^2 -\sigma (cv)^2},
      \frac{v}{( c u+d)^2 -\sigma (cv)^2}\right)\notingiq
  \end{equation}
  where \(\sigma=-1\), \(0\) and \(1\) for the subgroups \(K\), \(N'\)
  and \(\Aprime\) respectively.
\end{proposition}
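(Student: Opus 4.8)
The plan is to recognise the abstract action~\eqref{eq:g-action} as a single M\"obius transformation acting on hypercomplex numbers, and only then to unfold it into the explicit real formula~\eqref{eq:sl2-act-brute}. First I would identify the parameter point \((u,v)\in\Space{R}{2}\) with the hypercomplex number \(w=u+\alli v\), where \(\alli^2=\sigma\) takes the values \(-1\), \(0\), \(1\) corresponding to complex, dual and double numbers. The central claim to be established is that, under this identification, the action~\eqref{eq:g-action} is the fractional-linear map
\begin{displaymath}
  \begin{pmatrix} a&b\\c&d \end{pmatrix}: w\mapsto \frac{aw+b}{cw+d}.
\end{displaymath}
Granting this, the Proposition follows by a direct algebraic simplification.

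To justify the M\"obius description I would check two facts. First, the section \(s\)~\eqref{eq:s-map} sends the fixed base point \(\alli\) to \(w\): applying the fractional-linear rule to \(s(u,v)\) gives
\begin{displaymath}
  s(u,v)\cdot \alli=\frac{\sqrt{v}\,\alli+u/\sqrt{v}}{1/\sqrt{v}}=u+\alli v=w,
\end{displaymath}
so that \(s\) is a genuine right inverse to the orbit map \(g\mapsto g\cdot\alli\). Second, I would verify that the relevant one-parameter subgroup is precisely the stabiliser of \(\alli\): for \(\sigma=-1\) the rotation subgroup \(K\)~\eqref{eq:k-subgroup} fixes \(\alli=\rmi\), and the subgroups \(N'\), \(\Aprime\) play the same r\^ole for \(\sigma=0\), \(1\). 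Once the stabiliser of \(\alli\) coincides with \(H\), the composition \(\omega\mapsto p(g*s(\omega))\) is exactly the transported group action, i.e. the M\"obius map above.

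The final step is the explicit computation. Writing \(w=u+\alli v\) and rationalising by the conjugate \((cu+d)-\alli cv\) of the denominator, the denominator becomes \((cu+d)^2-\alli^2(cv)^2=(cu+d)^2-\sigma(cv)^2\). Separating the result into its real and \(\alli\)-parts, the real part is \((au+b)(cu+d)-\sigma ca v^2\) over this denominator, while the \(\alli\)-coefficient simplifies, using \(ad-bc=1\), to \((ad-bc)v=v\). This reproduces both coordinates of~\eqref{eq:sl2-act-brute}.

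The main obstacle is the treatment of the degenerate algebras. For \(\sigma=0\) and \(\sigma=1\) the number \(cw+d\) may be a zero divisor, so rationalising is only legitimate where the denominator \((cu+d)^2-\sigma(cv)^2\) does not vanish; I would record this non-degeneracy condition and note that it is exactly the locus on which the action is defined. One should also confirm that the transformation preserves the admissible region (the sign of the \(\alli\)-part, i.e. \(v>0\) in the elliptic case), so that the image again lies in the domain of the section \(s\). A purely mechanical alternative, bypassing the hypercomplex picture, is to multiply the matrices \(g\,s(u,v)\) directly and to solve the decomposition \(g\,s(u,v)=s(u',v')\,h\) with \(h\in H\) for each of the three subgroups; this is the verification referred to as ``a direct (or computer algebra) calculation'' in the text, but it obscures the uniform origin of the parameter \(\sigma\).
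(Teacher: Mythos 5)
Your computation is correct, but you have inverted the paper's logical order. The paper proves Proposition~\ref{pr:sl2-act-brute} by brute force: multiply out \(g\,s(u,v)\), solve the decomposition \(g\,s(u,v)=s(u',v')h\) with \(h\in H\) for each of the three subgroups (or delegate this to computer algebra), and read off \((u',v')\) -- this is exactly the ``purely mechanical alternative'' you mention in your last sentence. Only afterwards does the paper observe (Proposition~\ref{pr:sl2-act}) that the resulting formula is a M\"obius map in a hypercomplex variable; indeed the Remark following it insists that the hypercomplex numbers were ``naturally created by the \(\SL\) action'' rather than introduced by fiat. Your route proves the M\"obius description first and expands it, which is conceptually cleaner and explains at once why the single parameter \(\sigma=\alli^2\) governs all three cases, with the final rationalisation by \((cu+d)-\alli cv\) and the use of \(ad-bc=1\) being exactly right. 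The price is that you must presuppose the hypercomplex arithmetic and, more substantively, that the fractional-linear maps compose as a group action over dual and double numbers -- which fails at zero divisors and on the locus where \(g\,s(u,v)\) does not decompose through \(s\) at all (for \(H=N'\) or \(\Aprime\) the decomposition \(g=s(\omega)h\) is not global, unlike the Iwasawa case \(H=K\)); your caveat about the vanishing denominator is the right flag, but a complete argument should note that this caveat applies equally to the statement being proved. One small simplification: for transporting the action you only need \(H\subseteq\mathrm{Stab}(\alli)\), not equality; the full stabiliser computation belongs to Exercise~\ref{le:fix-subgroups} and is not required here.
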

The expression in~\eqref{eq:sl2-act-brute} does not look very
appealing, however an introduction of hypercomplex numbers makes it
more attractive:
\begin{proposition}
  \label{pr:sl2-act}
  Let a hypercomplex unit \(\alli\) be such that \(\alli^2=\sigma\), then the
  \(\SL\) action~\eqref{eq:sl2-act-brute} becomes:
  \begin{equation}
    \label{eq:sl2-act}
    \begin{pmatrix}
      a&b\\c&d
    \end{pmatrix}: w\mapsto \frac{aw+b}{cw+d}, \qquad
    \text{where } w=u+\alli v\notingiq
  \end{equation}
  for all three cases parametrised by \(\sigma\) as in Prop.~\ref{pr:sl2-act-brute}.
\end{proposition}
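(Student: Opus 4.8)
The plan is to verify \eqref{eq:sl2-act} directly by computing the right-hand side of the Möbius formula and matching its two components against the explicit action of Proposition~\ref{pr:sl2-act-brute}. The key device is hypercomplex conjugation: writing \(w = u + \alli v\) and \(\bar w = u - \alli v\), I would rationalise the fraction \(\frac{aw+b}{cw+d}\) by multiplying numerator and denominator by \(\overline{cw+d} = c\bar w + d\), so that the denominator becomes the real quantity \((cw+d)(c\bar w+d)\).

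First I would compute this denominator. Using \(w\bar w = u^2 - \alli^2 v^2 = u^2 - \sigma v^2\) and \(w + \bar w = 2u\), one obtains
\begin{displaymath}
  (cw+d)(c\bar w+d) = c^2(u^2-\sigma v^2) + 2cdu + d^2 = (cu+d)^2 - \sigma(cv)^2,
\end{displaymath}
which is exactly the denominator appearing in both components of \eqref{eq:sl2-act-brute}. This step explains, a posteriori, why that particular quadratic form governs the action.

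Next I would expand the numerator \((aw+b)(c\bar w+d)\) and separate it into its \(1\)- and \(\alli\)-parts. The \(1\)-part works out to \(ac(u^2-\sigma v^2)+(ad+bc)u+bd\), which coincides with \((au+b)(cu+d)-\sigma cav^2\), the numerator of the first component; the \(\alli\)-part is \((ad-bc)v\), and here I would invoke the defining condition \(\det = ad-bc = 1\) for \(\SL\) to reduce it to \(v\). Dividing both parts by the common real denominator then reproduces the pair in \eqref{eq:sl2-act-brute}, with the \(1\)-component giving the new \(u\) and the \(\alli\)-component giving the new \(v\); this establishes the identity for all three values of \(\sigma\) simultaneously, with no case distinction needed.

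The computation itself is routine; the only point requiring care is the status of the denominator, since for \(\sigma = 0\) and \(\sigma = 1\) the hypercomplex numbers contain zero divisors and \(cw+d\) need not be invertible. I would therefore read \eqref{eq:sl2-act} as an algebraic identity valid wherever \((cu+d)^2 - \sigma(cv)^2 \neq 0\), which is precisely the locus on which the action of Proposition~\ref{pr:sl2-act-brute} is defined. No genuine division by a zero divisor is ever performed, because the rationalisation replaces it by division by this nonvanishing real scalar.
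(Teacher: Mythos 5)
Your proof is correct: the conjugation trick, the identification of $(cw+d)(c\bar w+d)=(cu+d)^2-\sigma(cv)^2$, the separation of the numerator into its $1$- and $\alli$-parts, and the use of $ad-bc=1$ all check out, and your caveat about zero divisors matches the paper's own treatment of the degenerate locus. The paper offers no written proof beyond ``a direct (or computer algebra) calculation,'' so your argument is essentially the same direct verification, merely organised more transparently through hypercomplex conjugation.
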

\begin{remark}
  We wish to stress that the hypercomplex numbers were not introduced
  here by our intention, arbitrariness or ``generalising
  attitude''~\cite{Pontryagin86a}*{p.~4}. They were naturally created by
  the \(\SL\) action. 
\end{remark}

Notably the action~\eqref{eq:sl2-act} is a group homomorphism of the
group \(\SL\) into transformations of the ``upper half-plane'' on
hypercomplex numbers.  Although dual and double numbers are
algebraically trivial, the respective geometries in the spirit of
\wiki{Erlangen_program}{Erlangen programme} are refreshingly
inspiring~\cites{Kisil05a,Kisil12a,Kisil08a} and provide useful insights even in the
elliptic case~\cite{Kisil06a}.  In order to treat divisors of zero, we
need to consider M\"obius transformations~\eqref{eq:sl2-act} of
conformally completed plane~\cites{HerranzSantander02b,Kisil06b}.

The arithmetic of dual and double numbers is different from 
complex numbers mainly in the following aspects:
\begin{enumerate}
\item They have zero divisors%
  \index{divisor!zero}%
  \index{zero!divisor}. However, we are still able to define
  their transforms by~\eqref{eq:sl2-act} in most cases. The proper
  treatment of zero divisors will be done through corresponding
  compactification~\cite{Kisil12a}*{\S~8.1}.
\item They are not algebraically closed. However, this property of
  complex numbers is not used very often in analysis.
\end{enumerate}

 Three
possible values \(-1\), \(0\) and \(1\) of \(\sigma:=\alli^2\)%
\index{$sigma$@$\sigma$ ($\sigma:=\alli^2$)} will be referred to here as
\emph{elliptic}, \emph{parabolic} and \emph{hyperbolic}%
\index{elliptic!case}%
\index{hyperbolic!case}%
\index{parabolic!case}%
\index{case!elliptic}%
\index{case!parabolic}%
\index{case!hyperbolic} cases, respectively.  This separation into
three cases will be referred to as the \emph{EPH classification}%
\index{EPH classification}.  Unfortunately, there is a clash here with
the already established label for the \emph{Lobachevsky geometry}%
\index{Lobachevsky!geometry}%
\index{geometry!Lobachevsky}. It is often called hyperbolic geometry%
\index{hyperbolic!geometry}%
\index{geometry!hyperbolic} because it can be realised as a Riemann
geometry%
\index{Riemann!geometry}%
\index{geometry!Riemann} on a two-sheet hyperboloid. However, within
our framework, the Lobachevsky geometry should be called elliptic and
it will have a true hyperbolic counterpart.

\begin{notation}
  We denote the space \(\Space{R}{2}\) of vectors \(u+v \alli\) by
  \(\Space[e]{R}{}\)\index{$C@$\Space[e]{R}{}$ (complex numbers)}, %
  \(\Space[p]{R}{}\)\index{$D@$\Space[p]{R}{}$ (dual numbers)} or %
  \(\Space[h]{R}{}\)\index{$O@$\Space[h]{R}{}$ (double numbers)} to
  highlight which number system (complex, dual or double,
  respectively) is used. The notation \(\Space[\sigma]{R}{}\)%
  \index{$A@$\Space[\sigma]{R}{}$ (point space)} is used for a generic
  case. 
\end{notation}

\subsection{Orbits of the Subgroup Actions}
\label{sec:orbits-subgr-acti}

We start our investigation of the M\"obius
transformations~\eqref{eq:sl2-act}
\begin{displaymath}
  \begin{pmatrix}
    a&b\\c&d
  \end{pmatrix}:\ w\mapsto  \frac{aw+b}{c w+d}
\end{displaymath}
on the hypercomplex numbers \(w=u+\alli v\) from a description of
orbits produced by the subgroups \(\Aprime\), \(N'\) and \(K\). Due to the
Iwasawa decomposition%
\index{Iwasawa decomposition}%
\index{decomposition!Iwasawa}~\(\SL=ANK\), any M\"obius
transformation can be represented as a superposition of these three
actions.

\begin{figure}[htbp]
  \centering
  \includegraphics[scale=.8]{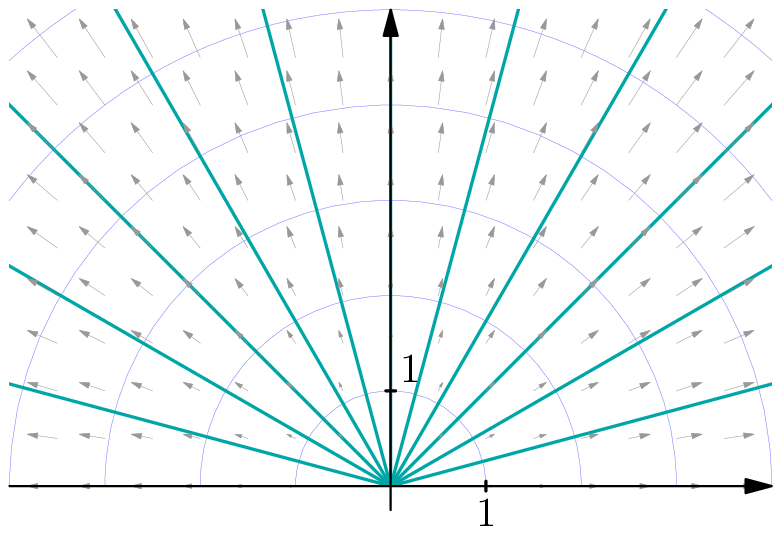}
  \hfill
   \includegraphics[scale=.8]{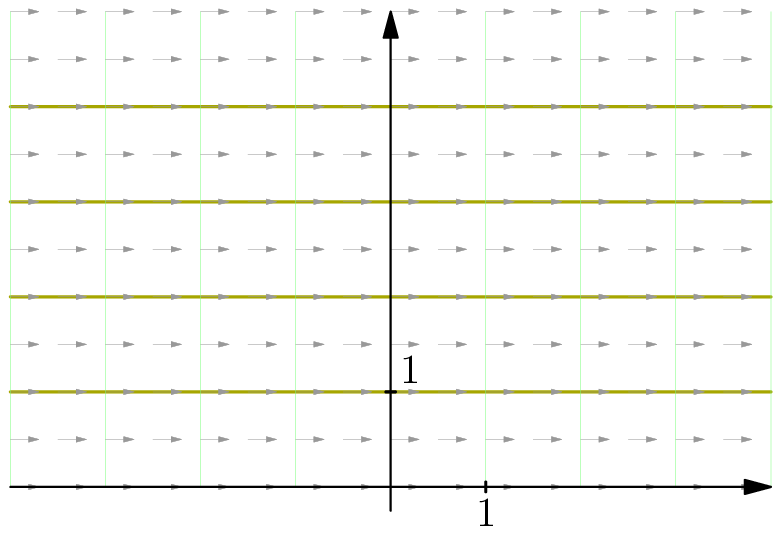}
   \caption[Actions of the subgroups $A$ and $N$ by M\"obius
   transformations]{Actions of the subgroups \(A\) and \(N\) by
     M\"obius transformations. Transverse thin lines are images of the
     vertical axis, grey arrows show the derived action.}
  \label{fig:a-n-action}
\end{figure}
The actions of subgroups \(A\) and \(N\) for any kind of hypercomplex
numbers on the plane are the same as on the real line: \(A\)%
\index{$A$-orbit|indef}%
\index{orbit!subgroup $A$, of|indef}%
\index{subgroup!$A$!orbit|indef} dilates and \(N\)%
\index{$N$-orbit|indef}%
\index{orbit!subgroup $N$, of|indef}%
\index{subgroup!$N$!orbit|indef} shifts\,--\,see Fig.~\ref{fig:a-n-action} for
illustrations. Thin traversal lines in Fig.~\ref{fig:a-n-action} join
points of orbits obtained from the vertical axis by the same values of
\(t\) and grey arrows represent ``local velocities''\,--\,vector fields
of derived representations.

\begin{figure}[htbp]
  \centering
  \includegraphics[scale=.8]{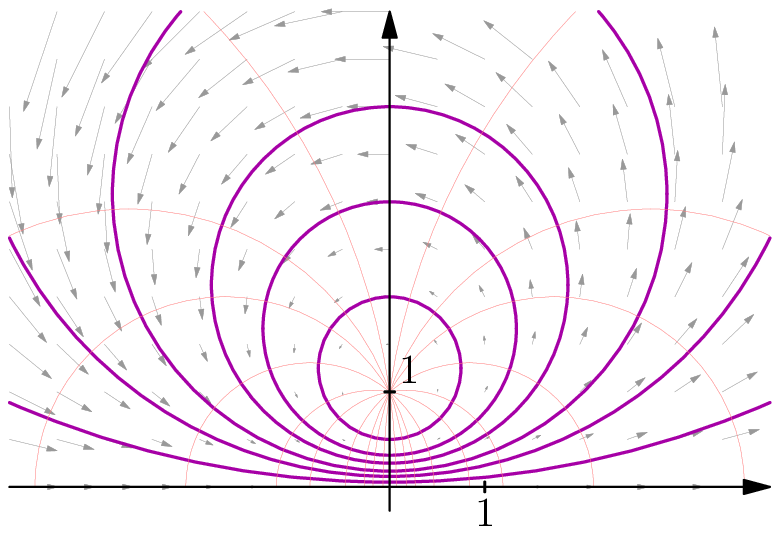} 
  \hfill
  \includegraphics[scale=.8]{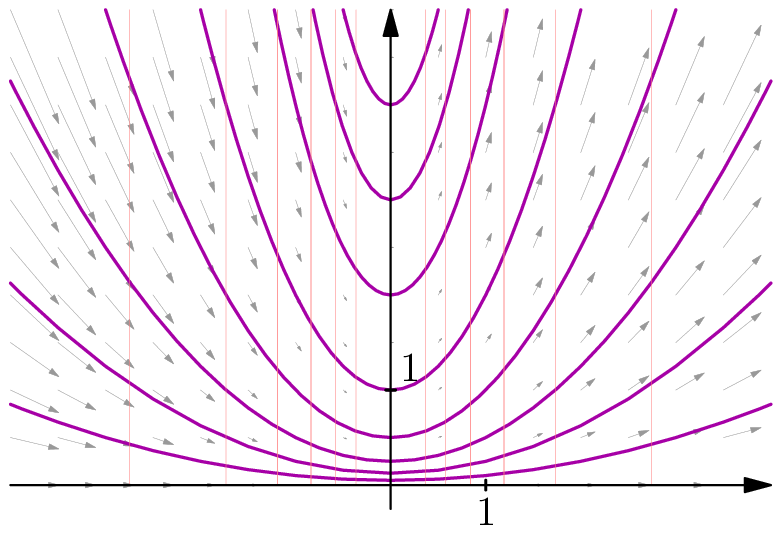}\\
  \includegraphics[scale=.8]{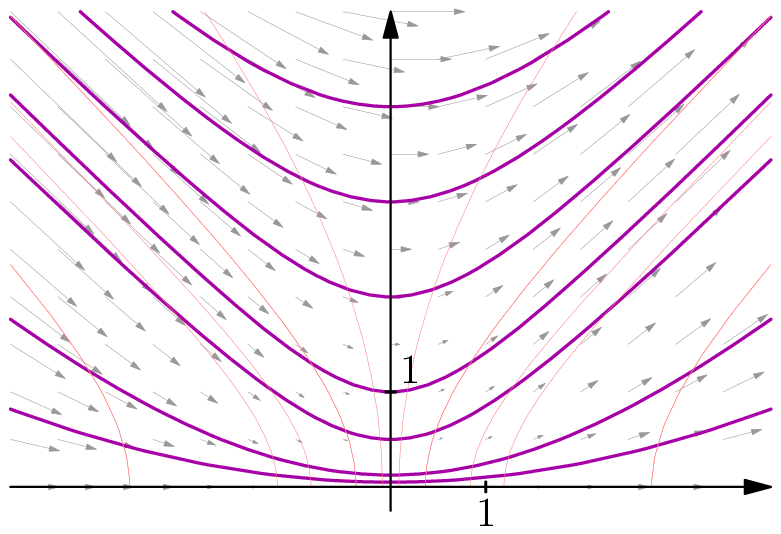}
  \caption[Action of the subgroup $K$]{Action of the subgroup \(K\).
    The corresponding orbits are circles, parabolas and hyperbolas
    shown by thick lines.  Transverse thin lines are images of the
    vertical axis, grey arrows show the derived action.}
  \label{fig:int-k-subgroup}
\end{figure}
By contrast, the action of the third matrix from the subgroup \(K\)
sharply depends on \(\sigma=\alli^2\), as illustrated by
Fig.~\ref{fig:int-k-subgroup}. In elliptic, parabolic and hyperbolic
cases, \(K\)-orbits%
\index{$K$-orbit|indef}%
\index{orbit!subgroup $K$, of|indef}%
\index{subgroup!$K$!orbit|indef} are circles, parabolas and
(equilateral) hyperbolas, respectively. The meaning of traversal lines
and vector fields is the same as on the previous figure.

\index{isotropy subgroup|(}%
\index{subgroup!isotropy|(} 

At the beginning of this subsection we described how
subgroups  generate homogeneous spaces.
The following exercise goes it in the
opposite way: from the group action on a homogeneous space to the
corresponding subgroup, which fixes the certain point.
\begin{exercise} 
  \label{le:fix-subgroups}
  Let \(\SL\) act by M\"obius transformations~\eqref{eq:sl2-act}
  on the three number systems.  Show that the isotropy subgroups
  of the point \(\alli\) are:
  \begin{enumerate}
  \item\label{it:fix-group-ell} 
    The subgroup \(K\) in the elliptic case. Thus, the
    elliptic upper half-plane%
    \index{upper half-plane!elliptic}%
    \index{half-plane!upper!elliptic}%
    \index{elliptic!upper half-plane} is a model for the homogeneous space
    \(\SL/K\).
  \item \label{it:fix-group-par}
    The subgroup \(N'\)%
    \index{$N'$ subgroup}%
    \index{subgroup!$N'$}~\eqref{eq:np-subgroup} of matrices
    \begin{equation}
      \label{eq:np-subgroup}
      \begin{pmatrix}
        1&0\\
         \nu & 1
      \end{pmatrix}
      =
      \begin{pmatrix}
        0&-1\\
        1 & 0
      \end{pmatrix}
      \begin{pmatrix}
        1&\nu \\
         0& 1
      \end{pmatrix}
      \begin{pmatrix}
        0&1\\
        -1 & 0
      \end{pmatrix}
    \end{equation}
    in the parabolic case.  It also fixes any point \(\rmp v\) on the
    vertical axis, which is the set of zero divisors%
    \index{divisor!zero}%
    \index{zero!divisor} in dual numbers.  The subgroup \(N'\) is
    conjugate to subgroup \(N\), thus the \emph{parabolic upper half-plane}%
    \index{upper half-plane!parabolic}%
    \index{half-plane!upper!parabolic}%
    \index{parabolic!upper half-plane} is
    a model for the homogeneous space \(\SL/N\).
  \item \label{it:fix-group-hyp}
    The subgroup \(\Aprime\)%
    \index{$A'$ subgroup}%
    \index{subgroup!$A'$}~\eqref{eq:ap-subgroup} of matrices
    \begin{equation}
      \label{eq:ap-subgroup}
      \begin{pmatrix}
        \cosh\tau & \sinh\tau\\
        \sinh\tau & \cosh\tau 
      \end{pmatrix} =
      \frac{1}{2}
      \begin{pmatrix}
        1 & -1\\
        1 & 1
      \end{pmatrix}
      \begin{pmatrix}
         \rme^\tau & 0\\
        0 &  \rme^{-\tau}
      \end{pmatrix}
      \begin{pmatrix}
        1 & 1\\
        -1 & 1
      \end{pmatrix}
    \end{equation}
    in the hyperbolic case. These transformations also fix the light
    cone%
    \index{light!cone}%
    \index{cone!light} centred at \(\rmh\), that is, consisting of
    \(\rmh+\text{zero divisors}\)%
    \index{divisor!zero}%
    \index{zero!divisor}.  The subgroup \(\Aprime\) is conjugate to
    the subgroup \(A\), thus two copies of the upper half-plane are a
    model for \(\SL/A\).
  \end{enumerate}
\end{exercise}
\begin{figure}[htbp]
  \centering
  \includegraphics[scale=.82]{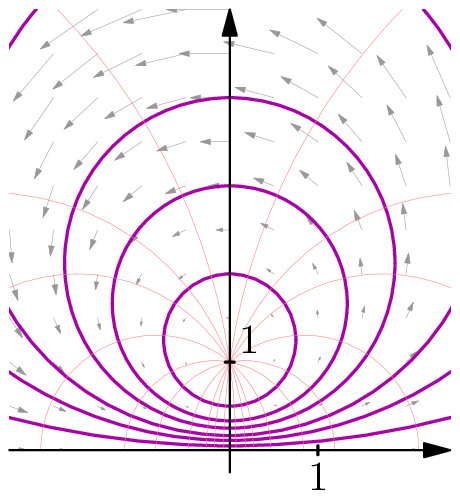} \hspace{0em plus 10em}%
  \includegraphics[scale=.82]{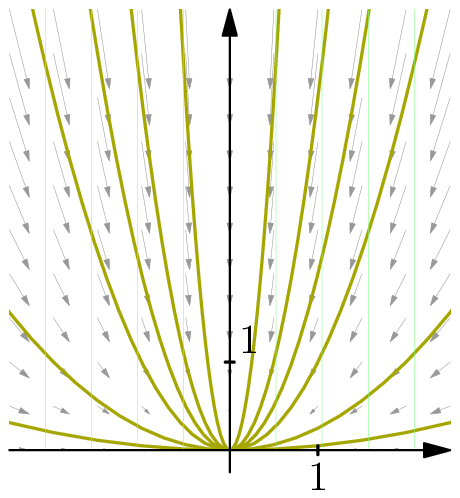} \hspace{0em plus 10em}%
  \includegraphics[scale=.82]{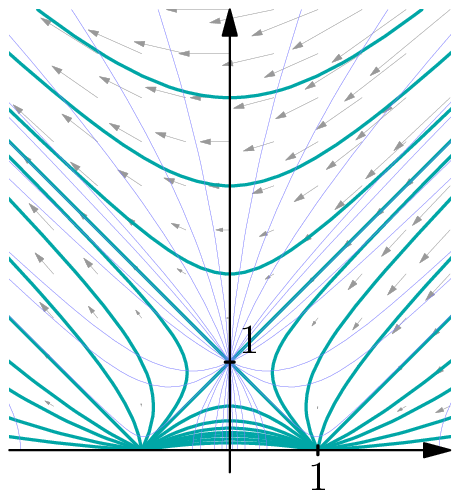}
  \caption[Actions of isotropy subgroups]{Actions of isotropy subgroups
    \(K\), \(N'\) and \(\Aprime\), which
    fix point \(\alli\) in three EPH cases.} 
  \label{fig:fix-sbroups}
\end{figure}
Figure~\ref{fig:fix-sbroups} shows actions of the above isotropic
subgroups on the respective numbers, we call them \emph{rotations}%
\index{rotation|indef} around \(\alli\). Note, that in parabolic and
hyperbolic cases they fix larger sets connected with zero divisors.

It is inspiring to compare the action of subgroups \(K\), \(N'\) and
\(\Aprime\) on three number systems, this is presented on
Fig.~\ref{fig:fix-groups-all-spaces}. Some features are preserved if
we move from top to bottom along the same column, that is, keep the
subgroup and change the metric of the space. We also note the same
system of a gradual transition if we compare pictures from left to
right along a particular row.  Note, that Fig.~\ref{fig:fix-sbroups}
extracts diagonal images from Fig.~\ref{fig:fix-groups-all-spaces},
this puts three images from Fig.~\ref{fig:fix-sbroups} into a context,
which is not obvious from Fig.~\ref{fig:fix-groups-all-spaces}.

\begin{figure}[htbp]
  \centering
  \includegraphics[scale=.82]{k_orb_ell1.eps} \hspace{1em plus 10em}%
  \includegraphics[scale=.82]{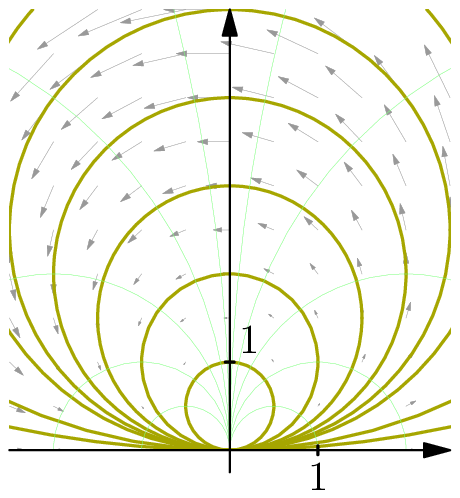} \hspace{1em plus 10em}%
  \includegraphics[scale=.82]{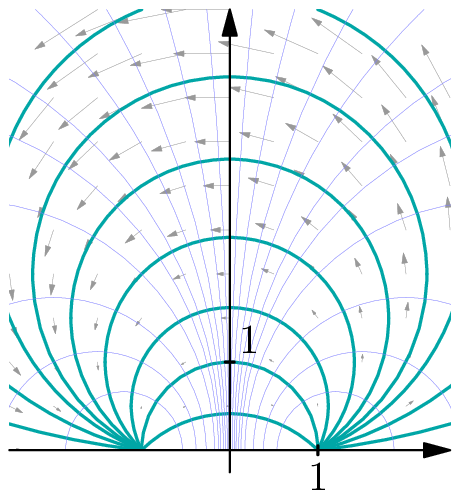}\\[2em]
  \includegraphics[scale=.82]{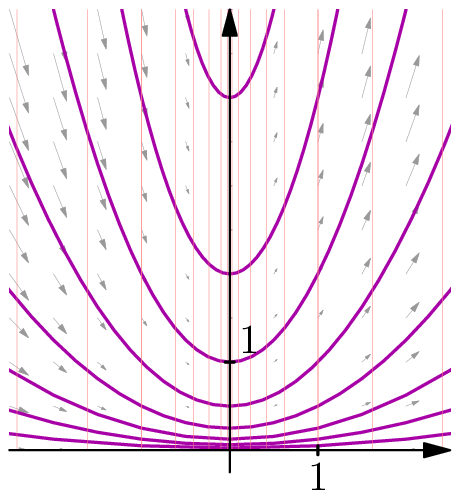} \hspace{1em plus 10em}%
  \includegraphics[scale=.82]{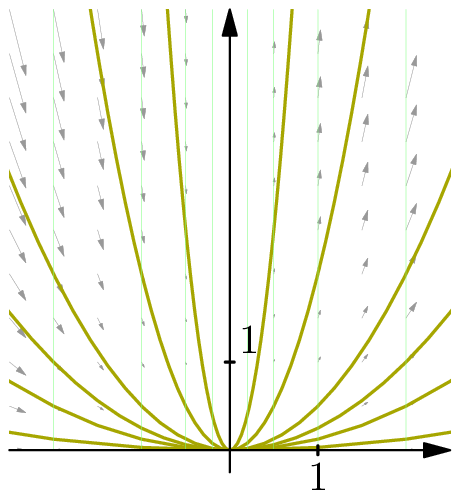} \hspace{1em plus 10em}%
  \includegraphics[scale=.82]{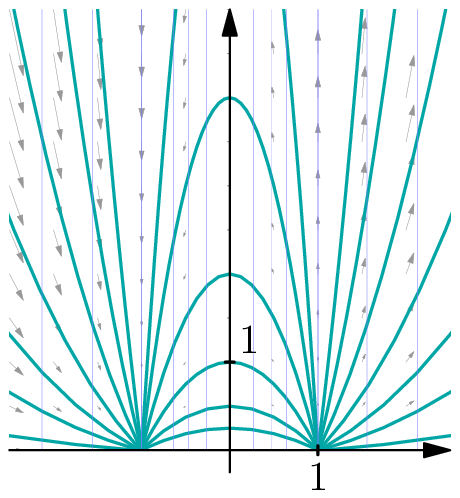}\\[2em]
  \includegraphics[scale=.82]{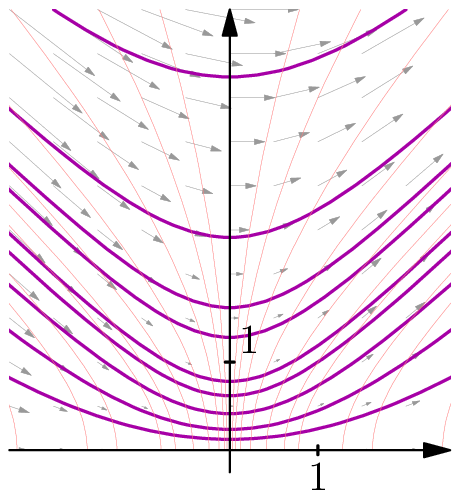} \hspace{1em plus 10em}%
  \includegraphics[scale=.82]{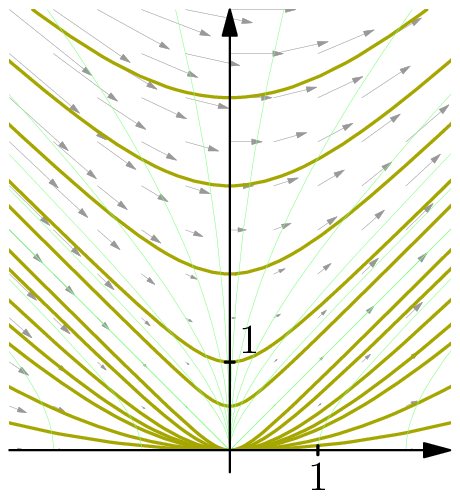} \hspace{1em plus 10em}%
  \includegraphics[scale=.82]{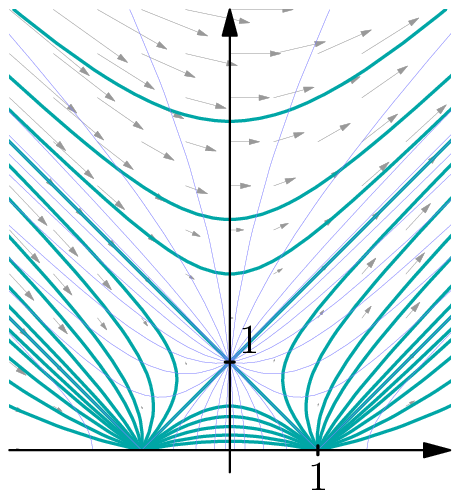}
  
  \caption[Actions of the subgroups $K$, $N'$, $\Aprime$]{Actions of
    the subgroups $K$, $N'$, $\Aprime$ are shown in the first, middle
    and last columns respectively. The elliptic, parabolic and
    hyperbolic spaces are presented in the first, middle and last rows
    respectively. The diagonal drawings comprise
    Fig.~\ref{fig:fix-sbroups} and the first column Fig.~\ref{fig:int-k-subgroup}.}
  \label{fig:fix-groups-all-spaces}
\end{figure}



\subsection{The Heisenberg Group and Symplectomorphisms}
\label{sec:heisenberg-group-intro}

Let \((s,x,y)\), where \(s\), \(x\), \(y\in \Space{R}{}\), be an
element of the one-dimensional \emph{Heisenberg group}
\(\Space{H}{}\)%
\index{Heisenberg!group}%
\index{group!Heisenberg}%
\index{$\Space{H}{}$, Heisenberg group}~\cites{Folland89,Howe80b}  also known as Weyl or Heisenberg-Weyl group.
Consideration of the general case of the \(n\)-dimensional Heisenberg
group \(\Space{H}{n}\) will be similar,
but is beyond the scope of present paper. The group law on
\(\Space{H}{}\) is given as follows:
\begin{equation}
  \label{eq:H-n-group-law}
  \textstyle
  (s,x,y)\cdot(s',x',y')=(s+s'+\frac{1}{2}\omega(x,y;x',y'),x+x',y+y')\notingiq
\end{equation} 
where the non-commutativity is due to \(\omega\)\,--\,the
\emph{symplectic form}%
\index{symplectic!form}%
\index{form!symplectic} on \(\Space{R}{2}\)~\eqref{eq:symplectic-form}, which is the central
object of the classical mechanics%
\index{classical mechanics}%
\index{mechanics!classical}~\cite{Arnold91}*{\S~37}:
\begin{equation}
  \label{eq:symplectic-form1}
  \omega(x,y;x',y')=xy'-x'y.
\end{equation}
The Heisenberg group is a non-commutative Lie
group with the centre
\begin{displaymath}
  Z=\{(s,0,0)\in \Space{H}{}, \ s \in \Space{R}{}\}.
\end{displaymath}
The left shifts
\begin{equation}
  \label{eq:left-right-regular}
  \Lambda(g): f(g') \mapsto f(g^{-1}g')  
\end{equation}
act as a representation of \(\Space{H}{}\) on a certain linear space
of functions. For example, an action on \(\FSpace{L}{2}(\Space{H}{},dg)\) with
respect to the Haar measure%
\index{Heisenberg!group!invariant measure}%
\index{group!Heisenberg!invariant measure}%
\index{invariant!measure}%
\index{measure!invariant} \(dg=\rmd s\,\rmd x\,\rmd y\) is the \emph{left regular}
representation%
\index{left regular representation}%
\index{representation!left regular}, which is unitary. 

The Lie algebra \(\algebra{h}\) of \(\Space{H}{}\) is spanned by
left-(right-)invariant vector fields
\begin{equation}
\textstyle  S^{l(r)}=\pm{\partial_s}, \quad
  X^{l(r)}=\pm\partial_{ x}-\frac{1}{2}y{\partial_s},  \quad
 Y^{l(r)}=\pm\partial_{y}+\frac{1}{2}x{\partial_s}
  \label{eq:h-lie-algebra}
\end{equation}
on \(\Space{H}{}\) with the Heisenberg \emph{commutator relation}%
\index{Heisenberg!commutator relation}%
\index{commutator relation}
\begin{equation}
  \label{eq:heisenberg-comm}
  [X^{l(r)},Y^{l(r)}]=S^{l(r)} 
\end{equation}
and all other commutators vanishing. This is encoded in the phrase
\(\Space{H}{}\)
is a \emph{nilpotent step \(2\)}
Lie group. For simplicity, we will sometimes omit the superscript
\(l\) for left-invariant field.

The group of outer automorphisms of \(\Space{H}{}\),
which trivially acts on the centre of \(\Space{H}{}\),
is the symplectic group \(\Sp[2]\)\index{$\Sp[2]$}
It is the group of symmetries of the symplectic form
\(\omega\)~\eqref{eq:symplectic-form1}
~\citelist{\cite{Folland89}*{Thm.~1.22} \cite{Howe80a}*{p.~830}}. The
symplectic group is isomorphic to \(\SL\)
considered in Sec.~\ref{sec:homogeneous-spaces}. The explicit action of \(\Sp[2]\)
on the Heisenberg group is:
\begin{equation}
  \label{eq:sympl-auto}
  g: h=(s,x,y)\mapsto g(h)=(s,x',y')\notingiq
\end{equation}
where 
\begin{displaymath}
  g=\begin{pmatrix}
    a&b\\
    c&d
  \end{pmatrix}\in\SL, \quad\text{ and }\quad
  \begin{pmatrix}
    x'\\y'
  \end{pmatrix}
  =\begin{pmatrix}
    a&b\\
    c&d
  \end{pmatrix}
  \begin{pmatrix}
    x\\y
  \end{pmatrix}.
\end{displaymath}
Due to appearance of half-integer weight in the Shale--Weil
representation below, we need to consider the metaplectic group
\(\Mp\) which is the double cover of  \(\Sp[2]\). 
Then we can build the semidirect product
\(G=\Space{H}{}\rtimes\Mp\) with the standard group law:
\begin{equation}
  \label{eq:schrodinger-group}
  (h,g)*(h',g')=(h*g(h'),g*g'), \qquad \text{where } 
  h,h'\in\Space{H}{}, \quad g,g'\in\Mp\notingiq
\end{equation}
and the stars denote the respective group operations while the action
\(g(h')\) is defined as the composition of the projection map
\(\Mp\rightarrow {\mathrm{Sp}}(2)\) and the
action~\eqref{eq:sympl-auto}. This group is sometimes called the
\emph{Schr\"odinger group}%
\index{Schr\"odinger!group}%
\index{group!Schr\"odinger} and it is known as the maximal kinematical
invariance group of both the free Schr\"odinger equation and the
quantum harmonic oscillator~\cite{Niederer73a}. This group is of
interest not only in quantum mechanics but also in
optics\index{optics}~\cites{ATorre10a,ATorre08a}. 

Consider the Lie algebra \(\algebra{sl}_2\)
of the group \(\SL\)
(as well as groups \(\Sp[2]\) and \(\Mp\)).
We again use the basis \(A\),
\(B\),
\(Z\)~\eqref{eq:sl2-basis}
with commutators~\eqref{eq:sl2-commutator}.  Vectors \(Z\),
\(B-Z/2\)
and \(B\)
are generators of the one-parameter subgroups \(K\),
\(N'\)
and \(\Aprime\)
\eqref{eq:k-subgroup}, \eqref{eq:np-subgroup} and
\eqref{eq:ap-subgroup} respectively.  Furthermore we can consider the
basis \(\{S, X, Y, A, B, Z\}\)
of the Lie algebra \(\algebra{g}\)
of the Lie group \(G=\Space{H}{}\rtimes\Mp\).
All non-zero commutators besides those already listed
in~\eqref{eq:heisenberg-comm} and~\eqref{eq:sl2-commutator} are:
\begin{align}
  \label{eq:cross-comm}
  [A,X]&=\textstyle\frac{1}{2}X,&
  [B,X]&=\textstyle-\frac{1}{2}Y,&
  [Z,X]&=Y;\\
  \label{eq:cross-comm1}
  [A,Y]&=\textstyle-\frac{1}{2}Y,&
  [B,Y]&=\textstyle-\frac{1}{2}X,&
  [Z,Y]&=-X.
\end{align}

\section{Linear Representations and Hypercomplex Numbers}
\label{sec:induc-repr}

A consideration of the symmetries in analysis is natural to start from
\emph{linear representations}%
\index{representations!linear}. The above geometrical
actions~\eqref{eq:sl2-act} can be naturally extended to such
representations by
induction%
\index{representation!induced}%
\index{induced!representation}~\citelist{\cite{Kirillov76}*{\S~13.2}
  \cite{Kisil97c}*{\S~3.1}} from a representation of a subgroup \(H\).
If \(H\) is one-dimensional then its irreducible representation is a
character, which is commonly supposed to be a complex valued.  However,
hypercomplex number naturally appeared in the \(\SL\)
action~\eqref{eq:sl2-act}, see~\cite{Kisil09c,Kisil12a}, why
shall we admit only \(\rmi^2=-1\) to deliver a character then?

\subsection{Hypercomplex Characters}
\label{sec:hyperc-char}

\begin{figure}[htbp]
  \centering
  \includegraphics[scale=.77]{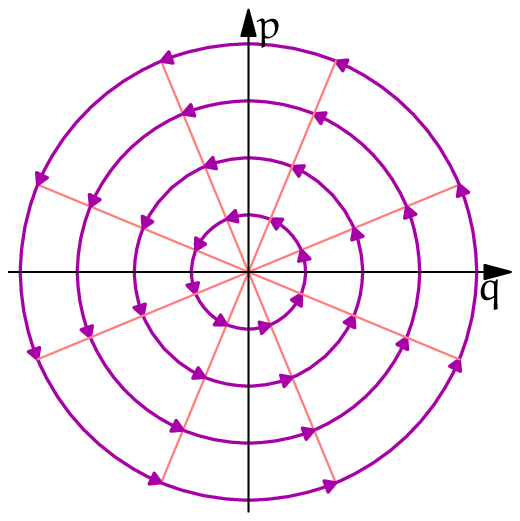}\hfill
  \includegraphics[scale=.77]{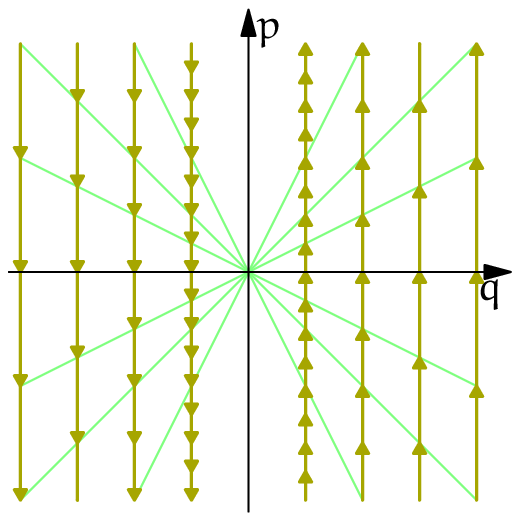}\hfill
  \includegraphics[scale=.77]{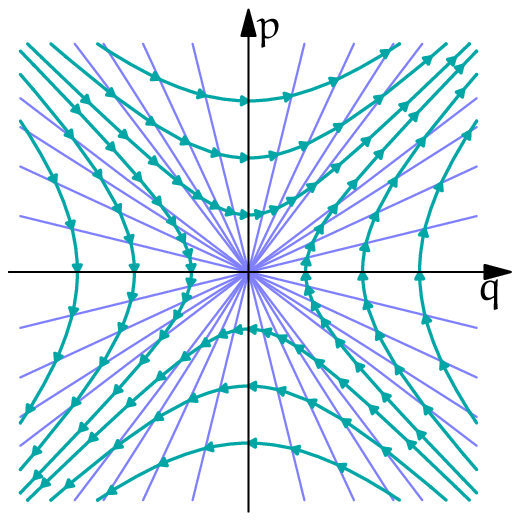}
  \caption[Rotations of wheels]{Rotations of algebraic wheels, i.e.
    the multiplication by \( \rme^{\alli t}\): elliptic (\(E\)), trivial
    parabolic (\(P_0\)) and hyperbolic (\(H\)). All blue orbits are
    defined by the identity \(x^2-\alli^2y^2=r^2\). Thin ``spokes''
    (straight lines from the origin to a point on the orbit) are
    ``rotated'' from the real axis. This is symplectic linear
    transformations of the classical phase space as well.}
  \label{fig:rotations}
\end{figure}%
\index{symplectic!transformation}%
\index{transformation!symplectic}
As we already mentioned, the typical discussion of induced
representations of \(\SL\) is centred around the case \(H=K\) and a
complex valued character of \(K\).  A linear transformation defined by
a matrix~\eqref{eq:k-subgroup} in \(K\) is a rotation of
\(\Space{R}{2}\) by the angle \(t\).  After identification
\(\Space{R}{2}=\Space{C}{}\) this action is given by the
multiplication \( \rme^{\rmi t}\), with \(\rmi^2=-1\).  The rotation
preserve the (elliptic) metric given by:
\begin{equation}
  \label{eq:ell-metric}
  x^2+y^2=(x+\rmi y)(x-\rmi y).
\end{equation}
Therefore the orbits of rotations are circles, any line passing the
origin (a ``spoke'') is rotated by the angle \(t\).  Dual%
\index{dual!number}%
\index{number!dual} and double numbers%
\index{number!double}%
\index{double!number} produces the most straightforward adaptation of
this result, see Fig.~\ref{fig:rotations} for all three cases.  The
correspondences between the respective algebraic aspects is shown at
Fig.~\ref{fig:algebr-corr-CDO}.

\begin{figure}[htbp]
  \centering
    \begin{tabular}{||c|c|c||}
      \hline\hline
      Elliptic & Parabolic & Hyperbolic\\
      \hline\hline
      \(\strut^{\strut}\rmi^{2}=-1\)&    \(\rmp^2=0\)&    \(\rmh^2=1\)
      \\
      \hline
      \(w=x+\rmi y\) &\(w=x+\rmp y\) &\(w=x+\rmh y\) 
      \\
      \hline
      \(\bar{w}=x-\rmi y\) &\(\bar{w}=x-\rmp y\) &\(\bar{w}=x-\rmh y\) 
      \\
      \hline
      \(\strut^{\strut} \rme^{\rmi t} = \cos t +\rmi \sin t\)&\( \rme^{\rmp t} = 1 +\rmp t\)&\( \rme^{\rmh t} = \cosh t +\rmh \sinh t\)
      \\
      \hline
      \(\strut^{\strut}\modulus[e]{w}^{ 2}=w\bar{w}=x^2+y^2\) &\(\modulus[p]{w}^2=w\bar{w}=x^2\) &\(\modulus[h]{w}^2=w\bar{w}=x^2-y^2\) 
      \\
      \hline
      \(\arg w = \tan^{-1} \frac{y}{x} \frac{\strut}{\strut}\)& \(\arg w = \frac{y}{x}\)&\(\arg w = \tanh^{-1} \frac{y}{x}\)
      \\
      \hline
       unit circle \(\strut^{\strut} \modulus[e]{w}^{2}=1\) & ``unit'' strip \(x=\pm 1\) & unit hyperbola \(\modulus[h]{w}^2=1\)
      \\
      \hline\hline
    \end{tabular}
    \caption[Algebraic correspondence between complex, dual and double
    numbers]{Algebraic correspondence between complex, dual and double
    numbers.}
  \label{fig:algebr-corr-CDO}
  \end{figure}  

Explicitly, parabolic rotations associated with \(\rme^{\rmp t}\) acts on dual
numbers%
\index{dual!number|(}%
\index{number!dual|(} as follows:
\begin{equation}
  \label{eq:parab-rot-triv}
  \rme^{\rmp x}: a+\rmp b \mapsto a+\rmp (a x+b).
\end{equation}
This links the parabolic case with the Galilean
group~\cite{Yaglom79} of symmetries of the classic mechanics, with
the absolute time disconnected from space.

The obvious algebraic similarity and the connection to classical
kinematic is a wide spread justification for the following viewpoint on
the parabolic case, cf.~\cites{HerranzOrtegaSantander99a,Yaglom79}:
\begin{itemize}
\item The parabolic trigonometric functions are trivial:
  \begin{equation}
    \label{eq:par-trig-0}
    \cosp t =\pm 1, \qquad \sinp t=t.
  \end{equation}
\item The parabolic distance is independent from \(y\) if \(x\neq 0\):
  \begin{equation}
    \label{eq:par-metr-0}
    x^2=(x+\rmp y)(x-\rmp y).
  \end{equation}
\item The polar decomposition of a dual number is defined by~\cite{Yaglom79}*{App.~C(30')}:
  \begin{equation}
    \label{eq:p-polar-yaglom}
    u+\rmp v = u(1+\rmp \frac{v}{u}), \quad \text{ thus }
    \quad \modulus{u+\rmp v}=u, \quad \arg(u+\rmp v)=\frac{v}{u}.
  \end{equation}
\item The parabolic wheel looks rectangular, see  Fig.~\ref{fig:rotations}.
\end{itemize}

Those algebraic analogies are quite explicit and widely accepted as an
ultimate source for parabolic
trigonometry~\cites{LavrentShabat77,HerranzOrtegaSantander99a,Yaglom79}.
Moreover, those three rotations are all non-isomorphic symplectic
linear transformations%
\index{symplectic!transformation}%
\index{transformation!symplectic}  of the phase space%
\index{phase!space}%
\index{space!phase}, which makes them useful in
the context of classical and quantum
mechanics~\cites{Kisil10a,Kisil11a}, see
Section~\ref{sec:quantum-mechanics-1}.  There exist also
alternative characters~\cite{Kisil09a} based on M\"obius
transformations with geometric motivation and connections 
to equations of mathematical physics.%
\index{dual!number|)}%
\index{number!dual|)}

\subsection{Induced Representations}
\label{sec:concl-induc-repr}

Let \(G\) be a group, \(H\) be its closed subgroup with the
corresponding homogeneous space \(X=G/H\) with an invariant measure.
Now we wish to linearise the action~\eqref{eq:g-action} through the
\wiki{Induced_representations}{induced
  representations}~\citelist{\cite{Kirillov76}*{\S~13.2}
  \cite{Kisil97c}*{\S~3.1}}.  We define a map \(\map{r}: G\rightarrow H\)
associated to the natural projection \(\map{p}: G \rightarrow G/H\)
and a continuous section \(\map{s}: G/H \rightarrow H\) from the identities:
\begin{equation}
  \label{eq:r-map}
  \map{r}(g)={(\map{s}(\omega))}^{-1}g, \qquad \text{where}\quad]
  \omega=\map{p}(g)\in \Omega .
\end{equation}
Let \(\chi\) be an irreducible representation of \(H\) in a vector space \(V\), then
it induces a representation of \(G\) in the sense of
Mackey~\cite{Kirillov76}*{\S~13.2}.  For a character \(\chi\)
of \(H\)
we can define a \emph{lifting}\index{lifting}
\(\oper{L}_{\chi}: \FSpace{L}{2}(G/H) \rightarrow
\FSpace[\chi]{L}{2}(G)\) as follows:
\begin{equation}
  \label{eq:lifting}
  [\oper{L}_{\chi} f](g)=\chi(\mathbf{r}(g))f(\mathbf{p}(g))\qquad
  \text{where}\quad] f(x)\in \FSpace{L}{2}(G/H). 
\end{equation}
The image space of the lifting \(\oper{L}_{\chi}\)
is invariant under left shifts.  We also define the
\emph{pulling}\index{pulling}
\(\oper{P}:\FSpace[\chi]{L}{2}(G) \rightarrow \FSpace{L}{2}(G/H)\),
which is a left inverse of the lifting and explicitly cab be given,
for example, by \([\oper{P}F](x)=F(\mathbf{s}(x))\).
Then the induced representation on \(\FSpace{L}{2}(G/H)\)
is generated by the formula
\(\uir{}{\chi}(g)=\oper{P}\circ\Lambda (g)\circ\oper{L}_\chi\).
This representation has the
realisation \(\uir{}{\chi}\) in the space 
of \(V\)-valued functions by the
formula~\cite{Kirillov76}*{\S~13.2.(7)--(9)}:
\begin{equation} 
  \label{eq:def-ind}
  [\uir{}{\chi}(g) f](\omega)= \chi(\map{r}(g^{-1} * \map{s}(\omega)))  f(g^{-1}\cdot \omega)\notingiq
\end{equation}
where \(g\in G\), \(\omega\in\Omega\), \(h\in H\) and \(\map{r}: G
\rightarrow H\), \(\map{s}: \Omega \rightarrow G\) are maps defined
above; \(*\)~denotes multiplication on \(G\) and \(\cdot\) denotes the
action~\eqref{eq:g-action} of \(G\) on \(\Omega\).

An alternative construction of induced representations is as
follow~\cite{Kirillov76}*{\S~13.2}. 

Let
\(\FSpace[\chi]{F}{2}(\Space{H}{n})\) be the space of functions on
\(\Space{H}{n}\) having the properties:
\begin{equation}
  \label{eq:induced-prop}
  f(gh)=\chi(h)f(g), \qquad \text{ for all}\quad] g\in \Space{H}{n},\ h\in Z
\end{equation}
and
\begin{equation}
  \label{eq:L2-condition}
  \int_{\Space{R}{2n}} \modulus{f(0,x,y)}^2\,\rmd x\,\rmd y<\infty.
\end{equation}
Then \(\FSpace[\chi]{F}{2}(\Space{H}{n})\) is invariant under the left
shifts and those shifts restricted to
\(\FSpace[\chi]{F}{2}(\Space{H}{n})\) make a representation
\(\uir{}{\chi}\) of \(\Space{H}{n}\) induced by \(\chi\).

Consider this scheme for representations of \(\SL\) induced from
characters of its one-dimensional subgroups. We can notice that only
the subgroup \(K\) requires a complex valued character due to the fact
of its compactness. For subgroups \(N'\) and \(\Aprime\) we can
consider characters of all three types---elliptic, parabolic and
hyperbolic.  Therefore we have seven essentially different induced
representations. We will write explicitly only three of them here.

\begin{example}%
  \index{representation!$\SL$ group|(}%
  \index{$\SL$ group!representation|(}
  Consider the subgroup \(H=K\), due to its compactness we are limited
  to complex valued characters of \(K\) only. All of them are of the
  form \(\chi_k\):
  \begin{equation}
    \label{eq:k-character}
    \chi_k\begin{pmatrix}
      \cos t &  \sin t\\
      -\sin t & \cos t
    \end{pmatrix}= \rme^{-\rmi k t}, \qquad \text{ where}\quad]
    k\in\Space{Z}{}.
  \end{equation}
  Using the explicit form~\eqref{eq:s-map} of the map \(s\) we find 
  the map \(r\) given in~\eqref{eq:r-map} as follows:
  \begin{displaymath}
    r
    \begin{pmatrix}
      a&b\\c&d
    \end{pmatrix}
    =\frac{1}{\sqrt{c^2+d^2}}
    \begin{pmatrix}
      d&-c\\c&d
    \end{pmatrix}\in K.
  \end{displaymath}
  Therefore:
  \begin{displaymath}
    \map{r}(g^{-1} * \map{s}(u,v))  =  
    \frac{1}{\sqrt{(c u+d)^2 +(cv)^2}}
    \begin{pmatrix}
      cu+d&-cv\\cv&cu+d
    \end{pmatrix}\notingiq
  \end{displaymath} 
  where  \(g^{-1}=    \begin{pmatrix}
      a&b\\c&d
    \end{pmatrix}\in\SL\).
    Substituting this into~\eqref{eq:k-character} and combining with
    the M\"obius transformation of the domain~\eqref{eq:sl2-act} we
    get the explicit realisation \(\uir{}{k}{}\) of the induced
    representation~\eqref{eq:def-ind}:
  \begin{equation}
    \label{eq:discrete}
    \uir{}{k}{}(g) f(w)=\frac{\modulus{cw+d}^k}{(cw+d)^k}f\left(\frac{aw+b}{cw+d}\right),
    \quad \text{ where}\quad] g^{-1}=\begin{pmatrix}a&b\\c&d
    \end{pmatrix}, \ w=u+\rmi v.
  \end{equation}
  This representation acts on complex valued functions in the upper
  half-plane \(\Space[+]{R}{2}=\SL/K\) and belongs to the discrete
  series%
  \index{representation!discrete series}%
  ~\cite{Lang85}*{\S~IX.2}.
  It is common to get rid of the factor \(\modulus{cw+d}^k\) from that
  expression in order to keep analyticity and we will follow this
  practise for a convenience as well.
\end{example}

\begin{example}
  \label{ex:n-induced}
  In the case of the subgroup \(N\) there is a wider choice of
  possible characters.
  \begin{enumerate}
  \item Traditionally only complex valued characters of the subgroup
    \(N\) are considered, they are:
    \begin{equation}
      \label{eq:np-character}
      \chi_{\tau}^{\Space{C}{}}\begin{pmatrix}
        1 &  0\\
        t & 1
      \end{pmatrix}= \rme^{\rmi \tau t}, \qquad \text{ where}\quad]
      \tau\in\Space{R}{}.
    \end{equation}
    A direct calculation shows that:
    \begin{displaymath}
      r
      \begin{pmatrix}
        a&b\\c&d
      \end{pmatrix}
      =
      \begin{pmatrix}
        1&0\\\frac{c}{d}&1
      \end{pmatrix}\in N'.
    \end{displaymath}
    Thus:
    \begin{equation}
      \label{eq:np-char-part}
      \map{r}(g^{-1}*\map{s}(u,v))=
      \begin{pmatrix}
        1&0\\\frac{cv}{d+cu}&1
      \end{pmatrix}, \quad\text{ where}\quad] g^{-1}=    \begin{pmatrix}
        a&b\\c&d
      \end{pmatrix}.
    \end{equation}
    A substitution of this value into the
    character~\eqref{eq:np-character} together with the M\"obius
    transformation~\eqref{eq:sl2-act} we obtain the next realisation of~\eqref{eq:def-ind}:
    \begin{displaymath}
      \uir{\Space{C}{}}{\tau}(g) f(w)= \exp\left(\rmi\frac{\tau c v}{cu+d} \right)
      f\left(\frac{aw+b}{cw+d}\right)\notingiq
    \end{displaymath}
    where \(w=u+\rmp v\) and \(g^{-1}=\begin{pmatrix}a&b\\c&d
    \end{pmatrix}\in\SL\).
    The representation acts on the space of \emph{complex} valued
    functions on the upper half-plane \(\Space[+]{R}{2}\), which is
    a subset of \emph{dual} numbers%
    \index{dual!number|(}%
    \index{number!dual|(} as a homogeneous space \(\SL/N'\).
    The mixture of complex and dual numbers in the same expression is
    confusing.
  \item The parabolic character \(\chi_{\tau}\) with the algebraic flavour
    is provided by multiplication~\eqref{eq:parab-rot-triv} with the
    dual number:
    \begin{displaymath}
      \chi_{\tau}\begin{pmatrix}
        1 &  0\\
        t & 1
      \end{pmatrix}= \rme^{\rmp \tau t}=1+\rmp \tau t, \qquad \text{ where}\quad]
      \tau\in\Space{R}{}.
    \end{displaymath}
    If we substitute the value~\eqref{eq:np-char-part} into this
    character, then we receive the representation:
    \begin{displaymath}
      \uir{}{\tau}{}(g) f(w)= \left(1+\rmp\frac{\tau c v}{cu+d} \right)
      f\left(\frac{aw+b}{cw+d}\right)\notingiq
    \end{displaymath}
    where \(w\), \(\tau\) and \(g\) are as above.  The representation
    is defined on the space of dual numbers valued functions on the
    upper half-plane of dual numbers.  Thus expression contains only
    dual numbers with their usual algebraic operations. Thus it is
    linear with respect to them.
  \end{enumerate}
\end{example}
All characters in the previous Example are unitary. Then, the general
scheme~\cite{Kirillov76}*{\S~13.2} implies
unitarity  of induced representations in suitable senses.
\begin{theorem}[\cite{Kisil09c}]
  \label{th:unitarity}
  Both representations of \(\SL\) from Example~\ref{ex:n-induced}
  are unitary on the space of function on the upper half-plane
  \(\Space[+]{R}{2}\) of dual numbers with the inner product:
  \begin{equation}
    \label{eq:inner-product}
    \scalar{f_1}{f_2}=\int_{\Space[+]{R}{2}} f_1(w)
    \bar{f}_2(w)\,\frac{\rmd u\,\rmd v}{v^2}, \qquad \text{ where}\quad] w=u+\rmp v\notingiq
  \end{equation}
  and we use the conjugation and multiplication of functions' values
  in algebras of complex and dual numbers for representations
  \(\uir{\Space{C}{}}{\tau}\) and \(\uir{}{\tau}\) respectively.
\end{theorem}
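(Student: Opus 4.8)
The plan is to verify directly that each operator $\uir{}{\tau}(g)$ preserves the form~\eqref{eq:inner-product}, that is $\scalar{\uir{}{\tau}(g)f_1}{\uir{}{\tau}(g)f_2}=\scalar{f_1}{f_2}$ for all $g\in\SL$. Both representations have the common shape $[\uir{}{\tau}(g)f](w)=\chi(w,g)\,f(g^{-1}\cdot w)$, with a scalar multiplier $\chi(w,g)$ (complex-valued for $\uir{\Space{C}{}}{\tau}$, dual-valued for $\uir{}{\tau}$) and the M\"obius action $g^{-1}\cdot w=\frac{aw+b}{cw+d}$, where $g^{-1}=\begin{pmatrix}a&b\\c&d\end{pmatrix}$. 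Hence the two cases run in parallel and rest on two facts: invariance of the measure $\frac{\rmd u\,\rmd v}{v^2}$ under the parabolic M\"obius action, and unimodularity of $\chi(w,g)$ in the relevant number system.

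First I would establish the invariance of the measure. By Prop.~\ref{pr:sl2-act-brute} with $\sigma=0$ the action reads $u\mapsto\frac{au+b}{cu+d}$, $v\mapsto\frac{v}{(cu+d)^2}$. A direct Jacobian computation, using $ad-bc=1$, gives $\rmd u'\,\rmd v'=(cu+d)^{-4}\,\rmd u\,\rmd v$ while $(v')^{-2}=(cu+d)^{4}\,v^{-2}$, so that $\frac{\rmd u'\,\rmd v'}{(v')^2}=\frac{\rmd u\,\rmd v}{v^2}$. This is precisely the invariant measure on the homogeneous space $\SL/N'$ underlying the induced-representation scheme of~\cite{Kirillov76}*{\S~13.2}.

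Next I would check the multiplier. In the complex-character case $\chi(w,g)=\rme^{\rmi\tau cv/(cu+d)}$, so $\chi(w,g)\overline{\chi(w,g)}=\rme^{\rmi\tau cv/(cu+d)}\rme^{-\rmi\tau cv/(cu+d)}=1$. In the dual-character case $\chi(w,g)=1+\rmp\frac{\tau cv}{cu+d}$, and with the dual conjugation $\overline{1+\rmp s}=1-\rmp s$ one finds $\chi(w,g)\overline{\chi(w,g)}=(1+\rmp s)(1-\rmp s)=1-\rmp^2 s^2=1$ since $\rmp^2=0$. Thus $\chi$ is unimodular in each algebra. Expanding $\scalar{\uir{}{\tau}(g)f_1}{\uir{}{\tau}(g)f_2}$ and using commutativity of the algebra to bring $\chi$ and $\overline{\chi}$ together, the multiplier collapses to $1$ pointwise and leaves $\int_{\Space[+]{R}{2}} f_1(g^{-1}\cdot w)\,\overline{f_2(g^{-1}\cdot w)}\,\frac{\rmd u\,\rmd v}{v^2}$. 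The substitution $w'=g^{-1}\cdot w$ together with the measure invariance then returns $\scalar{f_1}{f_2}$.

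The one genuinely delicate point, and the conceptual core of the statement, is the dual-number case: here the form~\eqref{eq:inner-product} is \emph{dual-number valued}, so unitarity must be understood as preservation of a dual-valued Hermitian form, and the cancellation $\chi\overline{\chi}=1$ hinges entirely on $\rmp^2=0$. I would therefore state this notion of unitarity explicitly and keep track of the dual-valued multiplication throughout. Everything else\,--\,convergence of the integrals on the stated function space and consistency with the abstract Mackey construction\,--\,is routine once the invariant measure and the unimodular multiplier are in place.
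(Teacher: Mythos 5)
Your proof is correct and is essentially the argument the paper itself relies on: the text establishes this theorem only by noting that the characters are unitary and appealing to the general Mackey induction scheme, and your two explicit ingredients\,--\,invariance of \(\rmd u\,\rmd v/v^2\) under the parabolic M\"obius action (Jacobian \((cu+d)^{-4}\) cancelling against \((v')^{-2}\)) and unimodularity of the multiplier, which in the dual case reduces to \((1+\rmp s)(1-\rmp s)=1\) via \(\rmp^2=0\)\,--\,are precisely what that scheme uses. Your caveat that in the dual-valued case unitarity means preservation of a dual-number-valued Hermitian form is the right reading and matches the paper's subsequent remark that the form \eqref{eq:inner-product} fails to be positive definite there.
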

The inner product~\eqref{eq:inner-product} is positive defined for
the representation \(\uir{\Space{C}{}}{\tau}\) but is not for the
others. The respective spaces are parabolic cousins of the \emph{Krein
  spaces}%
\index{Krein!space}%
\index{space!Krein}~\cite{ArovDym08}, which are hyperbolic in our sense.
  \index{representation!$\SL$ group|)}%
  \index{$\SL$ group!representation|)}%
\index{dual!number|)}%
\index{number!dual|)}

\subsection{Similarity and Correspondence: Ladder Operators}
\label{sec:correspondence}

From the above observation we can deduce the following empirical
principle, which has a heuristic value.

\begin{principle}[Similarity and correspondence]%
  \index{similarity|see{principle of similarity and correspondence}}%
  \index{correspondence|see{principle of similarity and correspondence}}%
  \index{principle!similarity and correspondence}%
  \label{pr:simil-corr-principle}
  \begin{enumerate}
  \item Subgroups \(K\), \(N'\) and \(\Aprime\) play a similar r\^ole in the
    structure of the group \(\SL\) and its representations.
  \item The subgroups shall be swapped simultaneously with the
    respective replacement of hypercomplex%
    \index{number!hypercomplex}%
    \index{hypercomplex!number} unit \(\alli\).
  \end{enumerate}
\end{principle}
The first part of the Principle (similarity) does not look sound
alone. It is enough to mention that the subgroup \(K\) is compact (and
thus its spectrum is discrete) while two other subgroups are not.
However, in a conjunction with the second part (correspondence) the
Principle have received the following confirmations so far,
see~\cite{Kisil09c} for details:
\begin{itemize}
\item The action of \(\SL\) on the homogeneous space \(\SL/H\) for
  \(H=K\), \(N'\) or \(\Aprime\) is given by linear-fractional
  transformations of complex, dual or double numbers respectively.
  Fig.~\ref{fig:fix-groups-all-spaces} provides an illustration.
\item Subgroups \(K\), \(N'\) or \(\Aprime\) are isomorphic to the groups of
  unitary rotations of respective unit cycles in complex, dual or
  double numbers.   
\item Representations induced from subgroups \(K\), \(N'\) or \(\Aprime\)
  are unitary if the inner product spaces of functions with values in
  complex, dual or double numbers.
\end{itemize}
\begin{remark}
  The principle of similarity and correspondence resembles
  \emph{supersymmetry}%
  \index{supersymmetry} between bosons and fermions in particle physics, but
  we have similarity between three different types of entities in our case.
\end{remark}

\subsection{Ladder Operators}
\label{sec:ladder-operators}
We present another illustration to the Principle~\ref{pr:simil-corr-principle}. 
Let \(\uir{}{}\) be a representation of the group \(\SL\) in a space
\(V\). Consider the derived representation \(\rmd\uir{}{}\) of the Lie
algebra \(\algebra{sl}_2\)~\cite{Lang85}*{\S~VI.1}, that is:
\begin{equation}
  \label{eq:deriver-rep-defn}
  \rmd\uir{}{}(X)=\left.\frac{\rmd}{\rmd t} \uir{}{}( \rme^{tX})\right|_{t=0},\qquad
  \text{for any } \quad X\in \algebra{sl}_2.
\end{equation}
We also denote
\(\tilde{X}=\rmd\uir{}{}(X)\) for \(X\in\algebra{sl}_2\). To see the
structure of the representation \(\uir{}{}\) we can decompose the
space \(V\) into eigenspaces of the operator \(\tilde{X}\) for
a suitable \(X\in \algebra{sl}_2\).

\subsubsection{Elliptic Ladder Operators}
\label{sec:ellipt-ladd-oper}

It would not be surprising that we are going to consider three cases:
Let \(X=Z\) be a generator of the subgroup%
\index{generator!of a subgroup}%
\index{subgroup!generator}
\(K\)~\eqref{eq:k-subgroup}. Since this is a compact subgroup the
corresponding eigenspaces%
\index{eigenvalue} \(\tilde{Z} v_k=\rmi k v_k\) are
parametrised by an integer \(k\in\Space{Z}{}\).  The
\emph{raising/lowering} or \emph{ladder operators}%
\index{ladder operator|(}%
\index{operator!ladder|(}%
\index{operator!raising|see{ladder operator}}%
\index{operator!lowering|see{ladder operator}}%
\index{operator!creation|see{ladder operator}}%
\index{operator!annihilation|see{ladder operator}}%
\index{raising operator|see{ladder operator}}%
\index{lowering operator|see{ladder operator}}%
\index{creation operator|see{ladder operator}}%
\index{annihilation operator|see{ladder operator}}
\(\ladder{\pm}\)~\citelist{\cite{Lang85}*{\S~VI.2}
  \cite{MTaylor86}*{\S~8.2}} are defined by the following
commutation relations:
\begin{equation}
  \label{eq:raising-lowering}
  [\tilde{Z},\ladder{\pm}]=\lambda_\pm \ladder{\pm}. 
\end{equation}
In other words \(\ladder{\pm}\) are eigenvectors for operators 
\(\loglike{ad}Z\) of adjoint representation of \(\algebra{sl}_2\)~\cite{Lang85}*{\S~VI.2}.
\begin{remark}
  The existence of such ladder operators follows from the general
  properties of Lie algebras if the element \(X\in\algebra{sl}_2\)
  belongs to a \emph{Cartan subalgebra}%
  \index{Cartan!subalgebra}%
  \index{subalgebra!Cartan}. This is the case for vectors \(Z\)
  and \(B\), which are the only two non-isomorphic types of
  Cartan subalgebras in \(\algebra{sl}_2\). However, the third case
  considered in this paper, the parabolic vector \(B+Z/2\), does
  not belong to a Cartan subalgebra, yet a sort of ladder
  operators is still possible with dual number coefficients.
  Moreover, for the hyperbolic vector \(B\), besides the standard
  ladder operators an additional pair with double number
  coefficients will also be described.
\end{remark}

From the commutators~\eqref{eq:raising-lowering} we deduce that
\(\ladder{+} v_k\) are eigenvectors of \(\tilde{Z}\) as well:
\begin{eqnarray}
  \tilde{Z}(\ladder{+} v_k)&=&(\ladder{+}\tilde{Z}+\lambda_+\ladder{+})v_k=\ladder{+}(\tilde{Z}v_k)+\lambda_+\ladder{+}v_k
  =\rmi k \ladder{+}v_k+\lambda_+\ladder{+}v_k\nonumber \\
  &=&(\rmi k+\lambda_+)\ladder{+}v_k.\label{eq:ladder-action}
\end{eqnarray}
Thus action of ladder operators on respective eigenspaces can be
visualised by the diagram:
\begin{equation}
  \label{eq:ladder-chain-1D}
  \xymatrix@1{
    \ldots\, \ar@<.4ex>[r]^{\ladder{+}} &
    \,V_{\rmi k-\lambda}\,  \ar@<.4ex>[l]^{\ladder{-}}\ar@<.4ex>[r]^{\ladder{+}} &
    \,V_{\rmi k}\, \ar@<.4ex>[l]^{\ladder{-}} \ar@<.4ex>[r]^{\ladder{+}} &
    \,V_{\rmi k+ \lambda}\,\ar@<.4ex>[l]^{\ladder{-}}  \ar@<.4ex>[r]^{\ladder{+}}
    &
    \,\ldots\ar@<.4ex>[l]^{\ladder{-}}}
\end{equation}
Assuming \(\ladder{+}=a\tilde{A}+b\tilde{B}+c\tilde{Z}\) from the
relations~\eqref{eq:sl2-commutator} and defining
condition~\eqref{eq:raising-lowering} we obtain linear equations
with unknown \(a\), \(b\) and \(c\): 
\begin{displaymath}
  c=0, \qquad 2a=\lambda_+ b, \qquad -2b=\lambda_+ a.
\end{displaymath}
The equations have a solution if and only if \(\lambda_+^2+4=0\),
therefore the raising/lowering operators are 
\begin{equation}
  \label{eq:elliptic-ladder}
  \ladder{\pm}=\pm\rmi \tilde{A}+\tilde{B}.
\end{equation}

\subsubsection{Hyperbolic Ladder Operators}
\label{sec:hyperb-ladd-oper}
Consider the case \(X=2B\) of a generator of the subgroup
\(\Aprime\)~\eqref{eq:ap-subgroup}. The subgroup is not compact
and eigenvalues of the operator \(\tilde{B}\) can be arbitrary,
however raising/lowering operators are still
important~\citelist{\cite{HoweTan92}*{\S~II.1}
  \cite{Mazorchuk09a}*{\S~1.1}}.  We again seek a solution in the
form \(\ladder[h]{+}=a\tilde{A}+b\tilde{B}+c\tilde{Z}\) for the commutator
\([2\tilde{B},\ladder[h]{+}]=\lambda \ladder[h]{+}\). We will get the system:
\begin{displaymath}
  4c=\lambda a,\qquad
  b=0,\qquad
  {a}=\lambda c.
\end{displaymath}
A solution exists if and only if \(\lambda^2=4\). There are
obvious values \(\lambda=\pm 2\) with the ladder operators
\(\ladder[h]{\pm}=\pm2\tilde{A}+\tilde{Z}\),
see~\citelist{\cite{HoweTan92}*{\S~II.1}
  \cite{Mazorchuk09a}*{\S~1.1}}. Each indecomposable
\(\algebra{sl}_2\)-module is formed by a one-dimensional chain of
eigenvalues\index{eigenvalue} with a transitive action of ladder
operators.

Admitting double numbers%
\index{number!double}%
\index{double!number} we have an extra possibility to satisfy
\(\lambda^2=4\) with values \(\lambda=\pm2\rmh\).  Then there is an
additional pair of hyperbolic ladder operators
\(\ladder[\rmh]{\pm}=\pm2\rmh\tilde{A}+\tilde{Z}\), which shift eigenvectors
in the ``orthogonal'' direction to the standard operators \(\ladder[h]{\pm}\).
Therefore an indecomposable \(\algebra{sl}_2\)-module can be
parametrised by a two-dimensional lattice of eigenvalues on the
double number plane,  see
Fig.~\ref{fig:2D-lattice}

\begin{figure}[htbp]
  \centering
  \(  \xymatrix@R=2.5em@C=1.5em@M=.5em{
    & 
    \,\ldots\, \ar@<.4ex>[d]^{\ladder[\rmh]{+}} &  
    \,\ldots\, \ar@<.4ex>[d]^{\ladder[\rmh]{+}} & 
    \,\ldots\,  \ar@<.4ex>[d]^{\ladder[\rmh]{+}}  & 
    \\
    \ldots\, \ar@<.4ex>[r]^-{\ladder[h]{+}} & 
    \,V_{(n-2)+\rmh (k-2)}\,  \ar@<.4ex>[l]^-{\ladder[h]{-}}\ar@<.4ex>[r]^{\ladder[h]{+}}
    \ar@<.4ex>[u]^{\ladder[\rmh]{-}} \ar@<.4ex>[d]^{\ladder[\rmh]{+}} &  
    \,V_{n+\rmh (k-2)}\, \ar@<.4ex>[l]^{\ladder[h]{-}} \ar@<.4ex>[r]^{\ladder[h]{+}}
    \ar@<.4ex>[u]^{\ladder[\rmh]{-}} \ar@<.4ex>[d]^{\ladder[\rmh]{+}} & 
    \,V_{(n+2)+\rmh (k-2)}\,\ar@<.4ex>[l]^{\ladder[h]{-}}  \ar@<.4ex>[r]^-{\ladder[h]{+}}
    \ar@<.4ex>[u]^{\ladder[\rmh]{-}} \ar@<.4ex>[d]^{\ladder[\rmh]{+}}    & 
    \,\ldots\ar@<.4ex>[l]^-{\ladder[h]{-}}\\
    \ldots\, \ar@<.4ex>[r]^-{\ladder[h]{+}} & 
    \,V_{(n-2)+\rmh k}\,  \ar@<.4ex>[l]^-{\ladder[h]{-}}\ar@<.4ex>[r]^{\ladder[h]{+}}
    \ar@<.4ex>[u]^{\ladder[\rmh]{-}} \ar@<.4ex>[d]^{\ladder[\rmh]{+}} &  
    \,V_{n+\rmh k}\, \ar@<.4ex>[l]^{\ladder[h]{-}} \ar@<.4ex>[r]^{\ladder[h]{+}}
    \ar@<.4ex>[u]^{\ladder[\rmh]{-}} \ar@<.4ex>[d]^{\ladder[\rmh]{+}}& 
    \,V_{(n+2)+\rmh k}\,\ar@<.4ex>[l]^{\ladder[h]{-}}  \ar@<.4ex>[r]^-{\ladder[h]{+}}
    \ar@<.4ex>[u]^{\ladder[\rmh]{-}} \ar@<.4ex>[d]^{\ladder[\rmh]{+}}    & 
    \,\ldots\ar@<.4ex>[l]^-{\ladder[h]{-}}\\
    \ldots\, \ar@<.4ex>[r]^-{\ladder[h]{+}} & 
    \,V_{(n-2)+\rmh (k+2)}\,  \ar@<.4ex>[l]^-{\ladder[h]{-}}\ar@<.4ex>[r]^{\ladder[h]{+}}
    \ar@<.4ex>[u]^{\ladder[\rmh]{-}} \ar@<.4ex>[d]^{\ladder[\rmh]{+}} &  
    \,V_{n+\rmh (k+2)}\, \ar@<.4ex>[l]^{\ladder[h]{-}} \ar@<.4ex>[r]^{\ladder[h]{+}} 
    \ar@<.4ex>[u]^{\ladder[\rmh]{-}} \ar@<.4ex>[d]^{\ladder[\rmh]{+}}& 
    \,V_{(n+2)+\rmh (k+2)}\,\ar@<.4ex>[l]^{\ladder[h]{-}}  \ar@<.4ex>[r]^-{\ladder[h]{+}}
    \ar@<.4ex>[u]^{\ladder[\rmh]{-}} \ar@<.4ex>[d]^{\ladder[\rmh]{+}}    & 
    \,\ldots\ar@<.4ex>[l]^-{\ladder[h]{-}}\\
    & 
    \,\ldots\, \ar@<.4ex>[u]^{\ladder[\rmh]{-}} &  
    \,\ldots\, \ar@<.4ex>[u]^{\ladder[\rmh]{-}} & 
    \,\ldots\,  \ar@<.4ex>[u]^{\ladder[\rmh]{-}}  & }
  \)
  \caption[The action of hyperbolic ladder operators on a 2D
  lattice of eigenspaces]{The action of hyperbolic ladder operators on a 2D
    lattice of eigenspaces. Operators \(\ladder[h]{\pm}\) move the
    eigenvalues by \(2\), 
    making shifts in the horizontal direction. Operators
    \(\ladder[\rmh]{\pm}\) change the eigenvalues by \(2\rmh\), 
    shown as vertical shifts.}  
  \label{fig:2D-lattice}
\end{figure}
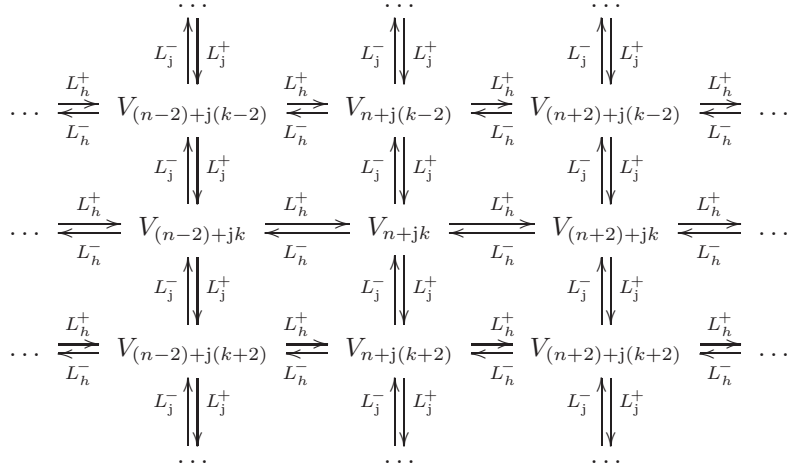

\subsubsection{Parabolic Ladder Operators}
\label{sec:parab-ladd-oper}

Finally consider the case of a generator \(X=-B+Z/2\) of the
subgroup \(N'\)~\eqref{eq:np-subgroup}. According to the above
procedure we get the equations:
\begin{displaymath}
  b+2c=\lambda a,\qquad
  -a=\lambda b,\qquad
  \frac{a}{2}=\lambda c\notingiq
\end{displaymath}
which can be resolved if and only if \(\lambda^2=0\). If we restrict
ourselves with the only real (complex) root \(\lambda=0\), then the
corresponding operators \(\ladder[p]{\pm}=-\tilde{B}+\tilde{Z}/2\)
will not affect eigenvalues and thus are useless in the above context.
However the dual number roots \(\lambda =\pm \rmp t\),
\(t\in\Space{R}{}\) lead to the operators \(\ladder[\rmp]{\pm}=\pm
\rmp t\tilde{A}-\tilde{B}+\tilde{Z}/2\). These operators are suitable
to build an \(\algebra{sl}_2\)-modules with a one-dimensional chain of
eigenvalues.
\begin{remark}
  \label{re:hyper-number-necessity}
  The following r\^oles of hypercomplex numbers are noteworthy%
  \index{number!hypercomplex}%
  \index{hypercomplex!number}:
  \begin{itemize}
  \item the introduction of complex numbers is a necessity for the
    \emph{existence} of ladder operators in the elliptic
    case;
  \item in the parabolic case we need dual numbers%
    \index{dual!number}%
    \index{number!dual} to make
    ladder operators \emph{useful};
  \item in the hyperbolic case double numbers%
    \index{number!double}%
    \index{double!number} are not required 
    neither for the existence or for the usability of ladder operators, but
    they do provide an enhancement. 
  \end{itemize}
\end{remark}
We summarise the above consideration with a focus on the Principle of
similarity and correspondence:%
\index{principle!similarity and correspondence}
\begin{proposition}
  \label{pr:ladder-sim-eq}
  Let a vector \(X\in\algebra{sl}_2\) generates the subgroup \(K\),
  \(N'\) or \(\Aprime\), that is \(X=Z\), \(B-Z/2\), or
  \(B\) respectively. Let \(\alli\) be the respective hypercomplex unit.   

  Then raising/lowering operators \(\ladder{\pm}\) satisfying to the
  commutation relation:
  \begin{displaymath}
    [X,\ladder{\pm}]=\pm\alli \ladder{\pm},\qquad [\ladder{-},\ladder{+}]=2\alli X.
  \end{displaymath}
  are:
  \begin{displaymath}
    \ladder{\pm}=\pm\alli \tilde{A} +\tilde{Y}.
  \end{displaymath}
  Here \(Y\in\algebra{sl}_2\) is a linear combination of  \(B\) and
  \(Z\) with the properties:
  \begin{itemize}
  \item \(Y=[A,X]\).
  \item \(X=[A,Y]\).
  \item Killings form \(K(X,Y)\)~\cite{Kirillov76}*{\S~6.2} vanishes.
  \end{itemize}
  Any of the above properties defines the vector \(Y\in\loglike{span}\{B,Z\}\)
  up to a real constant factor.
\end{proposition}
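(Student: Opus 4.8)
The plan is to reduce everything to bracket identities inside $\algebra{sl}_2$ and then let the single law $\alli^2=\sigma$ do the work. Since $\tilde X=\rmd\uir{}{}(X)$ defines a homomorphism of Lie algebras, one has $[\tilde U,\tilde V]=\widetilde{[U,V]}$ for all $U,V\in\algebra{sl}_2$, while the hypercomplex unit $\alli$ is a central scalar commuting with every operator. Hence it suffices to establish the two displayed commutation relations at the level of the structure constants~\eqref{eq:sl2-commutator}, after which the representation is applied termwise.

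First I would pin down the line of $Y$ and show that the three characterizations agree. From~\eqref{eq:sl2-commutator} one computes $[A,B]=-\frac{1}{2}Z$ and $[A,Z]=-2B$, so $\loglike{ad}A$ maps $\loglike{span}\{B,Z\}$ into itself and $(\loglike{ad}A)^2=I$ there. Consequently $Y=[A,X]$ immediately gives $[A,Y]=(\loglike{ad}A)^2X=X$, which is exactly the second bullet, so the first two conditions cut out the same line. For the Killing form I would use its invariance, which gives $K(X,[A,X])=-K([A,X],X)=-K(X,[A,X])$, hence $K(X,[A,X])=0$; since the restriction of $K$ to $\loglike{span}\{B,Z\}$ is non-degenerate (it is diagonal in the basis $B,Z$ with non-zero diagonal entries), the $K$-orthogonal complement of $X$ inside $\loglike{span}\{B,Z\}$ is one-dimensional and therefore coincides with the line through $[A,X]$. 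This proves that each of the three properties determines $Y$ up to a real factor.

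Next I would fix the scale of the generator $X$ and of the representative $Y$ on its line (possibly flipping the sign of $[A,X]$) so that the structural identities
\begin{displaymath}
  [A,X]=-Y,\qquad [A,Y]=-X,\qquad [X,Y]=\sigma A
\end{displaymath}
hold simultaneously; a short case check in the elliptic, parabolic and hyperbolic settings ($X\propto Z$, $B-Z/2$, $B$ with $\sigma=-1,0,1$) confirms that this normalization is available and that the third identity carries exactly the factor $\sigma=\alli^2$. With these in hand the verification is purely formal: for $\ladder{\pm}=\pm\alli\tilde A+\tilde Y$,
\begin{displaymath}
  [\tilde X,\ladder{\pm}]=\pm\alli[\tilde X,\tilde A]+[\tilde X,\tilde Y]
  =\pm\alli\tilde Y+\sigma\tilde A
  =\sigma\tilde A\pm\alli\tilde Y
  =\pm\alli(\pm\alli\tilde A+\tilde Y)=\pm\alli\ladder{\pm},
\end{displaymath}
where the middle step uses $[\tilde X,\tilde A]=\tilde Y$ and $[\tilde X,\tilde Y]=\sigma\tilde A$, and the final step uses $\alli^2=\sigma$. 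Likewise, expanding $[\ladder{-},\ladder{+}]$ the $\tilde A$--$\tilde A$ and $\tilde Y$--$\tilde Y$ terms vanish and the cross terms give $-2\alli\,\widetilde{[A,Y]}=2\alli\tilde X$, as required.

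The main obstacle is bookkeeping of normalizations rather than any conceptual difficulty: the matrix generator of the subgroup is defined only up to a factor (for $K$ this costs a factor of $2$ in the eigenvalue), and one must couple its scaling to that of $Y$ so that $[X,Y]$ emerges as $\sigma A$ with precisely the coefficient $\alli^2$. The conceptual payoff, which is what renders the statement uniform across the EPH cases, is exactly this coincidence: the $\tilde A$-component produced by $[\tilde X,\tilde Y]$ matches the $\alli^2\tilde A=\sigma\tilde A$ term coming from $\pm\alli\ladder{\pm}$, so the single algebraic relation $\alli^2=\sigma$ of the chosen hypercomplex unit is what closes the ladder identity in all three geometries at once.
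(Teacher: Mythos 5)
Your proof is correct, but it takes a genuinely different route from the paper's. The paper gives no self-contained proof of Prop.~\ref{pr:ladder-sim-eq}: it is offered as a summary of the three separate computations in \S\S~\ref{sec:ellipt-ladd-oper}--\ref{sec:parab-ladd-oper}, where in each case the ansatz \(\ladder{+}=a\tilde{A}+b\tilde{B}+c\tilde{Z}\) is substituted into \([\tilde{X},\ladder{+}]=\lambda\ladder{+}\) and the resulting \(3\times3\) linear system is solved, producing the compatibility conditions \(\lambda^2+4=0\), \(\lambda^2=4\) and \(\lambda^2=0\); the uniqueness assertion about \(Y\) is stated there without argument. You instead verify both commutation relations once, uniformly, from the normalised structural identities \([A,X]=-Y\), \([A,Y]=-X\), \([X,Y]=\sigma A\), letting the single relation \(\alli^2=\sigma\) close the computation, and you supply the justification the paper omits that the three characterisations of \(Y\) agree: via \((\loglike{ad}A)^2=I\) on \(\loglike{span}\{B,Z\}\) for the first two bullets, and via invariance plus non-degeneracy of the Killing form on \(\loglike{span}\{B,Z\}\) (signature \((1,1)\), so the orthogonal complement of a non-zero vector is a line even when, as in the parabolic case, \(X\) is isotropic and \(Y\propto X\)) for the third. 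Your approach buys a single conceptual reason why one formula covers all three EPH cases; the paper's buys the explicit eigenvalues \(\lambda\), which are what feed the eigenspace diagrams. One point you flag deserves to be made fully explicit: the rescaling of the generator is not optional. With the literal \(X=Z\) of the statement one finds \([Z,\pm\rmi\tilde{A}+\tilde{B}]=\pm2\rmi\,(\pm\rmi\tilde{A}+\tilde{B})\) and \([\ladder{-},\ladder{+}]=\rmi Z\), so the displayed relations hold only after replacing \(Z\) by \(Z/2\); likewise the bullet \(Y=[A,X]\) holds only up to a factor (e.g.\ \([A,Z]=-2B\) while \(Y=B\)), consistently with the proposition's closing sentence. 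Your remark about the factor of \(2\) correctly identifies and repairs this normalisation slip in the statement itself.
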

\index{ladder operator|)}%
\index{operator!ladder|)}

\subsection{Induced Representations of the Heisenberg Group}
\label{sec:induc-repr-heis}
In this subsection we calculate representations of of the Heisenberg
group induced by a complex valued character. Representations induced
by hypercomplex characters and their physical interpretation will be
discussed in the next section. 

Take a maximal (two dimensional) abelian subgroup
\(H=\{(s,0,y)\in\Space{H}{}\}\),
then the homogeneous space can be parametrised by a real number
\(x\).
We define the natural projection \(\map{p}(s,x,y)=x\)
and the continuous section \(\map{s}(x)=(0,x,0)\).
Then the map \(\map{r}: \Space{H}{1}\rightarrow H'_x\)
is \(\map{r}(s,x,y)=(s-\frac{1}{2}xy,0,y)\).
For the character \(\chi_\myhbar(s,0,y)= \rme^{2\pi \rmi (\myhbar s)}\),
the representation of \(\Space{H}{1}\)
on \(\FSpace{L}{2}(\Space{R}{1})\) is:
\begin{equation}
  \label{eq:H1-schroedinger-rep}
  \textstyle[\uir{}{\chi}(s,x,y)f](x')= \exp(2\pi \rmi (\myhbar
  (-s+yx'-\frac{1}{2}xy)))\, f(x'-x).
\end{equation}

Then the Fourier transform \(x\rightarrow q\)
produces the Schr\"odinger representation%
\index{representation!Heisenberg group!Schr\"odinger}%
\index{Schr\"odinger!representation}%
\index{representation!Schr\"odinger}~\cite{Folland89}*{\S~1.3} of
\(\Space{H}{}\)
in \(\FSpace{L}{2}(\Space{R}{})\), that is \cite{Kisil10a}*{(3.5)}:
\begin{equation}
  \label{eq:H1-schroedinger-rep-q-space}
    [\uir{}{\myhbar}(s,x,y) f\,](q)= \rme^{2\pi\rmi\myhbar (s-xy/2)
      +2\pi\rmi x q}\,f(q-\myhbar y).  
  \end{equation}
The variable \(q\)
is treated as the coordinate on the configurational space of a
particle.  The action of the derived representation on the Lie algebra
\(\algebra{h}\) is:
\begin{equation}
  \label{eq:schroedinger-rep-conf-der}
  \uir{}{\myhbar}(X)=2\pi\rmi q,\qquad \uir{}{\myhbar}(Y)=-\myhbar \frac{\rmd}{\rmd q},
  \qquad
  \uir{}{\myhbar}(S)=2\pi\rmi\myhbar I.
\end{equation}

The Shale--Weil theorem~\citelist{\cite{Folland89}*{\S~4.2}
  \cite{Howe80a}*{p.~830}} states that any representation
\(\uir{}{\myhbar}\) of the Heisenberg groups generates a unitary
\emph{oscillator}%
\index{oscillator!representation}%
\index{representation!oscillator!} (or \emph{metaplectic}) representation%
\index{representation!metaplectic|see{oscillator representation}}%
\index{metaplectic representation|see{oscillator representation}}%
\index{representation!Shale--Weil|see{oscillator representation}}%
\index{Shale--Weil representation|see{oscillator representation}}
\(\uir{\text{SW}}{\myhbar}\) of the \(\Mp\),
the two-fold cover of the symplectic group~\cite{Folland89}*{Thm.~4.58}.
The Shale--Weil theorem allows us to expand any representation
\(\uir{}{\myhbar}\) of the Heisenberg group to the representation
\(\uir{2}{\myhbar}=\uir{}{\myhbar}\oplus\uir{\text{SW}}{\myhbar}\) of the
group Schr\"odinger group~\eqref{eq:schrodinger-group}.
Of course, there is the derived form of the Shale--Weil representation
for \(\algebra{g}\). It can often be explicitly written in contrast
to the Shale--Weil representation.
\begin{example}
  TheShale--Weil representation of \(\SL\) in
  \(\FSpace{L}{2}(\Space{R}{})\)  associated  to the Schr\"odinger
  representation~\eqref{eq:H1-schroedinger-rep-q-space} has the
  derived action,  cf.~\citelist{\cite{ATorre08a}*{(2.2)} \cite{Folland89}*{\S~4.3}}:
  \begin{equation}
    \label{eq:shale-weil-der}
    \uir{\text{SW}}{\myhbar}(A) =-\frac{q}{2}\frac{\rmd}{\rmd q}-\frac{1}{4},\quad
    \uir{\text{SW}}{\myhbar}(B)=-\frac{\myhbar\rmi}{8\pi}\frac{\rmd ^2}{\rmd q^2}-\frac{\pi\rmi q^2}{2\myhbar},\quad
    \uir{\text{SW}}{\myhbar}(Z)=\frac{\myhbar\rmi}{4\pi}\frac{\rmd ^2}{\rmd q^2}-\frac{\pi\rmi q^2}{\myhbar}.
  \end{equation}
  We can verify commutators~\eqref{eq:sl2-commutator} and
  \eqref{eq:heisenberg-comm}, \eqref{eq:cross-comm1} for
  operators~\eqref{eq:schroedinger-rep-conf-der}--\eqref{eq:shale-weil-der}.
  It is also obvious that in this representation the following
  algebraic relations hold:
  \begin{eqnarray}
    \label{eq:quadratic-A}
    \qquad\uir{\text{SW}}{\myhbar}(A) &=&
    \frac{\rmi}{4\pi\myhbar}(\uir{}{\myhbar}(X)\uir{}{\myhbar}(Y)-{\textstyle\frac{1}{2}}\uir{}{\myhbar}(S))\\
    &=&\frac{\rmi}{8\pi\myhbar}(\uir{}{\myhbar}(X)\uir{}{\myhbar}(Y)+\uir{}{\myhbar}(Y)\uir{}{\myhbar}(X) ), \nonumber\\ 
    \label{eq:quadratic-B}
    \uir{\text{SW}}{\myhbar}(B) &=&
    \frac{\rmi}{8\pi\myhbar}(\uir{}{\myhbar}(X)^2-\uir{}{\myhbar}(Y)^2), \\
    \label{eq:quadratic-Z}
    \uir{\text{SW}}{\myhbar}(Z)
    &=&\frac{\rmi}{4\pi\myhbar}(\uir{}{\myhbar}(X)^2+\uir{}{\myhbar}(Y)^2). 
  \end{eqnarray}
  Thus it is common in quantum optics to name \(\algebra{g}\) as a Lie
  algebra with  quadratic generators%
  \index{generator!quadratic}%
  \index{quadratic!generator}, see~\cite{Gazeau09a}*{\S~2.2.4}.
\end{example}
Note that \(\uir{\text{SW}}{\myhbar}(Z)\) is the Hamiltonian of the
harmonic oscillator%
\index{harmonic!oscillator}%
\index{oscillator!harmonic} (up to a factor). Then we can consider
\(\uir{\text{SW}}{\myhbar}(B)\) as the Hamiltonian of a repulsive
(hyperbolic) oscillator. The operator
\(\uir{\text{SW}}{\myhbar}(B-Z/2)=\frac{\myhbar\rmi}{4\pi}\frac{\rmd ^2}{\rmd q^2}\)
is the parabolic analog. A graphical representation of all three
transformations defined by those Hamiltonian is given in
Fig.~\ref{fig:rotations} and a further discussion of these
Hamiltonians can be found in~\cite{Wulfman10a}*{\S~3.8}.

An important observation, which is often missed, is that the
three linear symplectic transformations are unitary rotations in the
corresponding hypercomplex algebra, cf.~\cite{Kisil09c}*{\S~3}. This
means, that the symplectomorphisms generated by operators \(Z\),
\(B-Z/2\), \(B\) within time \(t\) coincide with the
multiplication of hypercomplex number \(q+\alli p\) by \( \rme^{\alli
  t}\), see Subsection~\ref{sec:hyperc-char} and
Fig.~\ref{fig:rotations}, which is just another illustration of the
Similarity and Correspondence Principle~\ref{pr:simil-corr-principle}.
\begin{example}
  There are many advantages of considering representations of the
  Heisenberg group on the phase
  space~\citelist{\cite{Howe80b}*{\S~1.7}
    \cite{Folland89}*{\S~1.6} \cite{deGosson08a}}. A convenient
  expression for Fock--Segal--Bargmann%
  \index{Fock--Segal--Bargmann!representation}%
  \index{representation!Fock--Segal--Bargmann}%
  \index{FSB!representation|see{Fock--Segal--Bargmann representation}}%
  \index{representation!FSB|see{Fock--Segal--Bargmann representation}}
  (FSB) representation on the phase space%
  \index{phase!space}%
  \index{space!phase} is,
  cf.~\S~\ref{sec:schr-segal-bargm} and~\citelist{\cite{Kisil02e}*{(2.9)}
  \cite{deGosson08a}*{(1)}}:
  \begin{equation}
    \label{eq:stone-inf}
    \textstyle
    [\uir{}{F}(s,x,y) f] (q,p)=
     \rme^{-2\pi\rmi(\myhbar s+qx+py)}
    f \left(q-\frac{\myhbar}{2} y, p+\frac{\myhbar}{2} x\right).
  \end{equation}
  Then the derived representation of \(\algebra{h}\) is:
  \begin{equation}
    \label{eq:fock-rep-conf-der-par1}
    \textstyle
    \uir{}{F}(X)=-2\pi\rmi q+\frac{\myhbar}{2}\partial_{p},\qquad
    \uir{}{F}(Y)=-2\pi\rmi p-\frac{\myhbar}{2}\partial_{q},
    \qquad
    \uir{}{F}(S)=-2\pi\rmi\myhbar I.
  \end{equation}
  This produces the derived form of the Shale--Weil representation:
  \begin{equation}
    \label{eq:shale-weil-der-ell}
    \textstyle
    \uir{\text{SW}}{F}(A) =\frac{1}{2}\left(q\partial_{q}-p\partial_{p}\right),\quad
    \uir{\text{SW}}{F}(B)=-\frac{1}{2}\left(p\partial_{q}+q\partial_{p}\right),\quad
    \uir{\text{SW}}{F}(Z)=p\partial_{q}-q\partial_{p}.
  \end{equation}
  Note that this representation does not contain the parameter
  \(\myhbar\) unlike the equivalent
  representation~\eqref{eq:shale-weil-der}. Thus, the FSB model%
  \index{Fock--Segal--Bargmann!space}%
  \index{space!Fock--Segal--Bargmann} explicitly shows the equivalence
  of \(\uir{\text{SW}}{\myhbar_1}\) and \(\uir{\text{SW}}{\myhbar_2}\)
  if \(\myhbar_1 \myhbar_2>0\)~\cite{Folland89}*{Thm.~4.57}.
\end{example}

As we will also see below the FSB-type representations%
\index{Fock--Segal--Bargmann!representation}%
\index{representation!Fock--Segal--Bargmann} in hypercomplex numbers
produce almost the same Shale--Weil representations.

\section{Mechanics and Hypercomplex Numbers}
\label{sec:quantum-mechanics-1}

Complex valued representations of the Heisenberg group%
\index{Heisenberg!group}%
\index{group!Heisenberg} provide a natural framework for quantum
mechanics~\cites{Howe80b,Folland89}. These representations provide the
fundamental example of induced representations%
\index{representation!induced}%
\index{induced!representation}, the Kirillov orbit method%
\index{orbit!method}%
\index{method!orbits, of} and geometrical quantisation%
\index{geometrical quantisation}%
\index{quantisation!geometrical}
technique~\cites{Kirillov99,Kirillov94a}.  Following the presentation
in Section~\ref{sec:induc-repr} we will consider representations of
the Heisenberg group which are induced by hypercomplex characters of
its centre: complex (which correspond to the elliptic case), dual
(parabolic) and double (hyperbolic).

To describe dynamics of a physical system we use a universal equation
based on inner derivations%
\index{inner!derivation}%
\index{derivation!inner} (commutator%
\index{commutator}) of the convolution
algebra~\citelist{\cite{Kisil00a} \cite{Kisil02e}}.  The complex
valued representations produce the standard framework for quantum
mechanics with the Heisenberg dynamical equation%
\index{Heisenberg!equation}%
\index{equation!Heisenberg}~\cite{Vourdas06a}.

The double number%
\index{number!double}%
\index{double!number} valued representations, with
the hyperbolic unit \(\rmh^2=1\), is a natural source of hyperbolic
quantum mechanics developed for a
while~\cites{Hudson04a,Hudson66a,Khrennikov03a,Khrennikov05a,Khrennikov08a,Ulrych2014a}.
The universal dynamical equation employs hyperbolic commutator in this
case. This can be seen as a \emph{Moyal bracket}%
\index{Moyal bracket!hyperbolic}%
\index{bracket!Moyal!hyperbolic}%
\index{hyperbolic!Moyal bracket} based on the hyperbolic sine
function. The hyperbolic observables act as operators on a Krein
space%
\index{Krein!space}%
\index{space!Krein} with an indefinite inner product. Such spaces are
employed in study of \(\mathcal{PT}\)-symmetric%
\index{PT-symmetry@\(\mathcal{PT}\)-symmetry} Hamiltonians and
hyperbolic unit \(\rmh^2=1\) naturally appear in this
setup~\cite{GuentherKuzhel10a}.

The representations with values in dual numbers%
\index{number!dual}%
\index{dual!number}
provide a convenient description of the classical mechanics. For this
we do not take any sort of semiclassical limit%
  \index{semiclassical!limit}%
  \index{limit!semiclassical}, rather the nilpotency
of the parabolic unit (\(\rmp^2=0\)) does the task. This removes the
vicious necessity to consider the Planck \emph{constant}%
\index{Planck!constant}%
\index{constant!Planck} tending to
zero. 
The dynamical equation takes the Hamiltonian%
\index{Hamilton!equation}%
\index{equation!Hamilton} form. We also describe
classical non-commutative representations of the Heisenberg group
which acts in the first jet space\index{jet}.

\begin{remark}
  It is worth to note that our technique is different from contraction
  technique in the theory of Lie
  groups~\cites{LevyLeblond65a,GromovKuratov05b,Gromov12a}. Indeed a
  contraction of the Heisenberg group \(\Space{H}{n}\)
  is the commutative Euclidean group \(\Space{R}{2n}\)
  which may be identified with the phase space in classical and
  quantum mechanics.
\end{remark}

The considered here approach provides not only three different types
of dynamics, it also generates the respective rules for addition of
probabilities%
\index{probability!quantum}%
\index{quantum!probability} as well.  For example, the quantum
interference is the consequence of the same complex-valued structure,
which directs the Heisenberg equation. The absence of an interference
(a particle behaviour) in the classical mechanics is again the
consequence the nilpotency of the parabolic unit. Double numbers
creates the hyperbolic law of additions of probabilities, which was
extensively investigates~\cites{Khrennikov03a,Khrennikov05a}. There
are still unresolved issues with positivity of the probabilistic
interpretation in the hyperbolic case~\cites{Hudson04a,Hudson66a}.

The fundamental relations of quantum and classical mechanics were
discussed in Section~\ref{sec:overview}. Below we will recover the
existence of three non-isomorphic models of mechanics from the
representation theory. They were already derived
in~\cites{Hudson04a,Hudson66a} from translation invariant formulation,
that is from the group theory as well.  It also hinted that hyperbolic
counterpart is (at least theoretically) as natural as classical and
quantum mechanics are. The approach provides a framework for a
description of aggregate system which have say both quantum and
classical components. This can be used to model quantum computers with
classical terminals~\cite{Kisil09b}.

Remarkably, simultaneously with the work \cite{Hudson66a}
group-invariant axiomatics of geometry leaded
R.I.~Pimenov~\cite{Pimenov65a} to description of \(3^n\) Cayley--Klein
constructions. The connection between group-invariant geometry and respective
mechanics were explored in many works of N.A.~Gromov, see for
example~\cites{Gromov90a,Gromov90b,GromovKuratov05b}. They already
highlighted the r\^ole of three types of hypercomplex units for the
realisation of elliptic, parabolic and hyperbolic geometry and
kinematic.

There is a further connection between representations of the
Heisenberg group and hypercomplex numbers. The symplectomorphism of
phase space%
\index{phase!space}%
\index{space!phase} are also automorphism of the Heisenberg
group~\cite{Folland89}*{\S~1.2}. We recall that the symplectic group
\(\SL\)%
\index{group!$\SL$}%
\index{group!$\SL$|see{also $\SL$}}~\cite{Folland89}*{\S~1.2} is
isomorphic to the group \(\SL\)~\citelist{ \cite{Lang85}
  \cite{HoweTan92} \cite{Mazorchuk09a}} and provides linear
symplectomorphisms%
\index{symplectic!transformation}%
\index{transformation!symplectic} of the two-dimensional phase space.
It has three types of non-iso\-mor\-phic one-dimensional continuous
subgroups~\eqref{eq:a-subgroup}--\eqref{eq:k-subgroup} with symplectic
action on the phase space%
\index{phase!space}%
\index{space!phase} illustrated by Fig.~\ref{fig:rotations}.
Hamiltonians, which produce those symplectomorphism, are of
interest~\citelist{\cite{Wulfman10a}*{\S~3.8} \cite{ATorre08a}
  \cite{ATorre10a}}. An analysis of those Hamiltonians from
Subsection~\ref{sec:correspondence} by means of ladder operators
recreates hypercomplex coefficients as well~\cites{Kisil11a}.

Harmonic oscillators%
\index{harmonic!oscillator}%
\index{oscillator!harmonic}, which we shall use as the main
illustration here, are treated in most textbooks on quantum mechanics%
\index{quantum mechanics}%
\index{quantum mechanics}. This is efficiently done through
creation/annihilation (ladder) operators,
cf.~\S~\ref{sec:correspondence} and~\citelist{\cite{Gazeau09a}
  \cite{BoyerMiller74a}}. The underlying structure is the
representation theory of the Heisenberg%
\index{Heisenberg!group|(}%
\index{group!Heisenberg|(} and symplectic groups%
\index{symplectic!group}%
\index{group!symplectic}~\citelist{\cite{Lang85}*{\S~VI.2}
  \cite{MTaylor86}*{\S~8.2} \cite{Howe80b} \cite{Folland89}}.  As
we will see, they are naturally connected with respective hypercomplex
numbers.  As a result we obtain further illustrations to the
Similarity and Correspondence Principle~\ref{pr:simil-corr-principle}.

We work with the simplest case of a particle with only one
degree of freedom. Higher dimensions and the respective group of
symplectomorphisms \(\Sp[2n]\) may require consideration of Clifford
algebras%
\index{Clifford!algebra}%
\index{algebra!Clifford}~\citelist{\cite{Kisil93c}
  \cite{ConstalesFaustinoKrausshar11a} \cite{CnopsKisil97a}
  \cite{GuentherKuzhel10a} \cite{Porteous95} \cite{Kisil14a} \cite{Ulrych2014a}}.

\subsection{p-Mechanic Formalism}
Here we briefly outline a formalism~\cites{Kisil96a,Prezhdo-Kisil97,Kisil00a,BrodlieKisil03a,Kisil02e}, which allows to unify quantum and
classical mechanics.\index{p-mechanics}

\subsubsection{Convolutions (Observables) on $\Space{H}{}$ and Commutator}
\label{sec:conv-algebra-hg}

Using the invariance of the Lebesgue measure \(dg=\rmd s\,\rmd x\,\rmd y\)
on \(\Space{H}{}\)
we can define the convolution\index{convolution} of two functions:
\begin{eqnarray}
  (k_1 * k_2) (g) &=& \int_{\Space{H}{}} k_1(g_1)\,
  k_2(g_1^{-1}g)\,\rmd g_1  .
  \label{eq:de-convolution}
\end{eqnarray}
Because \(\Space{H}{}\)
is non-commutative, the convolution is a non-commutative operation.
It is meaningful for functions from various spaces including
\(\FSpace{L}{1}(\Space{H}{})=\FSpace{L}{1}(\Space{H}{},dg)\),
the Schwartz space \(\FSpace{S}{}\)
and many classes of distributions, which form algebras under
convolutions.  Convolutions on \(\Space{H}{}\)
are used as \emph{observables}%
\index{observable}%
\index{p-mechanics!observable} in
\(p\)-mechanic~\cites{Kisil96a,Kisil02e}.

A unitary representation \(\uir{}{} \) of \(\Space{H}{}\) extends
 to \(\FSpace{L}{1}(\Space{H}{})\) by the formula:
\begin{equation}
  \label{eq:rho-extended-to-L1}
  \uir{}{} (k) = \int_{\Space{H}{}} k(g)\uir{}{}  (g)\,\rmd g \notingiq
\end{equation}
where the operator-valued integral can be defined in a weak
sense. This is also an algebra homomorphism of convolutions to linear
operators.

For a dynamics of observables we need \emph{inner derivations}%
\index{inner!derivation}%
\index{derivation!inner} \(D_k\) of
the convolution algebra \(\FSpace{L}{1}(\Space{H}{})\), which are
given by the \emph{commutator}\index{commutator}:
\begin{eqnarray}
  \qquad D_k: f \mapsto [k,f]&=&k*f-f*k   \label{eq:commutator}
   \\  &=&
  \int_{\Space{H}{}} k(g_1)\left( f(g_1^{-1}g)-f(gg_1^{-1})\right)\,\rmd g_1
, \quad f,k\in\FSpace{L}{1}(\Space{H}{}).
\nonumber 
\end{eqnarray}

To describe dynamics of a time-dependent observable \(f(t,g)\) we use
the universal equation%
\index{p-mechanics!dynamic equation}%
\index{equation!dynamics in p-mechanics}, cf.~\cites{Kisil94d,Kisil96a}:
\begin{equation}
  \label{eq:universal}
  S\dot{f}=[H,f]\notingiq
\end{equation}
where \(S\) is the left-invariant vector
field~\eqref{eq:h-lie-algebra} generated by the centre of
\(\Space{H}{}\). The presence of operator \(S\) fixes the
dimensionality of both sides of the equation~\eqref{eq:universal} if
the observable \(H\) (Hamiltonian) has the dimensionality of
energy~\cite{Kisil02e}*{Rem~4.1}. 

Alternatively, if we apply a right inverse
\(\anti\) of \(S\) to both sides of the equation~\eqref{eq:universal}
we obtain the equivalent equation
\begin{equation}
  \label{eq:universal-bracket}
  \dot{f}=\ub{H}{f}\notingiq
\end{equation}
based on the universal bracket
\(\ub{k_1}{k_2}=k_1*\anti k_2-k_2*\anti k_1\)~\cite{Kisil02e}.
We will not use this approach in the present paper.
\begin{example}[Harmonic oscillator]
  \label{ex:p-harmonic}
  Let \(H=\frac{1}{2} (mk^2 q^2 + \frac{1}{m}p^2)\) be the
  Hamiltonian of a one-dimensional harmonic oscillator%
  \index{harmonic!oscillator|indef}%
  \index{oscillator!harmonic|indef}, where \(k\) is a constant
  frequency and \(m\) is a constant mass.  Its \emph{p-mechanisation}%
  \index{p-mechanisation} will be the second order differential
  operator on \(\Space{H}{}\)~\cite{BrodlieKisil03a}*{\S~5.1}:
  \begin{displaymath}
    \textstyle
    H=\frac{1}{2} (mk^2 X^2
    + \frac{1}{m}Y^2)\notingiq
  \end{displaymath}
  where we dropped sub-indexes of vector
  fields~\eqref{eq:h-lie-algebra} in one dimensional setting. We can
  express the commutator as a difference between the left and the
  right action of the vector fields:
  \begin{displaymath}
    \textstyle
    [H,f]=\frac{1}{2} (mk^2 ((X^{r})^2-(X^{l})^2)
    + \frac{1}{m}((Y^{r})^2-(Y^{l})^2))f.
  \end{displaymath}
   Thus the equation~\eqref{eq:universal} becomes~\cite{BrodlieKisil03a}*{(5.2)}:
   \begin{equation}
     \label{eq:p-harm-osc-dyn}
     \frac{\partial }{\partial s}\dot{f}= \frac{\partial }{\partial s}
    \left(m k^2 y\frac{\partial}{\partial x}-\frac{1}{m} x
      \frac{\partial}{\partial y} \right) f. 
   \end{equation}
   Of course, the derivative \(\frac{\partial }{\partial s}\) can be
   dropped from both sides of the equation and the general solution
   is found to be:
   \begin{equation}
     \label{eq:p-harm-sol}
     \textstyle
     f(t;s,x,y)  =  f_0\left(s, x\cos(k t) +
      m k y\sin( k t),  -\frac{x}{mk} \sin(k t) + y\cos (k t)\right)\notingiq
   \end{equation}
   where \(f_0(s,x,y)\) is the initial value of an observable on \(\Space{H}{}\).
\end{example}
\begin{example}[Unharmonic oscillator] 
  \label{ex:p-unharmonic}
  We consider unharmonic
  oscillator%
  \index{unharmonic!oscillator}%
  \index{oscillator!unharmonic} with cubic potential, see~\cite{CalzettaVerdaguer06a} and
  references therein:
  \begin{equation}
    \label{eq:unharmonic-hamiltonian}
    H=\frac{mk^2}{2} q^2+\frac{\lambda}{6} q^3
  + \frac{1}{2m}p^2.
  \end{equation}
  Due to the absence of non-commutative products
  in~\eqref{eq:unharmonic-hamiltonian}, its p-mechanisation is again straightforward: 
  \begin{displaymath}
    H=\frac{mk^2}{2}  X^2+\frac{\lambda}{6} X^3
    + \frac{1}{m}Y^2.
  \end{displaymath}
  Similarly to the harmonic case the dynamic equation, after
  cancellation of \(\frac{\partial }{\partial s}\) on both sides,
  becomes:
   \begin{equation}
     \label{eq:p-unharm-osc-dyn}
     \dot{f}=     \left(m k^2 y\frac{\partial}{\partial x}
       +\frac{\lambda}{6}\left(3y\frac{\partial^2}{\partial x^2} 
         +\frac{1}{4}y^3\frac{\partial^2}{\partial s^2}\right)-\frac{1}{m} x
      \frac{\partial}{\partial y} \right) f. 
   \end{equation}
Unfortunately, it cannot be solved analytically as easy as in the
harmonic case.
\end{example}

\subsubsection{States and Probability}
\label{sec:states-probability}

Let an observable \(\uir{}{}(k)\)~\eqref{eq:rho-extended-to-L1} is
defined by a kernel \(k(g)\) on the Heisenberg group and a
representation \(\uir{}{}\) at a Hilbert space \(\mathcal{H}\). A
\emph{state}%
\index{state}%
\index{p-mechanics!state} on the convolution algebra is given by a vector
\(v\in\mathcal{H}\). A simple calculation:
\begin{eqnarray*}
  \scalar[\mathcal{H}]{\uir{}{}(k)v}{v}&=& \scalar[\mathcal{H}]{\int_{\Space{H}{}} k(g)
    \uir{}{}(g)v\,\rmd g}{v}\\
  &=& \int_{\Space{H}{}} k(g) \scalar[\mathcal{H}]{\uir{}{}(g)v}{v}dg\\
  &=& \int_{\Space{H}{}} k(g) \overline{\scalar[\mathcal{H}]{v}{\uir{}{}(g)v}}\,\rmd g
\end{eqnarray*}
can be restated as:
\begin{displaymath}
  \scalar[\mathcal{H}]{\uir{}{}(k)v}{v}=\scalar[]{k}{l}, \qquad \text{where} \quad
  l(g)=\scalar[\mathcal{H}]{v}{\uir{}{}(g)v}.
\end{displaymath}
Here the left-hand side contains the inner product on
\(\mathcal{H}\), while the right-hand side uses a skew-linear pairing
between functions on \(\Space{H}{}\) based on the Haar measure
integration. In other words we obtain,
cf.~\cite{BrodlieKisil03a}*{Thm.~3.11}:
\begin{proposition}
  \label{pr:state-functional}
  A state defined by a vector \(v\in\mathcal{H}\) coincides with the
  linear functional given by the wavelet transform
  \begin{equation}
    \label{eq:kernel-state}
    l(g)=\scalar[\mathcal{H}]{v}{\uir{}{}(g)v}
  \end{equation}
  of \(v\) used as the mother wavelet as well.
\end{proposition}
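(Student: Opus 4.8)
The plan is to organise into a clean argument the chain of equalities that was assembled in the paragraph immediately preceding the statement. First I would take an arbitrary observable $\uir{}{}(k)$ and expand it by its definition~\eqref{eq:rho-extended-to-L1} as the operator-valued integral $\uir{}{}(k)=\int_{\Space{H}{}}k(g)\uir{}{}(g)\,\rmd g$. Pairing this operator against the state vector $v$ on both sides of the inner product of $\mathcal{H}$ produces the scalar $\scalar[\mathcal{H}]{\uir{}{}(k)v}{v}$, which is exactly the quantity I want to reinterpret as a functional of the kernel $k$.

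The central step is to move the inner product inside the integral. Since the operator-valued integral in~\eqref{eq:rho-extended-to-L1} is understood in the weak sense, this interchange is precisely the content of that weak definition: testing $\uir{}{}(k)$ against the pair $(v,v)$ yields $\int_{\Space{H}{}}k(g)\scalar[\mathcal{H}]{\uir{}{}(g)v}{v}\,\rmd g$. I would then invoke Hermitian symmetry of the inner product, $\scalar[\mathcal{H}]{\uir{}{}(g)v}{v}=\overline{\scalar[\mathcal{H}]{v}{\uir{}{}(g)v}}$, so that the integrand carries the complex conjugate of the function $l(g)=\scalar[\mathcal{H}]{v}{\uir{}{}(g)v}$ introduced in~\eqref{eq:kernel-state}.

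Finally I would recognise the resulting expression $\int_{\Space{H}{}}k(g)\overline{l(g)}\,\rmd g$ as the skew-linear pairing $\scalar[]{k}{l}$ between functions on $\Space{H}{}$ built from Haar-measure integration, giving the identity $\scalar[\mathcal{H}]{\uir{}{}(k)v}{v}=\scalar[]{k}{l}$. To close the statement I would note that $l(g)=\scalar[\mathcal{H}]{v}{\uir{}{}(g)v}$ is exactly the wavelet transform of $v$ taken with $v$ itself as the mother wavelet, so the state functional $k\mapsto\scalar[\mathcal{H}]{\uir{}{}(k)v}{v}$ and this wavelet transform coincide, as claimed.

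The only genuinely delicate point is the interchange of the inner product with the operator-valued integral, and I expect this to be the main obstacle. I would handle it by appealing to the weak-sense definition of~\eqref{eq:rho-extended-to-L1}; to guarantee that everything is well posed I would also use that $k\in\FSpace{L}{1}(\Space{H}{})$ together with the unitarity (hence uniform boundedness) of the operators $\uir{}{}(g)$, which makes $g\mapsto k(g)\scalar[\mathcal{H}]{\uir{}{}(g)v}{v}$ absolutely integrable and the passage to the scalar integral legitimate. Everything else is bookkeeping with the conjugation convention.
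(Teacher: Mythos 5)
Your argument is exactly the paper's own: the chain of equalities expanding \(\scalar[\mathcal{H}]{\uir{}{}(k)v}{v}\) through the weak-sense integral, applying Hermitian symmetry to produce \(\overline{l(g)}\), and recognising the skew-linear pairing \(\scalar[]{k}{l}\) is precisely the computation the paper displays immediately before the proposition. Your added remark on absolute integrability from \(k\in\FSpace{L}{1}(\Space{H}{})\) and unitarity is a harmless extra justification of the same step.
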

The addition of vectors in \(\mathcal{H}\) implies the following
operation on states:
\begin{eqnarray}
  \scalar[\mathcal{H}]{v_1+v_2}{\uir{}{}(g)(v_1+v_2)}&=&
    \scalar[\mathcal{H}]{v_1}{\uir{}{}(g)v_1}
    +\scalar[\mathcal{H}]{v_2}{\uir{}{}(g)v_2}\nonumber \\
    &&{}+
    \scalar[\mathcal{H}]{v_1}{\uir{}{}(g)v_2}
   + \overline{\scalar[\mathcal{H}]{v_1}{\uir{}{}(g^{-1})v_2}}
   \label{eq:kernel-add}
\end{eqnarray}
The last expression can be conveniently rewritten for kernels of the
functional as 
\begin{equation}
  \label{eq:addition-functional}
  l_{12}=l_1+l_2+2 A\sqrt{l_1l_2}
\end{equation}
for some real number \(A\). This formula is behind the contextual law
of addition of conditional probabilities%
\index{probability!quantum}%
\index{quantum!probability}~\cite{Khrennikov01a} and will
be illustrated below. Its physical interpretation is an interference,%
\index{interference} say, from two slits. Despite of a common belief,
the mechanism of such interference can be both causal\index{causal}
and local, see~\citelist{\cite{Kisil01c} \cite{KhrenVol01}}.

\subsection{Elliptic Characters and Quantum Dynamics}
\label{sec:ellipt-char-moyal}
In this subsection we consider the representation \(\uir{}{\myh}\) of
the Heisenberg group \(\Space{H}{}\) induced by the elliptic character
\(\chi_\myh(s)= \rme^{\rmi\myh s}\) in complex numbers parametrised by
\(\myh\in\Space{R}{}\). We also use the convenient agreement
\(\myh=2\pi\myhbar\) borrowed from physical literature.

\subsubsection{Fock--Segal--Bargmann and Schr\"odinger Representations}
\label{sec:schr-segal-bargm}
The realisation of \(\uir{}{\myh}\) by the left
shifts~\eqref{eq:left-right-regular} on
\(\FSpace[\myh]{L}{2}(\Space{H}{})\) is rarely used in quantum
mechanics. Instead two unitary equivalent forms are more common: the
Schr\"odinger%
\index{Schr\"odinger!representation}%
\index{representation!Schr\"odinger} and Fock--Segal--Bargmann (FSB) representations.%
\index{Fock--Segal--Bargmann!representations|indef}%
\index{representations!Fock--Segal--Bargmann|indef}

The FSB representation can be obtained from the orbit
method of Kirillov~\cite{Kirillov94a}. It allows spatially separate
irreducible components of the left regular representation, each of
them become located on the orbit of the co-adjoint representation,
see~\citelist{\cite{Kisil02e}*{\S~2.1} \cite{Kirillov94a}} for
details, we only present a brief summary here.

We identify \(\Space{H}{}\) and its Lie algebra \(\algebra{h}\)
through the exponential map~\cite{Kirillov76}*{\S~6.4}. The dual
\(\algebra{h}^*\) of \(\algebra{h}\) is presented by the Euclidean
space \(\Space{R}{3}\) with bi-orthogonal coordinates \((\myhbar,q,p)\).  Then the
pairing of \(\algebra{h}^*\) and \(\algebra{h}\) given by
\begin{displaymath}
  \scalar{(s,x,y)}{(\myhbar,q,p)}=\myhbar s + q \cdot x+p\cdot y.
\end{displaymath}
This pairing can be used to defines the Fourier
transform \(\hat{\ }: \FSpace{L}{2}(\Space{H}{})\rightarrow
\FSpace{L}{2}(\algebra{h}^*)\) given by~\cite{Kirillov99}*{\S~2.3}:
\begin{equation}
  \label{eq:fourier-transform}
  \hat{\phi}(F)=\int_{\algebra{h}^n} \phi(\exp X) 
   \rme^{-2\pi\rmi  \scalar{X}{F}}\,\rmd X, \qquad \textrm{ where}\quad]
  X\in\algebra{h}^n,\ F\in\algebra{h}^*. 
\end{equation}
For a fixed \(\myhbar\) the left regular
representation~\eqref{eq:left-right-regular} is mapped by the Fourier
transform to the FSB type representation~\eqref{eq:stone-inf}.  The
collection of points \((\myhbar,q,p)\in\algebra{h}^*\) for a fixed
\(\myhbar\) is naturally identified with the \emph{phase space}%
\index{phase!space}%
\index{space!phase} of the system.
\begin{remark}
  It is possible to identify the case of \(\myhbar=0\) with classical
  mechanics~\cite{Kisil02e}. Indeed, a substitution of the zero value of \(\myhbar\)
  into~\eqref{eq:stone-inf} produces the commutative representation:
  \begin{equation}
    \label{eq:commut-repres}
      \uir{}{0}(s,x,y): f (q,p) \mapsto 
       \rme^{-2\pi\rmi(qx+py)}
      f \left(q, p\right).
  \end{equation}
  It can be decomposed into the direct integral of one-dimensional
  representations parametrised by the points \((q,p)\) of the phase
  space. The classical mechanics%
  \index{classical mechanics}%
  \index{mechanics!classical}, including the Hamilton equation, can
  be recovered from those representations~\cite{Kisil02e}. However, 
  the condition \(\myhbar=0\) (as well as the \emph{semiclassical limit}%
  \index{semiclassical!limit}%
  \index{limit!semiclassical} \(\myhbar\rightarrow 0\)) is
  not completely physical. Commutativity (and subsequent relative
  triviality) of those representation is the main reason why they are
  oftenly neglected. The commutativity can be outweighed by special
  arrangements, e.g. an antiderivative, see the discussion
  around~\eqref{eq:universal-bracket} and~\cite{Kisil02e}*{(4.1)}. However, the
  procedure is not straightforward, see discussion in~\citelist{\cite{Kisil05c}
    \cite{AgostiniCapraraCiccotti07a}
    \cite{Kisil09a}}. A direct approach using dual
  numbers will  be shown below, cf. Rem.~\ref{re:hamilton-from-nc}.
\end{remark}

To recover the Schr\"odinger representation we use notations and
technique of induced representations from
\S~\ref{sec:concl-induc-repr}, see also~\cite{Kisil98a}*{Ex.~4.1}.
The subgroup \(H=\{(s,0,y)\such s\in\Space{R}{},
y\in\Space{R}{n}\}\subset\Space{H}{}\) defines the homogeneous space
\(X=G/H\), which coincides with \(\Space{R}{n}\) as a manifold. The
natural projection \(\map{p}:G\rightarrow X\) is \(\map{p}(s,x,y)=x\) and its left
inverse \(\map{s}:X\rightarrow G\) can be as simple as \(\map{s}(x)=(0,x,0)\).
For the map \(\map{r}:G\rightarrow H\), \(\map{r}(s,x,y)=(s-xy/2,0,y)\) we have
the decomposition
\begin{displaymath}
  (s,x,y)=\map{s}(\map{p}(s,x,y))*\map{r}(s,x,y)=(0,x,0)*(s-\textstyle\frac{1}{2}xy,0,y).
\end{displaymath}
For a character
\(\chi_{\myh}(s,0,y)= \rme^{\rmi\myh s}\) of \(H\) the lifting
\(\oper{L}_\chi: \FSpace{L}{2}(G/H) \rightarrow
\FSpace[\chi]{L}{2}(G)\) is as follows:
\begin{displaymath}
  [\oper{L}_\chi f](s,x,y)=\chi_{\myh}(\map{r}(s,x,y))\, 
  f(\map{p}(s,x,y))= \rme^{\rmi\myh (s-xy/2)}f(x).  
\end{displaymath}
Thus the representation \(\uir{}{\chi}(g)=\oper{P}\circ\Lambda
(g)\circ\oper{L}\) becomes:
\begin{equation}
  \label{eq:schroedinger-rep}
  [\uir{}{\chi}(s',x',y') f](x)= \rme^{-2\pi\rmi\myhbar (s'+xy'-x'y'/2)}\,f(x-x').  
\end{equation}
After the Fourier transform \(x\mapsto q\)
(similar to one in~\eqref{eq:fourier-transform}) we get the
Schr\"odinger representation%
\index{Schr\"odinger!representation|indef}%
\index{representation!Schr\"odinger|indef} on the \emph{configuration
  space}%
\index{configuration!space}%
\index{space!configuration}:
\begin{equation}
  \label{eq:schroedinger-rep-conf}
  [\uir{}{\chi}(s',x',y') \hat{f}\,](q)= \rme^{-2\pi\rmi\myhbar (s'+x'y'/2)
    -2\pi\rmi x' q}\,\hat{f}(q+\myhbar y').  
\end{equation}
Note that this again turns into a commutative representation
(multiplication by an unimodular function) if \(\myhbar=0\). To get
the full set of commutative representations in this way we need to use the
character \(\chi_{(\myh,p)}(s,0,y)= \rme^{2\pi\rmi(\myhbar+ py)}\) in the
above consideration. 

\subsubsection{Commutator and the Heisenberg Equation}
\label{sec:comm-heis-equat}

The property~\eqref{eq:induced-prop} of
\(\FSpace[\chi]{F}{2}(\Space{H}{})\) implies that the restrictions of
two operators \(\uir{}{\chi} (k_1)\) and \(\uir{}{\chi} (k_2)\) to
this space are equal if
\begin{displaymath}
  \int_{\Space{R}{}} k_1(s,x,y)\,\chi(s)\, ds = \int_{\Space{R}{}} k_2(s,x,y)\,\chi(s)\,\rmd s.
\end{displaymath}
In other words, for a character \(\chi(s)= \rme^{2\pi\rmi \myhbar s}\) the
operator \(\uir{}{\chi} (k)\) depends only on
\begin{displaymath}
  \hat{k}_s(\myhbar,x,y)=\int_{\Space{R}{}} k(s,x,y)\, \rme^{-2\pi\rmi \myhbar s}\,\rmd s\notingiq
\end{displaymath}
which is the partial Fourier transform \(s\mapsto \myhbar\) of
\(k(s,x,y)\). The restriction to \(\FSpace[\chi]{F}{2}(\Space{H}{})\)
of the composition formula for convolutions is~\cite{Kisil02e}*{(3.5)}:
\begin{equation}
  \label{eq:composition-ell}
  (k'*k)\hat{_s}
  =
  \int_{\Space{R}{2n}}  \rme^{ {\rmi \myh}{}(xy'-yx')/2}\, \hat{k}'_s(\myhbar ,x',y')\,
 \hat{k}_s(\myhbar ,x-x',y-y')\,\rmd x'\rmd y'. 
\end{equation}
Under the Schr\"odinger representation~\eqref{eq:schroedinger-rep-conf} the
convolution~\eqref{eq:composition-ell} defines a rule for composition
of two pseudo-differential operators (PDO) in the Weyl
calculus~\citelist{\cite{Howe80b} \cite{Folland89}*{\S~2.3}}.

Consequently the representation~\eqref{eq:rho-extended-to-L1} of
commutator~\eqref{eq:commutator} depends only on its partial Fourier
transform~\cite{Kisil02e}*{(3.6)}: 
\begin{eqnarray}
  [k',k]\hat{_s}
  &=&   2 \rmi  \int_{\Space{R}{2n}}\!\! \sin(\textstyle\frac{\myh}{2}
  (xy'-yx'))\,\label{eq:repres-commutator}\\
   && \qquad\times 
  \hat{k}'_s(\myhbar, x', y')\,
  \hat{k}_s(\myhbar, x-x', y-y')\,\rmd x'\rmd y'. \nonumber 
\end{eqnarray}
Under the Fourier transform~\eqref{eq:fourier-transform} this
commutator is exactly the \emph{Moyal bracket}%
\index{Moyal bracket}%
\index{bracket!Moyal}~\cite{Zachos02a} for of
\(\hat{k}'\) and \(\hat{k}\) on the phase space.

For observables in the space
\(\FSpace[\chi]{F}{2}(\Space{H}{})\) the action of \(S\) is reduced to
multiplication
, e.g. for \(\chi(s)= \rme^{\rmi \myh s}\) the action of \(S\) is
multiplication by \(\rmi \myh\). Thus the
equation~\eqref{eq:universal} reduced to the space
\(\FSpace[\chi]{F}{2}(\Space{H}{})\) becomes the Heisenberg type
equation~\cite{Kisil02e}*{(4.4)}:
\begin{equation}
  \label{eq:heisenberg-eq}
  \dot{f}=\frac{1}{\rmi\myh}  [H,f]\hat{_s}\notingiq
\end{equation}
based on the above bracket~\eqref{eq:repres-commutator}.  The
Schr\"odinger representation~\eqref{eq:schroedinger-rep-conf} transforms
this equation to the original Heisenberg equation%
\index{Heisenberg!equation}%
\index{equation!Heisenberg}.

\begin{example}
  \label{ex:quntum-oscillators}
  \begin{enumerate}
  \item 
    \label{it:q-harmonic}
    Under the Fourier transform \((x,y)\mapsto(q,p)\) the
    p-dynamic equation~\eqref{eq:p-harm-osc-dyn} of the harmonic
    oscillator%
    \index{harmonic!oscillator}%
    \index{oscillator!harmonic} becomes: 
    \begin{equation}
      \label{eq:harmic-osc-dyn}
      \dot{f}=     \left(m k^2 q\frac{\partial}{\partial p}-\frac{1}{m} p
      \frac{\partial}{\partial q} \right) f. 
    \end{equation}
    The same transform creates its solution out
    of~\eqref{eq:p-harm-sol}.
  \item 
    \label{it:q-unharmonic}
    Since \(\frac{\partial}{\partial s}\) acts on
    \(\FSpace[\chi]{F}{2}(\Space{H}{})\) as multiplication by \(\rmi
    \myhbar\), the quantum representation of unharmonic dynamics%
    \index{unharmonic!oscillator}%
    \index{oscillator!unharmonic}
    equation~\eqref{eq:p-unharm-osc-dyn} is:
    \begin{equation}
      \label{eq:q-unhar-dyn}
      \dot{f}=     \left(m k^2 q\frac{\partial}{\partial
          p}+\frac{\lambda}{6}\left(3q^2\frac{\partial}{\partial p}
          -\frac{\myhbar^2}{4}\frac{\partial^3}{\partial p^3}\right)-\frac{1}{m}
        p \frac{\partial}{\partial q} \right) f. 
    \end{equation}
    This is exactly the equation for the Wigner function obtained
    in~\cite{CalzettaVerdaguer06a}*{(30)}. 
  \end{enumerate}
\end{example}

\subsubsection{Quantum Probabilities}
\label{sec:quantum-probabilities}
For the elliptic character \(\chi_\myh(s)= \rme^{\rmi\myh s }\) we can use
the Cauchy--Schwartz inequality to demonstrate that the real number
\(A\) in the identity~\eqref{eq:addition-functional} is between \(-1\)
and \(1\). Thus, we can put \(A=\cos \alpha\) for some angle (phase)
\(\alpha\) to get the formula for counting quantum
probabilities%
\index{probability!quantum}%
\index{quantum!probability}, cf.~\cite{Khrennikov03a}*{(2)}:
\begin{equation}
  \label{eq:addition-functional-ell}
  l_{12}=l_1+l_2+2 \cos\alpha \,\sqrt{l_1l_2}
\end{equation}

\begin{remark}
  \label{re:sine-cosine}
  It is interesting to note that the both trigonometric functions are
  employed in quantum mechanics: sine is in the heart of the Moyal
  bracket%
  \index{Moyal bracket}%
  \index{bracket!Moyal}~\eqref{eq:repres-commutator} and cosine is responsible for
  the addition of probabilities~\eqref{eq:addition-functional-ell}. In
  the essence the commutator and probabilities took respectively the
  odd and even parts of the elliptic character \( \rme^{\rmi\myh s}\).
\end{remark}

\begin{example}
Take a  vector \(v_{(a,b)}\in\FSpace[\myh]{L}{2}(\Space{H}{})\) defined by a
Gaussian\index{Gaussian} with mean value \((a,b)\) in the phase space%
\index{phase!space}%
\index{space!phase} for a harmonic oscillator of the
mass \(m\) and the frequency \(k\):
\begin{equation}
  \label{eq:gauss-state}
  v_{(a,b)}(q,p)=\exp\left(-\frac{2\pi k m}{\myhbar}(q-a)^2-\frac{2\pi}{\myhbar
      k m}(p-b)^2\right).
\end{equation}
A direct calculation shows:
\begin{eqnarray*}
  \lefteqn{\scalar{v_{(a,b)}}{\uir{}{\myhbar}(s,x,y)v_{(a',b')}}=\frac{4}{\myhbar}
  \exp\left(
    \pi \rmi \left(2s\myhbar+x (a+a')+y (b+b')\right)\frac{}{}\right.}\\
  &&\left.{} -\frac{\pi}{2 \myhbar k m }((\myhbar x+b-b')^2
    +(b-b')^2)
-\frac{\pi k m}{2\myhbar} ((\myhbar y+a'-a)^2
  + (a'-a)^2)
  \right)\\
  &=&\frac{4}{\myhbar}
  \exp\left(
    \pi \rmi \left(2s\myhbar+x (a+a')+y (b+b')\right)\frac{}{}\right.\\
  &&\left.{}  -\frac{\pi}{\myhbar k m }((b-b'+{\textstyle\frac{\myhbar x}{2}})^2
    +({\textstyle\frac{\myhbar x}{2}})^2)
    -\frac{\pi k m}{\myhbar} ((a-a'-{\textstyle\frac{\myhbar y}{2}})^2
    + ({\textstyle\frac{\myhbar y}{2}})^2) 
  \right).
\end{eqnarray*}
Thus, the kernel
\(l_{(a,b)}=\scalar{v_{(a,b)}}{\uir{}{\myhbar}(s,x,y)v_{(a,b)}}\)~\eqref{eq:kernel-state}
for a state \(v_{(a,b)}\) is:
\begin{eqnarray}
  l_{(a,b)}&=&\frac{4}{\myhbar}
  \exp\left(
    2\pi \rmi (s\myhbar+xa+yb)\frac{}{}
    -\frac{\pi\myhbar}{2 k m }x^2
    -\frac{\pi k m \myhbar}{2\myhbar} y^2
  \right)
  \label{eq:single-slit}
\end{eqnarray}
An observable registering a particle at a point \(q=c\) of the
configuration space%
\index{configuration!space}%
\index{space!configuration} is \(\delta(q-c)\). On the Heisenberg group this
observable is given by the kernel:
\begin{equation}
  \label{eq:coordinate}
  X_c(s,x,y)= \rme^{2\pi\rmi (s\myhbar+x  c)}\delta(y).
\end{equation}
The measurement of \(X_c\) on the
state~\eqref{eq:gauss-state} (through the
kernel~\eqref{eq:single-slit}) predictably is: 
\begin{displaymath}
  \scalar{X_c}{l_{(a,b)}}=\sqrt{\frac{2k
  m}{\myhbar}}\exp\left(-\frac{2\pi k m}{\myhbar}(c-a)^2\right).
\end{displaymath}
\end{example}
\begin{example}
  Now take two states \(v_{(0,b)}\) and \(v_{(0,-b)}\), where for the
  simplicity we assume the mean values of coordinates vanish in the
  both cases.  Then the corresponding kernel~\eqref{eq:kernel-add} has
  the interference\index{interference} terms:
\begin{eqnarray*}
  l_i&=&  \scalar{v_{(0,b)}}{\uir{}{\myhbar}(s,x,y)v_{(0,-b)}}\\
  &=&\frac{4}{\myhbar}
  \exp\left(2\pi \rmi s\myhbar
    -\frac{\pi}{2 \myhbar k m }((\myhbar x+2b)^2
    +4b^2)
    -\frac{\pi\myhbar k m}{2} y^2
  \right).
\end{eqnarray*}
The measurement of \(X_c\)~\eqref{eq:coordinate} on this term contains
the oscillating part:
\begin{displaymath}
  \scalar{X_c}{l_i}=\sqrt{\frac{2k m}{\myhbar}} \exp\left(-\frac{2\pi k
    m }{\myhbar} c^2
  -\frac{2\pi}{ k
    m \myhbar}b^2+\frac{4\pi\rmi}{\myhbar} cb\right)
\end{displaymath}
Therefore on the kernel \(l\) corresponding to
the state \(v_{(0,b)}+v_{(0,-b)}\) the measurement is 
\begin{eqnarray*}
  \scalar{X_c}{l}
  &=&2\sqrt{\frac{2 k
      m}{\myhbar}}\exp\left(-\frac{2\pi k m}{\myhbar}c^2\right)
  \left(1+\exp\left(    -\frac{2\pi}{ k
      m \myhbar}b^2\right)\cos\left(\frac{4\pi}{\myhbar} cb\right)\right).
\end{eqnarray*}
The presence of the cosine term in the last expression can generate an
interference picture. In practise, it does not happen for the minimal
uncertainty state~\eqref{eq:gauss-state} which we are using here: it
rapidly vanishes outside of the neighbourhood of zero, where
oscillations of the cosine occurs, see Fig.~\ref{fig:quant-prob}(a).
\end{example}
\begin{figure}[htbp]
  \centering
  (a)\includegraphics[scale=.75]{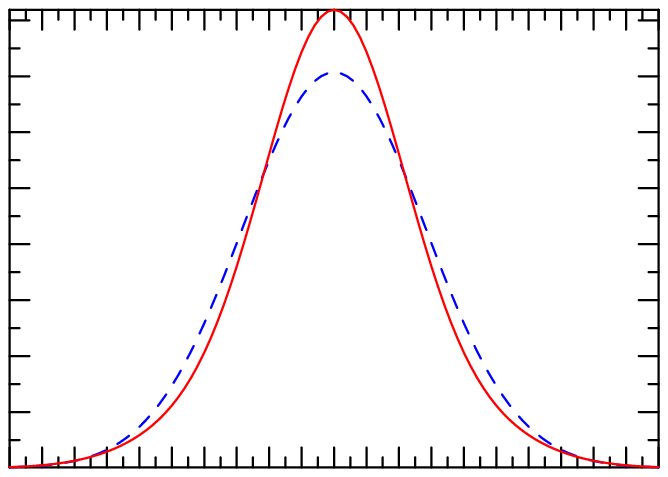}\hfill
  (b)\includegraphics[scale=.75]{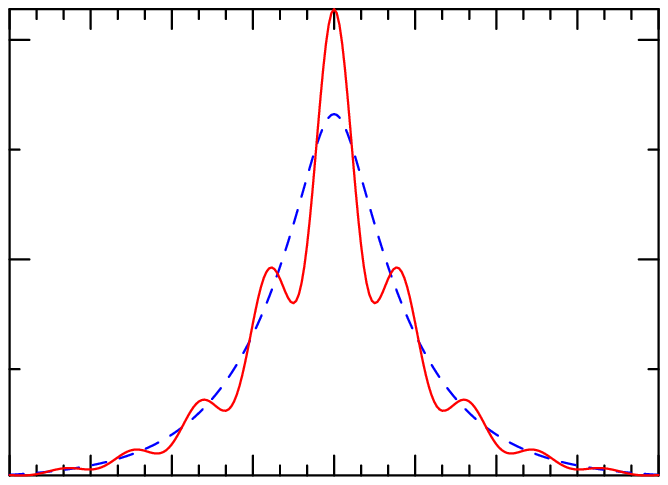}
  \caption{Quantum probabilities: the blue (dashed) graph shows the
    addition of probabilities without interaction, the red (solid)
    graph present the quantum interference. Left picture shows the
    Gaussian state~\eqref{eq:gauss-state}, the right\,--\,the rational
    state~\eqref{eq:poly-state}}
  \label{fig:quant-prob}
\end{figure}

\begin{example}
To see a traditional interference pattern one can use a state which is far
from the minimal uncertainty. For example, we can consider the state:
\begin{equation}
  \label{eq:poly-state}
  u_{(a,b)}(q,p)=\frac{\myhbar^2}{((q-a)^2+\myhbar/ k
    m)((p-b)^2+\myhbar k m)}.
\end{equation}
To evaluate the observable \(X_c\)~\eqref{eq:coordinate} on the
state \(l(g)=\scalar{u_1}{\uir{}{h}(g)u_2}\)~\eqref{eq:kernel-state}
we use the following formula: 
\begin{displaymath}
  \scalar{X_c}{l}=\frac{2}{\myhbar}\int_{\Space{R}{n}} \hat{u}_1(q,
  2(q-c)/\myhbar)\, 
  \overline{\hat{u}_2(q, 2(q-c)/\myhbar)}\,\rmd q\notingiq
\end{displaymath}
where \(\hat{u}_{i}(q,x)\) denotes  the partial Fourier transform
\(p\mapsto x\) of \(u_{i}(q,p)\). The formula is obtained by 
swapping order of integrations.  The numerical evaluation of the state
obtained by the addition \(u_{(0,b)}+u_{(0,-b)}\) is plotted on
Fig.~\ref{fig:quant-prob}(b), the red curve shows the canonical
interference pattern. 
\end{example}

\subsection{Ladder Operators and Harmonic Oscillator}
\label{sec:ladder-operators-oscillator}

Let \(\uir{}{}\) be a representation of the Schr\"odinger group%
\index{Schr\"odinger!group}%
\index{group!Schr\"odinger}
\(G=\Space{H}{}\rtimes\Mp\)~\eqref{eq:schrodinger-group}
in a space \(V\).
Consider the derived representation of the Lie algebra
\(\algebra{g}\)~\cite{Lang85}*{\S~VI.1}
and to simplify expressions we denote \(\tilde{X}=\uir{}{}(X)\)
for \(X\in\algebra{g}\).
To see the structure of the representation \(\uir{}{}\)
we can decompose the space \(V\)
into eigenspaces of the operator \(\tilde{X}\)
for some \(X\in \algebra{g}\).
The canonical example is the Taylor series in complex analysis.

We are going to consider three cases corresponding to three
non-isomorphic subgroups~(\ref{eq:a-subgroup}--\ref{eq:k-subgroup}) of
\(\SL\) starting from the compact case. Let \(X=Z\) be a
generator of the compact subgroup \(K\).  Corresponding
symplectomorphisms~\eqref{eq:sympl-auto} of the phase space%
\index{phase!space}%
\index{space!phase} are given
by orthogonal rotations with matrices \(\begin{pmatrix} \cos t & \sin
  t\\ -\sin t& \cos t
\end{pmatrix}\). The Shale--Weil
representation~\eqref{eq:shale-weil-der} coincides with the
Hamiltonian of the harmonic oscillator%
\index{harmonic!oscillator}%
\index{oscillator!harmonic} in Schr\"odinger representation.

Since \(\Mp\)
is a two-fold cover of \(\Mp\),
the corresponding eigenspaces of a compact group
\(\tilde{Z} v_k=\rmi k v_k\)
are parametrised by a half-integer \(k\in\Space{Z}{}/2\).
Explicitly for a half-integer \(k\) eigenvectors are:
\begin{equation}
  \label{eq:hermit-poly}
  v_k(q)=H_{k+\frac{1}{2}}\left(\sqrt{\frac{2\pi}{\myhbar}}q\right)  \rme^{-\frac{\pi}{\myhbar}q^2}\notingiq
\end{equation}
where \(H_k\) is the \emph{Hermite polynomial}%
\index{Hermite!polynomial}%
\index{polynomial!Hermite}%
\index{function!Hermite|see{Hermite polynomial}}%
~\citelist{\cite{Folland89}*{\S~1.7}  \cite{ErdelyiMagnusII}*{8.2(9)}}.  

From the point of view of quantum mechanics as well as the representation
theory, it is beneficial to introduce the
ladder operators \(\ladder{\pm}\)~\eqref{eq:raising-lowering}, known
also as \emph{creation/annihilation} in quantum
mechanics~\citelist{\cite{Folland89}*{p.~49} \cite{BoyerMiller74a}}.%
\index{ladder operator|(}%
\index{operator!ladder|(}
There are two ways to search for ladder operators: in
(complexified) Lie algebras \(\algebra{h}\) and \(\algebra{sl}_2\).
The later coincides with our consideration in
Section~\ref{sec:correspondence} in the essence.  

\subsubsection{Ladder Operators from the Heisenberg Group}
\label{sec:heis-group-oper}

Assuming \(\ladder{+}=a\tilde{X}+b\tilde{Y}\) we obtain from the
relations~\eqref{eq:cross-comm}--\eqref{eq:cross-comm1}
and~\eqref{eq:raising-lowering} the linear equations with unknown
\(a\) and \(b\):
\begin{displaymath}
  a=\lambda_+ b, \qquad -b=\lambda_+ a.
\end{displaymath}
The equations have a solution if and only if \(\lambda_+^2+1=0\), and
the raising/lowering operators are \(\ladder{\pm}=
\tilde{X}\mp\rmi\tilde{Y}\). 
\begin{remark}
  Here we have an interesting asymmetric response: due to the
  structure of the semidirect product
  \(\Space{H}{}\rtimes\Mp\) it is the
  symplectic group
  \index{symplectic!group}%
  \index{group!symplectic}%
  which acts on \(\Space{H}{}\), not vise versa.
  However the Heisenberg group has a weak action in the opposite
  direction: it shifts eigenfunctions of \(\SL\).
\end{remark}

In the Schr\"odinger representation%
\index{Schr\"odinger!representation}%
\index{representation!Schr\"odinger}~\eqref{eq:schroedinger-rep-conf-der}
the ladder operators are
\begin{equation}
  \label{eq:ell-ladder-heisen-rep}
  \uir{}{\myhbar}(\ladder{\pm})= 2\pi\rmi q\pm\rmi\myhbar \frac{\rmd}{\rmd q}.
\end{equation}
The standard treatment of the harmonic oscillator in quantum mechanics,
which can be found in many textbooks, e.g.~\citelist{
  \cite{Folland89}*{\S~1.7} \cite{Gazeau09a}*{\S~2.2.3}}, 
is as follows. The vector  \(v_{-1/2}(q)= \rme^{-\pi q^2/\myhbar}\) is an
eigenvector of \(\tilde{Z}\) with the eigenvalue
\(-\frac{\rmi}{2}\). In addition \(v_{-1/2}\) is annihilated by
\(\ladder{+}\). Thus the chain~\eqref{eq:ladder-chain-1D} terminates to
the right and the complete set of eigenvectors of the harmonic
oscillator Hamiltonian is presented by \((\ladder{-})^k v_{-1/2}\)
with \(k=0, 1, 2, \ldots\).

We can make a wavelet transform%
\index{wavelet!transform} generated by the Heisenberg group with
the mother wavelet \(v_{-1/2}\), and the image will be the
Fock--Segal--Bargmann (FSB) space%
\index{Fock--Segal--Bargmann!space}%
\index{space!Fock--Segal--Bargmann}%
\index{FSB!space|see{Fock--Segal--Bargmann space}}%
\index{space!FSB|see{Fock--Segal--Bargmann space}} \citelist{\cite{Howe80b}
  \cite{Folland89}*{\S~1.6}}. Since \(v_{-1/2}\) is the null solution
of \(\ladder{+}=\tilde{X}-\rmi \tilde{Y}\), then by
Cor.~\ref{co:cauchy-riemann-integ} the image of
the wavelet transform will be null-solutions of the corresponding linear
combination of the Lie derivatives~\eqref{eq:h-lie-algebra}:
\begin{equation}
  \label{eq:CR-Bargmann}
  D=\overline{X^{r} -\rmi  Y^{r}}=(\partial_{ x} +\rmi\partial_{y})-\pi\myhbar(x-\rmi
y)\notingiq
\end{equation}
which turns out to be the Cauchy--Riemann equation%
\index{Cauchy-Riemann operator}%
\index{operator!Cauchy-Riemann} on a weighted FSB-type space.

\subsubsection{Symplectic Ladder Operators}
\label{sec:sympl-ladd-oper}
We can also look for ladder operators within the Lie algebra
\(\algebra{sl}_2\), see~\S~\ref{sec:ellipt-ladd-oper} and~\cite{Kisil09c}*{\S~8}.
Assuming \(\ladder[2]{+}=a\tilde{A}+b\tilde{B}+c\tilde{Z}\) from the
relations~\eqref{eq:sl2-commutator} and defining
condition~\eqref{eq:raising-lowering} we obtain the linear equations
with unknown \(a\), \(b\) and \(c\): 
\begin{displaymath}
  c=0, \qquad 2a=\lambda_+ b, \qquad -2b=\lambda_+ a.
\end{displaymath}
The equations have a solution if and only if \(\lambda_+^2+4=0\), and
the raising/lowering operators are \(\ladder[2]{\pm}=\pm\rmi
\tilde{A}+\tilde{B}\). In the Shale--Weil
representation~\eqref{eq:shale-weil-der} they turn out to be:
\begin{equation}
  \label{eq:ell-ladder-symplect}
  \ladder[2]{\pm}=\pm\rmi\left(\frac{q}{2}\frac{\rmd}{\rmd q}+\frac{1}{4}\right)-\frac{\myhbar\rmi}{8\pi}\frac{\rmd ^2}{\rmd q^2}-\frac{\pi\rmi q^2}{2\myhbar}=-\frac{\rmi}{8\pi\myhbar}\left(\mp2\pi q+\myhbar\frac{\rmd}{\rmd q}\right)^2.
\end{equation}
Since this time \(\lambda_+=2\rmi\) the ladder operators
\(\ladder[2]{\pm}\) produce a shift on the
diagram~\eqref{eq:ladder-chain-1D} twice bigger than the operators
\(\ladder{\pm}\) from the Heisenberg group. After all, this is not
surprising since from the explicit
representations~\eqref{eq:ell-ladder-heisen-rep} and~\eqref{eq:ell-ladder-symplect} we get:
\begin{displaymath}
  \ladder[2]{\pm}=-\frac{\rmi}{8\pi\myhbar}(\ladder{\pm})^2.
\end{displaymath}
\index{ladder operator|)}%
\index{operator!ladder|)}%

\subsection{Hyperbolic Quantum Mechanics}
\label{sec:hyperb-repr-addt}

Now we turn to double numbers%
\index{number!double|indef}%
\index{double!number|indef} also known as hyperbolic, split-complex,
etc. numbers~\citelist{\cite{Yaglom79}*{App.~C} \cite{Ulrych05a}
  \cite{KhrennikovSegre07a}}. They form a two commutative associative
dimensional algebra \(\Space{O}{}\)
spanned by \(1\)
and the hyperbolic unit \(\rmh\)
with the property \(\rmh^2=1\).  There are zero divisors in  \(\Space{O}{}\):
\begin{displaymath}
  \rmh_\pm=\textstyle\frac{1}{\sqrt{2}}(1\pm j), \qquad\text{ such that }\quad
  \rmh_+ \rmh_-=0 
  \quad
  \text{ and }
  \quad
  \rmh_\pm^2=\rmh_\pm.
\end{displaymath}
Thus, double numbers algebraically isomorphic to two copies of
\(\Space{R}{}\) spanned by \(\rmh_\pm\). Being algebraically dull
double numbers are nevertheless interesting as a \(\SL\)-homogeneous
space~\cites{Kisil05a,Kisil09c} and they are relevant in
physics~\cites{Khrennikov05a,Ulrych05a,Ulrych08a,Ulrych2014a}.  The combination of
p-mechanical approach with hyperbolic quantum mechanics was already
discussed in~\cite{BrodlieKisil03a}*{\S~6}.

For the hyperbolic character \(\chi_{\rmh \myh}(s)= \rme^{\rmh \myh
  s}=\cosh \myh s +\rmh\sinh \myh s\)
of \(\Space{R}{}\) one can define
the hyperbolic Fourier-type transform:
\begin{displaymath}
  \hat{k}(q)=\int_{\Space{R}{}} k(x)\, \rme^{-\rmh q x}dx.
\end{displaymath}
It can be understood in the sense of distributions on the space dual
to the set of analytic functions~\cite{Khrennikov08a}*{\S~3}. Hyperbolic
Fourier transform intertwines the derivative \(\frac{\rmd}{\rmd x}\) and
multiplication by \(\rmh q\)~\cite{Khrennikov08a}*{Prop.~1}.
\begin{example}
  For the Gaussian\index{Gaussian} the hyperbolic Fourier transform is the ordinary
  function (note the sign  difference!):
  \begin{displaymath}
    \int_{\Space{R}{}}  \rme^{-x^2/2}  \rme^{-\rmh q x}dx= \sqrt{2\pi}\,  \rme^{q^2/2}.
  \end{displaymath}
  However the opposite identity:
  \begin{displaymath}
    \int_{\Space{R}{}}  \rme^{x^2/2}  \rme^{-\rmh q x}dx= \sqrt{2\pi}\,  \rme^{-q^2/2}
  \end{displaymath}
  is true only in a suitable distributional sense. To this end we may
  note that \( \rme^{x^2/2}\) and \( \rme^{-q^2/2}\) are null solutions to the
  differential operators \(\frac{\rmd}{\rmd x}-x\) and \(\frac{\rmd}{\rmd q}+q\)
  respectively, which are intertwined (up to the factor \(\rmh\)) by
  the hyperbolic Fourier transform. The above differential operators
  \(\frac{\rmd}{\rmd x}-x\) and \(\frac{\rmd}{\rmd q}+q\) are images of the ladder
  operators%
  \index{ladder operator}%
  \index{operator!ladder}~\eqref{eq:ell-ladder-heisen-rep} in the Lie algebra of the Heisenberg group.
  They are intertwining by the Fourier transform, since this is an
  automorphism of the Heisenberg group~\cite{Howe80a}.
\end{example}
An elegant theory of hyperbolic Fourier transform may be achieved by a
suitable adaptation of~\cite{Howe80a}, which uses representation
theory of the Heisenberg group.

\subsubsection{Hyperbolic Representations of the Heisenberg Group} 
\label{sec:segre-quatern-hyperb}

Consider the space
\(\FSpace[\rmh]{F}{\myh}(\Space{H}{})\) of \(\Space{O}{}\)-valued
functions on \(\Space{H}{}\) with the property:
\begin{equation}
  \label{eq:induced-prop-h}
  f(s+s',h,y)= \rme^{\rmh \myh s'} f(s,x,y), \qquad \text{ for all}\quad]
  (s,x,y)\in \Space{H}{},\ s'\in \Space{R}{} 
\end{equation}
and the square integrability condition~\eqref{eq:L2-condition}. Then
the hyperbolic representation of \(\Space{H}{}\)
is obtained by the restriction of the left shifts to
\(\FSpace[\rmh]{F}{\myh}(\Space{H}{})\).
To obtain an equivalent representation on the phase space%
\index{phase!space}%
\index{space!phase} we take the \(\Space{O}{}\)-valued
functional of the Lie algebra \(\algebra{h}\):
\begin{equation}
  \label{eq:hyp-character}
  \chi^j_{(\myh,q,p)}(s,x,y)= \rme^{\rmh(\myh s +qx+ py)}
  =\cosh (\myh s +qx+ py) + \rmh\sinh(\myh s +qx+ py).
\end{equation}
The hyperbolic Fock--Segal--Bargmann type representation%
  \index{Fock--Segal--Bargmann!representation!hyperbolic}%
  \index{representation!Fock--Segal--Bargmann!hyperbolic}%
  \index{hyperbolic!Fock--Segal--Bargmann representation}%
  \index{Heisenberg!group!Fock--Segal--Bargmann representation!hyperbolic}%
  \index{group!Heisenberg!Fock--Segal--Bargmann representation!hyperbolic} is intertwined
with the left group action by means of the Fourier
transform~\eqref{eq:fourier-transform} with the hyperbolic
functional~\eqref{eq:hyp-character}. Explicitly this representation is:
\begin{equation}
  \label{eq:segal-bargmann-hyp}
  \uir{}{\myhbar}(s,x,y): f (q,p) \mapsto 
  \textstyle  \rme^{-\rmh(\myh s+qx+py)}
  f \left(q-\frac{\myh}{2} y, p+\frac{\myh}{2} x\right).
\end{equation}
For a hyperbolic Schr\"odinger type representation%
\index{representation!Heisenberg group!hyperbolic}%
\index{Schr\"odinger!representation!hyperbolic}%
\index{hyperbolic!Schr\"odinger representation!}%
\index{representation!Schr\"odinger!hyperbolic} we again use the
scheme described in \S~\ref{sec:concl-induc-repr}. Similarly to the
elliptic case one obtains the formula,
resembling~\eqref{eq:schroedinger-rep}:
\begin{equation}
    \label{eq:schroedinger-rep-hyp}
    [\uir{\rmh}{\chi}(s',x',y') f](x)= \rme^{-\rmh\myh (s'+xy'-x'y'/2)}f(x-x').
\end{equation}
Application of the hyperbolic Fourier transform produces a
Schr\"odinger type representation on the configuration space%
\index{configuration!space}%
\index{space!configuration},
cf.~\eqref{eq:schroedinger-rep-conf}: 
\begin{displaymath}
  [\uir{\rmh}{\chi}(s',x',y') \hat{f}\,](q)= \rme^{-\rmh\myh (s'+x'y'/2)
    -\rmh x' q}\,\hat{f}(q+\myh y').  
\end{displaymath}
The extension of this representation to kernels according
to~\eqref{eq:rho-extended-to-L1} generates hyperbolic
pseudodifferential operators introduced
in~\cite{Khrennikov08a}*{(3.4)}.  

\subsubsection{Hyperbolic Dynamics}
\label{sec:hyperbolic-dynamics}

Similarly to the elliptic (quantum) case we consider a convolution
of two kernels on \(\Space{H}{}\) restricted to
\(\FSpace[\rmh]{F}{\myh}(\Space{H}{})\). The composition law becomes,
cf.~\eqref{eq:composition-ell}:
\begin{equation}
  \label{eq:composition-par}
  (k'*k)\hat{_s}
  =
  \int_{\Space{R}{2n}}  \rme^{ {\rmh \myh}{}(xy'-yx')}\, \hat{k}'_s(\myh ,x',y')\,
 \hat{k}_s(\myh ,x-x',y-y')\,\rmd x' \rmd y'. 
\end{equation}
This is close to the calculus of hyperbolic PDO obtained
in~\cite{Khrennikov08a}*{Thm.~2}.
Respectively for the commutator of two convolutions we get,
cf.~\eqref{eq:repres-commutator}:
\begin{equation}
  \label{eq:commut-par}
  [k',k]\hat{_s}
  = 
  \int_{\Space{R}{2n}}\!\! \sinh(\myh
   (xy'-yx'))\, \hat{k}'_s(\myh ,x',y')\,
 \hat{k}_s(\myh ,x-x',y-y')\,\rmd x'\rmd y'. 
\end{equation}
This is the hyperbolic version of the Moyal bracket%
\index{Moyal bracket!hyperbolic}%
\index{bracket!Moyal!hyperbolic}%
\index{hyperbolic!Moyal bracket},
cf.~\cite{Khrennikov08a}*{p.~849}, which generates the corresponding
image of the dynamic equation~\eqref{eq:universal}.
\begin{example}
  \begin{enumerate}
  \item 
    \label{it:hyp-harm-oscil} 
    For a quadratic Hamiltonian, e.g.  harmonic oscillator%
    \index{harmonic!oscillator}%
    \index{oscillator!harmonic} from
    Example~\ref{ex:p-harmonic}, the hyperbolic equation and
    respective dynamics is identical to quantum considered before.
  \item Since \(\frac{\partial}{\partial s}\) acts on
    \(\FSpace[\rmh]{F}{2}(\Space{H}{})\) as multiplication by \(\rmh
    \myh\) and \(\rmh^2=1\), the hyperbolic image of the unharmonic
    equation%
    \index{unharmonic!oscillator!hyperbolic}%
    \index{hyperbolic!unharmonic oscillator}%
    \index{oscillator!unharmonic!hyperbolic}~\eqref{eq:p-unharm-osc-dyn} becomes:
    \begin{displaymath}
      \dot{f}=     \left(m  k^2 q\frac{\partial}{\partial
          p}+\frac{\lambda}{6}\left(3q^2\frac{\partial}{\partial p}
          +\frac{\myhbar^2}{4}\frac{\partial^3}{\partial p^3}\right)-\frac{1}{m}
        p \frac{\partial}{\partial q} \right) f. 
    \end{displaymath}
    The difference with quantum mechanical
    equation~\eqref{eq:q-unhar-dyn} is in the sign of the cubic
    derivative. 
  \end{enumerate}
\end{example}
Notably, the hyperbolic setup allows us to linearise many non-linear
problems of classical mechanics%
\index{non-linear dynamics}%
\index{dynamics!non-linear}. It will be interesting to realise new
hyperbolic coordinates introduced to this end
in~\cites{Pilipchuk10a,Pilipchuk11a,PilipchukAndrianovMarkert16a} as a
hyperbolic phase space.

\subsubsection{Hyperbolic Probabilities}
\label{sec:hyperb-prob}
\begin{figure}[htbp]
  \centering
  (a)\includegraphics[scale=.75]{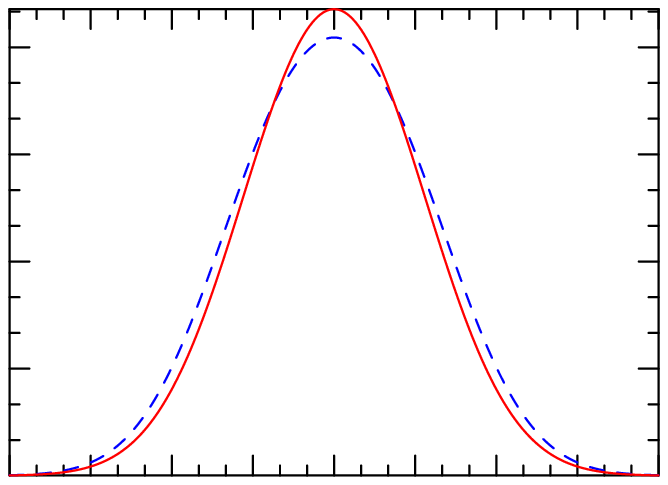}\hfill
  (b)\includegraphics[scale=.75]{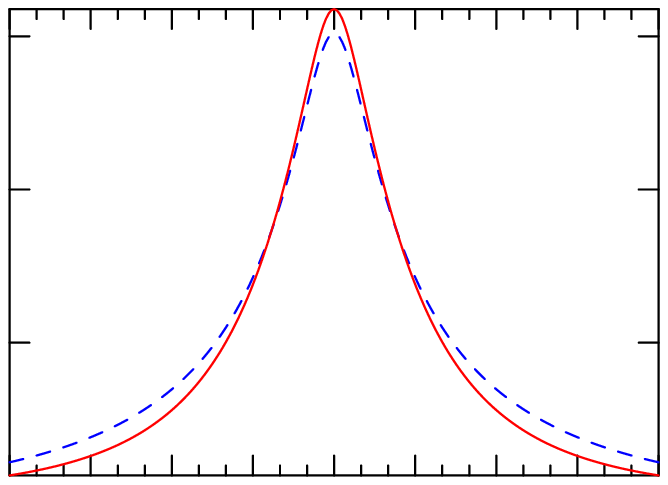}
  \caption{Hyperbolic probabilities: the blue (dashed) graph shows the
    addition of probabilities without interaction, the red (solid)
    graph present the hyperbolic quantum interference. Left picture
    shows the Gaussian state~\eqref{eq:gauss-state}, with the same
    distribution as in quantum mechanics,
    cf.~Fig.~\ref{fig:quant-prob}(a). The right picture shows the
    rational state~\eqref{eq:poly-state}, note the absence of
    interference oscillations in comparison with the quantum state
    on~Fig.~\ref{fig:quant-prob}(b).}
  \label{fig:hyp-prob}
\end{figure}

To calculate probability%
\index{probability!hyperbolic}%
\index{hyperbolic!probability} distribution generated by a hyperbolic state we are
using the general procedure from
Section~\ref{sec:states-probability}. The main differences with the
quantum case are as follows:
\begin{enumerate}
\item The real number \(A\) in the
  expression~\eqref{eq:addition-functional} for the addition of
  probabilities is bigger than \(1\) in absolute value. Thus, it can
  be associated with the hyperbolic cosine \(\cosh \alpha \),
  cf.~Rem.~\ref{re:sine-cosine}, for certain phase
  \(\alpha\in\Space{R}{}\)~\cite{Khrennikov08a}.
\item The nature of hyperbolic interference on two slits is affected
  by the fact that \( \rme^{\rmh \myh s}\) is not periodic and the
  hyperbolic exponent \( \rme^{\rmh t}\) and cosine \(\cosh t\) do not
  oscillate. It is worth to notice that for Gaussian\index{Gaussian} states the
  hyperbolic interference is exactly the same as quantum one,
  cf.~Figs.~\ref{fig:quant-prob}(a) and~\ref{fig:hyp-prob}(a). This is
  similar to coincidence of quantum and hyperbolic dynamics of
  harmonic oscillator.

  The contrast between two types of interference is prominent for
  the rational state~\eqref{eq:poly-state}, which is far from the
  minimal uncertainty, see the different patterns
  on Figs.~\ref{fig:quant-prob}(b) and~\ref{fig:hyp-prob}(b).
\end{enumerate}

\subsubsection{Ladder Operators for the Hyperbolic Subgroup}
\label{sec:hiperbolic-subgroup}%
\index{ladder operator|(}%
\index{operator!ladder|(}

Consider the case of the Hamiltonian \(H=2B\), which is a repulsive
(hyperbolic) harmonic oscillator%
\index{harmonic!oscillator!repulsive (hyperbolic)}%
\index{repulsive!harmonic oscillator}%
\index{hyperbolic!harmonic oscillator}%
\index{oscillator!harmonic!repulsive
  (hyperbolic)}~\cite{Wulfman10a}*{\S~3.8}. The corresponding
one-dimensional subgroup of symplectomorphisms produces hyperbolic
rotations of the phase space, see Fig.~\ref{fig:rotations}.%
\index{symplectic!transformation}%
\index{transformation!symplectic} The
eigenvectors \(v_\mu\) of the operator
\begin{displaymath}
  \uir{\text{SW}}{\myhbar}(2B)v_\nu
  =-\rmi\left(\frac{\myhbar}{4\pi}\frac{\rmd ^2}{\rmd q^2}+\frac{\pi q^2}{\myhbar}\right)v_\nu
  =\rmi\nu v_\nu 
\end{displaymath}
are \emph{Weber--Hermite}%
\index{Weber--Hermite function}%
\index{function!Weber--Hermite} (or \emph{parabolic cylinder}%
\index{parabolic!cylinder function|see{Weber--Hermite function}}%
\index{function!parabolic cylinder|see{Weber--Hermite function}}) functions
\(v_{\nu}=D_{\nu-\frac{1}{2}}\left(\pm2 \rme^{\rmi \frac{\pi}{4}}\sqrt{\frac{\pi}{\myhbar}} q\right)\),
see~\citelist{\cite{ErdelyiMagnusII}*{\S~8.2}
  \cite{SrivastavaTuanYakubovich00a}} for fundamentals of
Weber--Hermite functions and~\cite{ATorre08a} for further
illustrations and applications in optics.

The corresponding one-parameter group is not compact and the
eigenvalues of the operator \(2\tilde{B}\) are not restricted by any
integrality condition, but the raising/lowering operators are
still important~\citelist{\cite{HoweTan92}*{\S~II.1}
  \cite{Mazorchuk09a}*{\S~1.1}}. We again seek solutions in two
subalgebras \(\algebra{h}\) and \(\algebra{sl}_2\) separately.
However, the additional options will be provided by a choice of the
number system: either complex or double.

\begin{example}[Complex Ladder Operators]
\label{sec:compl-ladd-oper}
  
Assuming
\(\ladder[h]{+}=a\tilde{X}+b\tilde{Y}\) from the
commutators~\eqref{eq:cross-comm}--\eqref{eq:cross-comm1} we obtain
the linear equations:
\begin{equation}
  \label{eq:hyp-ladder-compatib}
  -a=\lambda_+ b, \qquad -b=\lambda_+ a.
\end{equation}
The equations have a solution if and only if \(\lambda_+^2-1=0\).
Taking the real roots \(\lambda=\pm1\) we obtain that the
raising/lowering operators are
\(\ladder[h]{\pm}=\tilde{X}\mp\tilde{Y}\).  In the Schr\"odinger
representation%
\index{Schr\"odinger!representation}%
\index{representation!Schr\"odinger}~\eqref{eq:schroedinger-rep-conf-der}
the ladder operators are
\begin{equation}
  \label{eq:ell-ladder-heisen-rep1}
  \ladder[h]{\pm}= 2\pi\rmi q\pm \myhbar \frac{\rmd}{\rmd q}.
\end{equation}
The null solutions \(v_{\pm\frac{1}{2}}(q)= \rme^{\pm\frac{\pi\rmi}{\myhbar}
  q^2}\) to operators \(\uir{}{\myhbar}(\ladder{\pm})\) are also
eigenvectors of the Hamiltonian \(\uir{\text{SW}}{\myhbar}(2B)\) with the
eigenvalue \(\pm\frac{1}{2}\).  However the important distinction from
the elliptic case is, that they are not square-integrable on the real line
anymore.

We can also look for ladder operators within the \(\algebra{sl}_2\),
that is in the form \(\ladder[2h]{+}=a\tilde{A}+b\tilde{B}+c\tilde{Z}\)
for the commutator \([2\tilde{B},\ladder[h]{+}]=\lambda
\ladder[h]{+}\), see \S~\ref{sec:hyperb-ladd-oper}. Within complex
numbers we get only the values \(\lambda=\pm 2\) with the ladder
operators \(\ladder[2h]{\pm}=\pm2\tilde{A}+\tilde{Z}/2\),
see~\citelist{\cite{HoweTan92}*{\S~II.1}
  \cite{Mazorchuk09a}*{\S~1.1}}. Each indecomposable \(\algebra{h}\)-
or \(\algebra{sl}_2\)-module is formed by a one-dimensional chain of
eigenvalues with a transitive action of ladder operators \(\ladder[h]{\pm}\)
or \(\ladder[2h]{\pm}\) respectively. And we again have a
quadratic relation between the ladder operators:
\begin{displaymath}
  \ladder[2h]{\pm}=\frac{\rmi}{4\pi\myhbar}(\ladder[h]{\pm})^2.
\end{displaymath}
\end{example}

\subsubsection{Double Ladder Operators}
\label{sec:double-ladd-oper}
  
There are extra possibilities in in the context of hyperbolic quantum
mechanics~\citelist{\cite{Khrennikov03a} \cite{Khrennikov05a}
  \cite{Khrennikov08a}}.  Here we use the representation of
\(\Space{H}{}\) induced by a hyperbolic character \( \rme^{\rmh \myh
  t}=\cosh (\myh t)+\rmh\sinh(\myh t)\), see
\eqref{eq:segal-bargmann-hyp} and~\cite{Kisil10a}*{(4.5)}, and obtain
the hyperbolic representation of \(\Space{H}{}\),
cf.~\eqref{eq:schroedinger-rep-conf}:  
\begin{equation}
  \label{eq:schroedinger-rep-conf-hyp}
    [\uir{\rmh}{\myh}(s',x',y') \hat{f}\,](q)= \rme^{\rmh\myh (s'-x'y'/2)
    +\rmh x' q}\,\hat{f}(q-\myh y').  
\end{equation}
The corresponding derived representation is
\begin{equation}
  \label{eq:schroedinger-rep-conf-der-hyp}
  \uir{\rmh}{\myh}(X)=\rmh q,\qquad \uir{\rmh}{\myh}(Y)=-\myh \frac{\rmd}{\rmd q},
  \qquad
  \uir{\rmh}{\myh}(S)=\rmh\myh I.
\end{equation}
Then the associated Shale--Weil derived  representation of \(\algebra {sp}_2\) in
the Schwartz space \(\FSpace{S}{}(\Space{R}{})\) is, cf.~\eqref{eq:shale-weil-der}:
\begin{equation}
  \label{eq:shale-weil-der-double}
  \uir{\text{SW}}{\myh}(A) =-\frac{q}{2}\frac{\rmd}{\rmd q}-\frac{1}{4},\quad
  \uir{\text{SW}}{\myh}(B)=\frac{\rmh\myh}{4}\frac{\rmd ^2}{\rmd q^2}-\frac{\rmh q^2}{4\myh},\quad
  \uir{\text{SW}}{\myh}(Z)=-\frac{\rmh\myh}{2}\frac{\rmd ^2}{\rmd q^2}-\frac{\rmh q^2}{2\myh}.
\end{equation}
Note that \(\uir{\text{SW}}{\myh}(B)\) now generates a usual harmonic
oscillator, not the repulsive one like 
\(\uir{\text{SW}}{\myhbar}(B)\) in \eqref{eq:shale-weil-der}. 
However, the expressions in the quadratic algebra are still the same (up to a factor),
cf.~\eqref{eq:quadratic-A}--\eqref{eq:quadratic-Z}:
\begin{align*}
  \qquad\uir{\text{SW}}{\myh}(A) &=
  -\frac{\rmh}{2\myh}(\uir{\rmh}{\myh}(X)\uir{\rmh}{\myh}(Y)
  -{\textstyle\frac{1}{2}}\uir{\rmh}{\myh}(S))\\
  &=-\frac{\rmh}{4\myh}(\uir{\rmh}{\myh}(X)\uir{\rmh}{\myh}(Y)
  +\uir{\rmh}{\myh}(Y)\uir{\rmh}{\myh}(X)),\nonumber \\ 
  \uir{\text{SW}}{\myh}(B) &=
  \frac{\rmh}{4\myh}(\uir{\rmh}{\myh}(X)^2-\uir{\rmh}{\myh}(Y)^2), \\
  \uir{\text{SW}}{\myh}(Z)
  &=-\frac{\rmh}{2\myh}(\uir{\rmh}{\myh}(X)^2+\uir{\rmh}{\myh}(Y)^2). 
\end{align*}
This is due to the Principle~\ref{pr:simil-corr-principle} of
similarity and correspondence%
\index{principle!similarity and correspondence}: we can swap operators \(Z\) and \(B\) with
simultaneous replacement of hypercomplex units \(\rmi\) and \(\rmh\).

The eigenspace of the operator \(2\uir{\text{SW}}{\myh}(B)\) with an
eigenvalue \(\rmh \nu\) are spanned by the Weber--Hermite
functions%
\index{Weber--Hermite function}%
\index{function!Weber--Hermite} \(D_{-\nu-\frac{1}{2}}\left(\pm\sqrt{\frac{2}{\myh}}x\right)\),
see~\cite{ErdelyiMagnusII}*{\S~8.2}.  Functions \(D_\nu\) are
generalisations of the Hermit functions%
\index{Hermite!polynomial}%
\index{polynomial!Hermite}~\eqref{eq:hermit-poly}.

The compatibility condition for a ladder operator within the Lie algebra
\(\algebra{h}\) will be~\eqref{eq:hyp-ladder-compatib} as before,
since it depends only on the
commutators~\eqref{eq:cross-comm}--\eqref{eq:cross-comm1}. Thus, we still
have the set of ladder operators corresponding to values
\(\lambda=\pm1\):
\begin{displaymath}
  \ladder[h]{\pm}=\tilde{X}\mp\tilde{Y}=\rmh q\pm\myh \frac{\rmd}{\rmd q}.
\end{displaymath}
Admitting double numbers%
\index{number!double}%
\index{double!number} we have an extra way to satisfy
\(\lambda^2=1\) in~\eqref{eq:hyp-ladder-compatib} with values
\(\lambda=\pm\rmh\).  Then there is an additional pair of hyperbolic
ladder operators, which are identical (up to factors)
to~\eqref{eq:ell-ladder-heisen-rep}:
\begin{displaymath}
  \ladder[\rmh]{\pm}=\tilde{X}\mp\rmh\tilde{Y}=\rmh q\pm\rmh\myh \frac{\rmd}{\rmd q}.
\end{displaymath}
Pairs \(\ladder[h]{\pm}\) and \(\ladder[\rmh]{\pm}\) shift
eigenvectors in the ``orthogonal'' directions changing their
eigenvalues by \(\pm1\) and \(\pm\rmh\).  Therefore an indecomposable
\(\algebra{sl}_2\)-module can be para\-metrised by a two-dimensional
lattice of eigenvalues in double numbers, see
Fig.~\ref{fig:2D-lattice}.

The following functions 
\begin{eqnarray*}
  v_{\frac{1}{2}}^{\pm\myh}(q)&=& \rme^{\mp\rmh
    q^2/(2\myh)}=\cosh\frac{q^2}{2\myh}\mp \rmh\sinh \frac{q^2}{2\myh},\\
  v_{\frac{1}{2}}^{\pm\rmh}(q)&=& \rme^{\mp  q^2/(2\myh)}
\end{eqnarray*}
are null solutions to the operators \(\ladder[h]{\pm}\) and
\(\ladder[\rmh]{\pm}\) respectively. They are also eigenvectors of
\(2\uir{\text{SW}}{\myh}(B)\) with eigenvalues \(\mp\frac{\rmh}{2}\)
and \(\mp\frac{1}{2}\) respectively. If these functions are used as
mother wavelets for the wavelet transforms generated by the Heisenberg
group, then the image space will consist of the null-solutions of the
following differential operators, see Cor.~\ref{co:cauchy-riemann-integ}:
\begin{align*}\textstyle
  D_{h}&=\overline{X^{r} - Y^{r}}=(\partial_{ x} -\partial_{y})+\frac{\myh}{2}(x+y)\notingiq
\\
  D_{\rmh}&=\overline{X^{r} - \rmh Y^{r}}=(\partial_{ x} +\rmh\partial_{y})-\frac{\myh}{2}(x-\rmh
y)\notingiq
\end{align*}
for \(v_{\frac{1}{2}}^{\pm\myh}\) and \(v_{\frac{1}{2}}^{\pm\rmh}\)
respectively. This is again in line with the classical
result~\eqref{eq:CR-Bargmann}. However, annihilation of the eigenvector
by a ladder operator does not mean that the part of the 2D-lattice
becomes void since it can be reached via alternative routes on this
lattice. Instead of multiplication by a zero, as it happens in the
elliptic case, a half-plane of eigenvalues will be multiplied by the
divisors of zero%
\index{divisor!zero}%
\index{zero!divisor} \(1\pm\rmh\).

We can also search ladder operators within the algebra
\(\algebra{sl}_2\) and admitting double numbers we will again find two
sets of them, cf.~\S~\ref{sec:hyperb-ladd-oper}:
\begin{eqnarray*}
  \ladder[2h]{\pm} &=&\pm\tilde{A}+\tilde{Z}/2 =
   \mp\frac{q}{2}\frac{\rmd}{\rmd q}\mp\frac{1}{4}- \frac{\rmh\myh}{4}\frac{\rmd ^2}{\rmd q^2}-\frac{\rmh q^2}{4\myh}=-\frac{\rmh}{4\myh}(\ladder[h]{\pm})^2,\\
  \ladder[2\rmh]{\pm}&=&\pm\rmh\tilde{A}+\tilde{Z}/2=  
  \mp\frac{\rmh q}{2}\frac{\rmd}{\rmd q}\mp\frac{\rmh}{4}-\frac{\rmh\myh}{4}\frac{\rmd ^2}{\rmd q^2}-\frac{\rmh q^2}{4\myh}=-\frac{\rmh}{4\myh}(\ladder[\rmh]{\pm})^2.
\end{eqnarray*}
Again the operators \(\ladder[2h]{\pm}\) and \(\ladder[2h]{\pm}\) produce
double shifts in the orthogonal directions on the same two-dimensional
lattice in Fig.~\ref{fig:2D-lattice}.%
\index{ladder operator|)}%
\index{operator!ladder|)}%

\subsection{Parabolic (Classical) Representations on the Phase Space}
\label{sec:class-repr-phase}
After the previous two cases it is natural to link classical mechanics
with dual numbers%
\index{dual!number|indef}%
\index{number!dual|indef}%
\index{dual!number|(}%
\index{number!dual|(} generated by the parabolic unit \(\rmp^2=0\).
Connection of the parabolic unit \(\rmp\)
with the Galilean group of symmetries of classical mechanics is around
for a while~\cite{Yaglom79}*{App.~C}, for other applications
see~\citelist{\cite{CatoniCannataNichelatti04}
  \cite{Zejliger34}*{\S~I.2(10)} \cite{Gromov90a} \cite{Dimentberg78a}
  \cite{Dimentberg78b}}.

However, the nilpotency of the parabolic unit \(\rmp\)
makes it difficult if we will work with dual number valued functions
only.  To overcome this issue we consider a commutative and
associative four-dimensional real algebra \(\algebra{C}\)
spanned by \(1\),
\(\rmi\),
\(\rmp\)
and \(\rmi\rmp\)
with identities \(\rmi^2=-1\)
and \(\rmp^2=0\). A seminorm on \(\algebra{C}\) is defined as follows:
\begin{displaymath}
  \modulus{a+b\rmi+c\rmp+d\rmi\rmp}^2=a^2+b^2.
\end{displaymath}

\subsubsection{Classical Non-Commutative Representations}
\label{sec:class-non-comm}
We wish to build a representation of the Heisenberg group which will
be a classical analog of the Fock--Segal--Barg\-mann
representation~\eqref{eq:stone-inf}.  To this end, we introduce the
space \(\FSpace[\rmp]{F}{\myh}(\Space{H}{})\) of
\(\algebra{C}\)-valued functions on \(\Space{H}{}\) with the
property:
\begin{equation}
  \label{eq:induced-prop-p}
  f(s+s',h,y)= \rme^{\rmp \myh s'} f(s,x,y), \qquad \text{ for all}\quad]
  (s,x,y)\in \Space{H}{},\ s'\in \Space{R}{} 
\end{equation}
and the square integrability condition~\eqref{eq:L2-condition}. Here
as before, \( \rme^{\rmp \myh s'}=1+\rmp \myh s'\) in line with the Taylor
expansion of the exponent. The described space  is
invariant under the left shifts and we restrict the left group action to
\(\FSpace[\rmp]{F}{\myh}(\Space{H}{})\). 

An equivalent form of the induced representation acts on
\(\FSpace[\rmp]{F}{\myh}(\Space{R}{2n})\), where \(\Space{R}{2n})\) is
the homogeneous space of \(\Space{H}{}\) over its centre.  The
Fourier transform \((x,y)\mapsto(q,p)\) intertwines the last
representation with the following action on \(\algebra{C}\)-valued
functions on the phase space:%
\index{Fock--Segal--Bargmann!representation!classic (parabolic)}%
\index{representation!Fock--Segal--Bargmann!classic (parabolic)}%
\index{classic!Fock--Segal--Bargmann representation}%
\index{parabolic!Fock--Segal--Bargmann representation}%
\index{Heisenberg!group!Fock--Segal--Bargmann representation!classic
  (parabolic)}%
\index{group!Heisenberg!Fock--Segal--Bargmann representation!classic
  (parabolic)}
\begin{eqnarray}
  \label{eq:dual-repres}
  \uir{\rmp}{\myh}(s,x,y): f(q,p) &\mapsto&  \rme^{-2\pi\rmi(xq+yp)}(f(q,p)\\
  &&\qquad +\rmp\hbar(2\pi s f(q,p)
  -\frac{y\rmi}{2}f'_q(q,p)+\frac{x\rmi}{2}f'_p(q,p))).
  \nonumber
\end{eqnarray}
Note, that for any real polynomial  \(p(x)\) 
algebraic manipulations show that \(p(x+\rmp y)=p(x)+\rmp y
p'(x)\). If extend this rule to any differentiable function then
\eqref{eq:dual-repres} can be re-written as:
\begin{equation}
  \label{eq:dual-as-SB-1}
  \uir{}{\myhbar}(s,x,y): f (q,p) \mapsto 
   \rme^{-2\pi(\rmp \myhbar s+\rmi(qx+py))}
  f \left(q-\frac{\rmi\myhbar}{2} \rmp y, p+\frac{\rmi\myhbar}{2} \rmp
    x\right).
\end{equation}
The later form completely agrees with FSB representation~\eqref{eq:stone-inf}. 
\begin{remark}
  \label{re:classic-rep}
  Comparing the traditional
  infinite-dimensional~\eqref{eq:stone-inf} and
  one-dimen\-sional~\eqref{eq:commut-repres} representations of
  \(\Space{H}{}\) we can note that the properties of the
  representation~\eqref{eq:dual-repres} are a non-trivial mixture of
  the former:  
  \begin{enumerate}
  \item \label{it:class-non-commut} 
    The action~\eqref{eq:dual-repres}
    is non-commutative, similarly to the quantum
    representation~\eqref{eq:stone-inf} and unlike the classical
    one~\eqref{eq:commut-repres}. This non-commutativity will produce
    the Hamilton equations below in a way very similar to Heisenberg
    equation, see Rem.~\ref{re:hamilton-from-nc}.
  \item \label{it:class-locality} The
    representation~\eqref{eq:dual-repres} does not change the support
    of a function \(f\) on the phase space, similarly to the
    classical representation~\eqref{eq:commut-repres} and unlike the
    quantum one~\eqref{eq:stone-inf}. Such a localised action will be
    responsible later for an absence of an interference in classical
    probabilities.
  \item The parabolic representation~\eqref{eq:dual-repres} can not be
    derived from either the elliptic~\eqref{eq:stone-inf} or
    hyperbolic~\eqref{eq:segal-bargmann-hyp} by the plain substitution
    \(\myh=0\).
  \end{enumerate}
\end{remark}
We may also write a classical Schr\"odinger type representation.%
\index{representation!Heisenberg group!classic (parabolic)}%
\index{Schr\"odinger!representation!classic (parabolic)}%
\index{parabolic!Schr\"odinger representation!}%
\index{classic!Schr\"odinger representation!}%
\index{representation!Schr\"odinger!classic (parabolic)}
According to \S~\ref{sec:concl-induc-repr} we get a representation formally
very similar to the elliptic~\eqref{eq:schroedinger-rep} and
hyperbolic versions~\eqref{eq:schroedinger-rep-hyp}:
\begin{eqnarray}
    \label{eq:schroedinger-rep-par}
    [\uir{\rmp}{\chi}(s',x',y') f](x)&=& \rme^{-\rmp\myh
      (s'+xy'-x'y'/2)}f(x-x')\\
    &=&(1-\rmp\myh (s'+xy'-\textstyle\frac{1}{2}x'y')) f(x-x').\nonumber 
\end{eqnarray}
However due to nilpotency of \(\rmp\) the (complex) Fourier transform
\(x\mapsto q\) produces a different formula for parabolic
Schr\"odinger type representation in the configuration space%
\index{configuration!space}%
\index{space!configuration},
cf.~\eqref{eq:schroedinger-rep-conf}
and~\eqref{eq:schroedinger-rep-conf-hyp}:
\begin{align}
  [\uir{\rmp}{\chi}(s',x',y') \hat{f}](q)&=  \rme^{2\pi\rmi x' q}\left(
                                           \left(1-\rmp\myh (s'-{\textstyle\frac{1}{2}}x'y')\right)    \hat{f}(q)
                                           +\rmp\rmi\hbar
                                           y'\hat{f}'(q)\right)
                                           \nonumber \\
                                         &=  \rme^{2\pi(-\rmp\hbar (s'-{\textstyle\frac{1}{2}}x'y')+\rmi x' q)}
                                           \hat{f}(q+\rmp\rmi\hbar
                                           y').
                                           \label{eq:schroedinger-rep-conf-der-par}
\end{align}
This representation shares all properties mentioned in
Rem.~\ref{re:classic-rep} as well.

\subsubsection{Hamilton Equation}
\label{sec:hamilton-equation}


The identity \( \rme^{ \rmp t}- \rme^{ -\rmp t}= 2\rmp t\)
suggests that a parabolic version of the sine function is the identity
function, while the parabolic cosine is identically equal to one,
cf. \S~\ref{sec:hyperc-char}
and~\cites{HerranzOrtegaSantander99a,Kisil07a}.  From this we obtain
the parabolic version of the commutator~\eqref{eq:repres-commutator}:
\begin{eqnarray*}
  [k',k]\hat{_s}(\rmp \myh, x,y) 
  &=& 
  \rmp \myh\int_{\Space{R}{2n}}
 (xy'-yx') \\
 &&{}\times\, \hat{k}'_s(\rmp \myh,x',y')  \,
 \hat{k}_s(\rmp \myh,x-x',y-y')\,\rmd x'\rmd y', \nonumber 
\end{eqnarray*}
for the partial parabolic Fourier-type transform \(\hat{k}_s\) of the
kernels.  Thus, the parabolic representation of the dynamical
equation~\eqref{eq:universal} becomes:
\begin{eqnarray}
  \label{eq:dynamics-par}
  \lefteqn{
  \rmp\myh \frac{\rmd \hat{f}_s}{\rmd t}(\rmp\myh,x,y;t)}\quad&\\
  =&\rmp \myh \int_{\Space{R}{2n}}
 (xy'-yx')\, 
\hat{H}_s(\rmp \myh,x',y')  \,
 \hat{f}_s(\rmp \myh,x-x',y-y';t)\,\rmd x'\rmd y'. \nonumber 
\end{eqnarray}
Although there is no possibility to divide by \(\rmp\) (since it is a
zero divisor) we can obviously eliminate \(\rmp \myh \) from the both
sides if the rest of the expressions are real.  Moreover, this can be
done ``in advance'' through a kind of the antiderivative operator
considered in~\cite{Kisil02e}*{(4.1)}. This will prevent ``imaginary
parts'' of the remaining expressions (which contain the factor
\(\rmp\)) from vanishing.
\begin{remark}
  It is noteworthy that the Planck constant%
  \index{Planck!constant}%
  \index{constant!Planck} completely disappeared
  from the dynamical equation. Thus the only prediction about it
  following from our construction is \(\myh\neq 0\), which was
  confirmed by experiments, of course. 
\end{remark}
Using the duality between the Lie algebra \(\algebra{h}\)
of \(\Space{H}{}\)
and the phase space we can find an adjoint equation for observables on
the phase space. To this end we apply the usual Fourier transform
\((x,y)\mapsto(q,p)\).
It turn to be the Hamilton equation \eqref{eq:hamilton-poisson}%
\index{Hamilton!equation}%
\index{equation!Hamilton}~\cite{Kisil02e}*{(4.7)}.  However, the
transition to the phase space is more a custom rather than a necessity
and in many cases we can efficiently work on the Heisenberg group
itself.

\begin{remark}
  \label{re:hamilton-from-nc}
  It is noteworthy, that the non-commutative
  representation~\eqref{eq:dual-repres} allows to obtain the Hamilton
  equation directly from the commutator
  \([\uir{\rmp}{\myh}(k_1),\uir{\rmp}{\myh}(k_2)]\). Indeed, its
  straightforward  evaluation will produce exactly the above expression. On
  the contrast such a commutator for the commutative
  representation~\eqref{eq:commut-repres} is zero and to obtain the
  Hamilton equation we have to work with an additional tools, e.g. an
  anti-derivative~\cite{Kisil02e}*{(4.1)}. 
\end{remark}

\begin{example}
  \begin{enumerate}
  \item For the harmonic oscillator%
    \index{harmonic!oscillator}%
    \index{oscillator!harmonic} in Example~\ref{ex:p-harmonic} the
    equation~\eqref{eq:dynamics-par} again reduces to the
    form~\eqref{eq:p-harm-osc-dyn} with the solution given
    by~\eqref{eq:p-harm-sol}. The adjoint equation of the harmonic
    oscillator on the phase space is not different from the quantum
    written in
    Example~\ref{ex:quntum-oscillators}(\ref{it:q-harmonic}). This is
    true for any Hamiltonian of at most quadratic order.
  \item 
    For non-quadratic Hamiltonians classical and quantum dynamics
    are different, of course. For example, 
    the cubic term of \(\partial_s\) in the
    equation~\eqref{eq:p-unharm-osc-dyn} will generate the factor
    \(\rmp^3=0\) and thus vanish. Thus the
    equation~\eqref{eq:dynamics-par} of the unharmonic oscillator%
    \index{unharmonic!oscillator}%
    \index{oscillator!unharmonic} on
    \(\Space{H}{}\) becomes:
    \begin{displaymath}
      \dot{f}=     \left(m  k^2 y\frac{\partial}{\partial x}
        +\frac{\lambda y}{2}\frac{\partial^2}{\partial x^2} 
          -\frac{1}{m} x
        \frac{\partial}{\partial y} \right) f. 
   \end{displaymath}
   The adjoint equation on the phase space is:
    \begin{displaymath}
      \dot{f}=     \left(\left(m  k^2
          q+\frac{\lambda}{2}q^2\right)
        \frac{\partial}{\partial p} -\frac{1}{m}  p \frac{\partial}{\partial q} \right) f. 
    \end{displaymath}
    The last equation is the classical
    Hamilton equation generated by the
    cubic potential~\eqref{eq:unharmonic-hamiltonian}. Qualitative
    analysis of its dynamics can be found in many textbooks
    \citelist{\cite{Arnold91}*{\S~4.C, Pic.~12} \cite{PercivalRichards82}*{\S~4.4}}. 
  \end{enumerate}
\end{example}

\begin{remark}
  We have obtained the \emph{Poisson bracket}%
  \index{Poisson!bracket}%
  \index{bracket!Poisson} from the commutator of
  convolutions on \(\Space{H}{}\) without any quasiclassical limit
  \(\myh\rightarrow 0\). This has a common source with the deduction
  of main calculus theorems in~\cite{CatoniCannataNichelatti04} based
  on dual numbers. As explained in~\cite{Kisil05a}*{Rem.~6.9} this is
  due to the similarity between the parabolic unit \(\rmp\) and the
  infinitesimal number used in non-standard analysis~\cite{Devis77}.
  In other words, we never need to take care about terms of order
  \(O(\myh^2)\) because they will be wiped out by \(\rmp^2=0\).
\end{remark}
An alternative derivation of classical dynamics from the Heisenberg
group is given in the recent paper~\cite{Low09a}.

\subsubsection{Classical Probabilities}
\label{sec:class-prob}
It is worth to notice that dual numbers are not only helpful in
reproducing classical Hamiltonian dynamics, they also provide the
classic rule for addition of probabilities.%
\index{probability!classic (parabolic)}%
\index{classic!probability}%
\index{parabolic!probability|see{classic probability}}
We use the same formula~\eqref{eq:kernel-state} to calculate kernels of
the states. The important difference now that the
representation~\eqref{eq:dual-repres} does not change the support of
functions. Thus if we calculate the correlation term
\(\scalar{v_1}{\uir{}{}(g)v_2}\) in~\eqref{eq:kernel-add}, then it
will be zero for every two vectors \(v_1\) and \(v_2\) which have 
disjoint supports in the phase space. Thus no interference%
\index{interference} similar to
quantum or hyperbolic cases (Subsection~\ref{sec:quantum-probabilities})
is possible.

\subsubsection{Ladder Operator for the Nilpotent Subgroup}
\label{sec:nilpotent-subgroup}%
\index{ladder operator|(}%
\index{operator!ladder|(}

Finally we look for ladder operators for the Hamiltonian
\(\tilde{B}+\tilde{Z}/2\) or, equivalently,
\(-\tilde{B}+\tilde{Z}/2\). It can be identified with a free
particle~\cite{Wulfman10a}*{\S~3.8}. 

We can search for ladder operators in the
representation~\eqref{eq:schroedinger-rep-conf-der}--\eqref{eq:shale-weil-der}
within the Lie algebra \(\algebra{h}\) in the form
\(\ladder[\rmp]{\pm}=a\tilde{X}+b\tilde{Y}\). This is possible if and only if
\begin{equation}
  \label{eq:compatib-parab}
  -b=\lambda a,\quad 0=\lambda b.
\end{equation}
The compatibility condition \(\lambda^2=0\) implies \(\lambda=0\)
within complex numbers. However, such a ``ladder'' operator
\(\ladder[\rmp]{\pm}=a\tilde{X}\) produces
only the zero shift on the eigenvectors, cf.~\eqref{eq:ladder-action}.

Another possibility appears if we consider the representation of the
Heisenberg group induced by dual-valued characters. On the
configuration space%
\index{configuration!space}%
\index{space!configuration} such a representation
is~\eqref{eq:schroedinger-rep-conf-der-par}: 
\begin{equation}
  \label{eq:schroedinger-rep-conf-par}
  [\uir{\rmp}{\chi}(s',x',y') \hat{f}](q)=  \rme^{2\pi(-\rmp\hbar (s'-{\textstyle\frac{1}{2}}x'y')+\rmi x' q)}
                                           \hat{f}(q+\rmp\rmi\hbar
                                           y').
\end{equation}
The corresponding derived representation of \(\algebra{h}\) is 
\begin{displaymath}
  \uir{\rmp}{\myh}(X)=2\pi\rmi q,\qquad
  \uir{\rmp}{\myh}(Y)={-\rmi\rmp\hbar} \frac{\rmd}{\rmd q},
  \qquad
  \uir{\rmp}{\myh}(S)=-2\pi\rmp\hbar I.
\end{displaymath}
However, the Shale--Weil extension generated by this representation is
inconvenient.  It is better to consider the FSB--type parabolic
representation~\eqref{eq:dual-repres}%
\index{Fock--Segal--Bargmann!representation}%
\index{representation!Fock--Segal--Bargmann} on the phase space
induced by the same dual-valued character.
Then the derived representation of \(\algebra{h}\) is:
\begin{equation}
  \label{eq:schroedinger-rep-conf-der-par1}
  \uir{p}{\myh}(X)=-2\pi\rmi q+\frac{\rmi\rmp\hbar}{2}\partial_{p},\qquad
  \uir{p}{\myh}(Y)=-2\pi\rmi p-\frac{\rmi\rmp\hbar}{2}\partial_{q},
  \qquad
  \uir{p}{\myh}(S)=2\pi\rmp\hbar I.
\end{equation}
An advantage of the FSB representation%
  \index{Fock--Segal--Bargmann!representation}%
  \index{representation!Fock--Segal--Bargmann} is that the
derived form of the parabolic Shale--Weil representation coincides
with the elliptic one~\eqref{eq:shale-weil-der-ell}.

Eigenfunctions with the eigenvalue \(\mu\) of the parabolic
Hamiltonian \(\tilde{B}+\tilde{Z}/2=q\partial_p\) have the form
\begin{equation}
  \label{eq:par-eigenfunctions}
  v_\mu (q,p)= \rme^{\mu p/q} f(q), \quad \text{ with an arbitrary
    function}\quad f(q).
\end{equation}

The linear equations defining the corresponding ladder operator
\(\ladder[\rmp]{\pm}=a\tilde{X}+b\tilde{Y}\) in the algebra
\(\algebra{h}\) are~\eqref{eq:compatib-parab}.  The compatibility
condition \(\lambda^2=0\) implies \(\lambda=0\) within complex numbers
again. Admitting dual numbers we have additional values
\(\lambda=\pm\rmp\lambda_1\) with \(\lambda_1\in\Space{C}{}\) with the
corresponding ladder operators
\begin{displaymath}
  \ladder[\rmp]{\pm}=\tilde{X}\mp\rmp\lambda_1\tilde{Y}=
  -2\pi\rmi q+\frac{\rmi\rmp\hbar}{2}\partial_{p}\pm 2\pi\rmi\rmp
  \lambda_1 p= 
  -2\pi\rmi q+   \rmp\rmi( \pm 2\pi\lambda_1 p+\frac{\hbar}{2}\partial_{p}).
\end{displaymath}
For the eigenvalue \(\mu=\mu_0+\rmp\mu_1\) with \(\mu_0\),
\(\mu_1\in\Space{C}{}\) the
eigenfunction~\eqref{eq:par-eigenfunctions} can be rewritten as:
\begin{equation}
  \label{eq:par-eigenfunctions-1}
  v_\mu (q,p)= \rme^{\mu  p/q} f(q)=  \rme^{\mu_0  p/q}\left(1+\rmp\mu_1
    \frac{p}{q}\right) f(q)
\end{equation}
due to the nilpotency of \(\rmp\).  Then the ladder action of
\(\ladder[\rmp]{\pm}\) is \(\mu_0+\rmp\mu_1\mapsto \mu_0+\rmp(\mu_1\pm
\lambda_1)\).  Therefore these operators are suitable for building
\(\algebra{sl}_2\)-modules with a one-dimensional chain of
eigenvalues.

Finally, consider the ladder operator for the same element \(B+Z/2\)
within the Lie algebra \(\algebra{sl}_2\), cf.
\S~\ref{sec:parab-ladd-oper}.  There is the only operator
\(\ladder[p]{\pm}=\tilde{B}+\tilde{Z}/2\) corresponding to complex
coefficients, which does not affect the eigenvalues.  However the dual
numbers lead to the operators
\begin{displaymath}
  \ladder[\rmp]{\pm}=\pm \rmp\lambda_2\tilde{A}+\tilde{B}+\tilde{Z}/2
  =
  \pm\frac{\rmp\lambda_2}{2}\left(q\partial_{q}-p\partial_{p}\right)+q\partial_{p}, 
  \qquad \lambda_2\in\Space{C}{}. 
\end{displaymath}
These operator act on eigenvalues in a non-trivial way.%
\index{dual!number|)}%
\index{number!dual|)}

\section{Wavelet Transform, Uncertainty Relation and Analyticity}
\label{sec:gaussian}

There are two and a half main examples of reproducing kernel spaces of
analytic function. One is the Fock--Segal--Bargmann (FSB) space and
others (one and a half)\,--\,the Bergman and Hardy spaces on the upper
half-plane. The first
space is generated by the Heisenberg
group~\citelist{\cite{Kisil11c}*{\S~7.3} \cite{Folland89}*{\S~1.6}},
two others\,--\,by the group \(\SL\)~\cite{Kisil11c}*{\S~4.2} (this
explains our way of counting). 

Those spaces have the following properties, which make their study
particularly pleasant and fruitful:
\begin{enumerate}
\item \label{item:group}
  There is a group, which acts transitively on functions'
  domain.
\item \label{item:reproducing-kernel}
  There is a reproducing kernel.
\item \label{item:spac-cons-holom}
  The space consists of holomorphic functions.
\end{enumerate}
Furthermore, for FSB space there is the following property:
\begin{enumerate}
\item[iv.] \label{item:uncertainity}
  The reproducing kernel is generated by a function, which minimises
  the uncertainty for coordinate and momentum observables.
\end{enumerate}
It is known, that a transformation group is responsible for the
appearance of the reproducing
kernel~\cite{AliAntGaz00}*{Thm.~8.1.3}. This paper shows that the last
two properties are equivalent and connected to the group as well.

\subsection{Induced Wavelet (Covariant) Transform}
\label{sec:induc-wavel-transf}

The following object is common in quantum
mechanics~\cite{Kisil02e}, signal processing, harmonic
analysis~\cite{Kisil12d}, operator theory~\cites{Kisil12b,Kisil13a}
and many other areas~\cite{Kisil11c}.  Therefore, it has various
names~\cite{AliAntGaz00}: coherent states, wavelets, matrix
coefficients, etc.  In the most fundamental
situation~\cite{AliAntGaz00}*{Ch.~8}, we start from an irreducible
unitary representation \(\uir{}{}\) of a Lie group \(G\) in a Hilbert
space \(\FSpace{H}{}\). For a vector \(f\in \FSpace{H}{}\) (called mother wavelet, vacuum
state, etc.), we define the map \(\oper{W}_f\) from \(\FSpace{H}{}\) to a space
of functions on \(G\) by:
\begin{equation}
  \label{eq:wavelet-trans}
  [\oper{W}_f v](g)=\tilde{v}(g):=\scalar{v}{\uir{}{}(g)f}.
\end{equation}
Under the above assumptions, \(\tilde{v}(g)\) is a bounded continuous
function on \(G\). The map \(\oper{W}_f\) intertwines \(\uir{}{}(g)\)
with the left shifts on \(G\):
\begin{equation}
  \label{eq:left-shift-itertwine}
  \oper{W}_f \circ \uir{}{}(g) =\Lambda(g)\circ   \oper{W}_f ,\qquad
  \text{ where}\quad] \Lambda(g):\tilde{v}(g')\mapsto
  \tilde{v}(g^{-1}g').
\end{equation}
Thus, the image \(\oper{W}_f \FSpace{H}{}\)
is invariant under the left shifts on \(G\).
If \(\uir{}{}\)
is square integrable and \(f\)
is admissible~\cite{AliAntGaz00}*{\S~8.1}, then \(\tilde{v}(g)\)
is square-integrable with respect to the Haar measure on \(G\).
Moreover, it is a reproducing kernel Hilbert space and the kernel is
\(k(g)= [\oper{W}_f f](g)\).
At this point, none of admissible vectors has an advantage over
others.

It is common~\cite{Kisil11c}*{\S~5.1}, that there exists a closed subgroup
\(H\subset G\) and a respective \(f\in \FSpace{H}{}\) such that \(\uir{}{}(h)
f=\chi(h) f\) for some character \(\chi\) of \(H\). In this case, it
is enough to know values of \(\tilde{v}(\map{s}(x))\), for any
continuous section \(\map{s}\) from the homogeneous space \(X=G/H\) to
\(G\). The map \(v \mapsto \tilde{v}(x)=\tilde{v}(\map{s}(x))\)
intertwines \(\uir{}{}\) with the representation \(\uir{}{\chi}\) in a
certain function space on \(X\) induced by the character \(\chi\) of
\(H\)~\cite{Kirillov76}*{\S~13.2}. We call the map
\begin{equation}
  \label{eq:induce-wavelet-transform}
  \oper{W}_f: v \mapsto \tilde{v}(x) = \scalar{v}{\uir{}{}(\map{s}(x))f},\qquad
  \text{where} \quad x\in G/H
\end{equation}
the \emph{induced wavelet
  transform}~\cite{Kisil11c}*{\S~5.1}.

For example, if \(G=\Space{H}{}\), \(H=\{(s,0,0)\in\Space{H}{}:\
s\in\Space{R}{}\}\) and its character
\(\chi_\myhbar(s,0,0)= \rme^{2\pi\rmi\myhbar s}\), then any vector
\(f\in\FSpace{L}{2}(\Space{R}{})\) satisfies \(\uir{}{\myhbar}(s,0,0)
f=\chi_\myhbar(s) f\) for the
representation~\eqref{eq:H1-schroedinger-rep-q-space}. Thus, we still
do not have a reason to prefer any admissible vector to
others. 

\subsection{The Uncertainty Relation}
\label{sec:uncert-relat}

In quantum mechanics~\cite{Folland89}*{\S~1.1}, an observable (that
is, a self-adjoint operator on a Hilbert space \(\FSpace{H}{}\))
\(A\)
produces the expectation value \(\bar{A}\)
on a pure state (that is, a unit vector) \(\phi\in \FSpace{H}{}\)
by \(\bar{A}=\scalar{A\phi}{\phi}\).
Then, the dispersion is evaluated as follow:
\begin{equation}
  \label{eq:dispersion}
  \Delta_\phi^2(A)=\scalar{(A-\bar{A})^2\phi}{\phi}=
  \scalar{(A-\bar{A})\phi}{(A-\bar{A})\phi}=
  \norm{(A-\bar{A})\phi}^2.
\end{equation}
The next theorem links obstructions of exact simultaneous
measurements with non-commutativity of observables.
\begin{theorem}[The Uncertainty relation]
  \label{th:uncertainty}
  If \(A\) and \(B\) are self-adjoint operators on a Hilbert space \(\FSpace{H}{}\), then
  \begin{equation}
    \label{eq:uncertainty-abstract}
    \textstyle
    \norm{(A-a)u}\norm{(B-b)u}\geq \frac{1}{2} \modulus{\scalar{(AB-BA)u}{u}}\notingiq
  \end{equation}
  for any \(u\in \FSpace{H}{}\) from the domains of \(AB\) and \(BA\) and \(a\),
  \(b\in\Space{R}{}\). Equality holds precisely when \(u\) is a
  solution of \(((A-a)+\rmi r (B-b))u=0\) for some real \(r\).
\end{theorem}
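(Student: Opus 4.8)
The plan is to reduce to ``centred'' operators and then apply the Cauchy--Schwarz inequality, reading off the commutator from the imaginary part of a single scalar product. First I set $A_0=A-a$ and $B_0=B-b$; because $a,b\in\Space{R}{}$ these are again self-adjoint on the same domains, and the commutator is untouched, $A_0B_0-B_0A_0=AB-BA$. So it is enough to establish the bound with $A_0$, $B_0$ in place of $A-a$, $B-b$, which eliminates the parameters $a$ and $b$ from all computations.

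The core of the argument is the chain
\begin{equation*}
  \norm{A_0u}\,\norm{B_0u}\ \geq\ \modulus{\scalar{A_0u}{B_0u}}\ \geq\ \modulus{\Im\scalar{A_0u}{B_0u}}\ =\ \half\,\modulus{\scalar{(AB-BA)u}{u}}.
\end{equation*}
The first inequality is Cauchy--Schwarz. For the final equality I would use self-adjointness to move operators into the first slot of the pairing, $\scalar{A_0u}{B_0u}=\scalar{B_0A_0u}{u}$ and $\overline{\scalar{A_0u}{B_0u}}=\scalar{B_0u}{A_0u}=\scalar{A_0B_0u}{u}$, so that $\scalar{A_0u}{B_0u}-\overline{\scalar{A_0u}{B_0u}}=\scalar{(B_0A_0-A_0B_0)u}{u}=-\scalar{(AB-BA)u}{u}$; dividing by $2\rmi$ and taking absolute values yields $\modulus{\Im\scalar{A_0u}{B_0u}}=\half\modulus{\scalar{(AB-BA)u}{u}}$. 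The hypothesis that $u$ lies in the domains of both $AB$ and $BA$ is precisely what makes each term here well defined.

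For the equality clause I would trace when both ``$\geq$'' signs are saturated. Cauchy--Schwarz is sharp exactly when $A_0u$ and $B_0u$ are linearly dependent, say $A_0u=\lambda B_0u$ for a scalar $\lambda$; the inequality $\modulus{z}\geq\modulus{\Im z}$ is sharp exactly when $\Re z=0$. Substituting gives $\scalar{A_0u}{B_0u}=\lambda\,\norm{B_0u}^2$, whose real part vanishes iff $\lambda=-\rmi r$ is purely imaginary with $r\in\Space{R}{}$. This is equivalent to $(A_0+\rmi r B_0)u=0$, that is $((A-a)+\rmi r(B-b))u=0$, the stated condition.

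The real work is not the inequality but the clean handling of the equality case: one must check that the two saturation conditions are compatible and that together they give exactly the first-order equation, and one must cover the degenerate situations where $A_0u$ or $B_0u$ vanishes (there the equation holds for a suitable $r$, or trivially, and the corresponding side of the inequality is zero). The domain technicalities are minor and serve only to license the identities $\scalar{A_0u}{B_0u}=\scalar{B_0A_0u}{u}$ used in the self-adjointness step, but they are worth stating so that the manipulations are rigorous.
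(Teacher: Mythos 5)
Your proposal is correct and follows essentially the same route as the paper's proof: centre the operators, use self-adjointness to identify \(\scalar{(AB-BA)u}{u}\) with \(2\rmi\) times the imaginary part of the cross inner product \(\scalar{(B-b)u}{(A-a)u}\), apply Cauchy--Schwarz, and read off the equality case from proportionality by a purely imaginary scalar. Your extra attention to the degenerate case where \((A-a)u\) or \((B-b)u\) vanishes is a small refinement the paper glosses over, but it does not change the argument.
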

\begin{proof} The proof is well-known~\cite{Folland89}*{\S~1.3}, but
  it is short, instructive and relevant for the following discussion,
  thus we include it in full. We start from simple algebraic
  transformations:
  \begin{eqnarray}
    \scalar{(AB-BA)u}{u}&=&\scalar{((A-a)(B-b)-(B-b)(A-a))u}{u}\nonumber \\
    &=&\scalar{(B-b)u}{(A-a)u}-\scalar{(A-a))u}{(B-b)u}\nonumber \\
    &=&2\rmi \Im\scalar{(B-b)u}{(A-a)u}
  \end{eqnarray}
  Then by the Cauchy--Schwartz inequality:
  \begin{displaymath}
    \label{eq:uncert-rel-proof}
    \textstyle\frac{1}{2}\scalar{(AB-BA)u}{u}\leq
    \modulus{\scalar{(B-b)u}{(A-a)u}}\leq\norm{(B-b)u}\norm {(A-a)u}.
  \end{displaymath}
  The equality holds if and only if \((B-b)u\) and \((A-a)u\) are
  proportional by a \emph{purely imaginary} scalar.
\end{proof}
The famous application of the above theorem is the following
fundamental relation in quantum
mechanics. We use~\cite{Kisil10a}*{(3.5)} the 
Schr\"odinger representation%
\index{Schr\"odinger!representation|indef}%
\index{representation!Schr\"odinger|indef}
\eqref{eq:schroedinger-rep-conf} of the Heisenberg group~\eqref{eq:H1-schroedinger-rep-q-space}: 
\begin{equation}
  \label{eq:H1-schroedinger-rep-q-space1}
  [\uir{}{\myhbar}(s',x',y') \hat{f}\,](q)= \rme^{-2\pi\rmi\myhbar (s'+x'y'/2)
    -2\pi\rmi x' q}\,\hat{f}(q+\myhbar y').  
\end{equation}
Elements of the Lie algebra \(\algebra{h}\), corresponding to the
infinitesimal generators \(X\) and \(Y\) of one-parameters subgroups
\((0,t/(2\pi),0)\) and \((0,0,t)\) in \(\Space{H}{}\), are
represented in~\eqref{eq:H1-schroedinger-rep-q-space1} by the
(unbounded) operators \(\tilde{M}\) and \(\tilde{D}\) on
\(\FSpace{L}{2}(\Space{R}{})\):
\begin{equation}
  \label{eq:lie-algebra-repres}
  \textstyle
  \tilde{M}=-\rmi q,\quad \tilde{D}=\myhbar\frac{\rmd}{\rmd q},\quad \text{with the
    commutator} \quad [\tilde{M},\tilde{D}]= \rmi\myhbar I.
\end{equation}
In the Schr\"odinger model of quantum mechanics,
\(f(q)\in\FSpace{L}{2}(\Space{R}{})\) is interpreted as a wave
function (a state) of a particle, with \(M=\rmi \tilde{M}\) and
\(\frac{1}{\rmi} \tilde{D}\) are the observables of its coordinate and
momentum.
\begin{corollary}[Heisenberg--Kennard uncertainty relation]
  For the coordinate \(M\) and momentum \(D\) observables we have the
  \emph{Heisenberg--Kennard uncertainty} relation:
  \begin{equation}
    \label{eq:heisenberg-uncertainty}
    \Delta_\phi(M) \cdot \Delta_\phi(D) \geq \frac{\myh}{2}.
  \end{equation}
  The equality holds if and only if
  \(\phi(q)= \rme^{-c q^2}\),
  \(c\in\Space[+]{R}{}\)
  is the Gaussian\index{Gaussian}---the vacuum state in the
  Schr\"odinger model.
\end{corollary}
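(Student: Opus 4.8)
The plan is to derive~\eqref{eq:heisenberg-uncertainty} as an immediate specialisation of the abstract Theorem~\ref{th:uncertainty}, taking for the two self-adjoint operators the coordinate and momentum observables on $\FSpace{L}{2}(\Space{R}{})$. Concretely, I would set $A=M=\rmi\tilde{M}$ and $B=D=\frac{1}{\rmi}\tilde{D}$, the operators of~\eqref{eq:lie-algebra-repres} (multiplication by $q$ and a multiple of $\frac{\rmd}{\rmd q}$), both self-adjoint on their natural domains, and let $u=\phi$ be an arbitrary unit vector lying in the common domain of $MD$ and $DM$. The shifts $a,b$ of the theorem are chosen as the expectation values $a=\bar{M}=\scalar{M\phi}{\phi}$ and $b=\bar{D}=\scalar{D\phi}{\phi}$, so that the left-hand side will reproduce the dispersions verbatim.

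With these choices the left-hand side of~\eqref{eq:uncertainty-abstract} is $\norm{(M-\bar{M})\phi}\,\norm{(D-\bar{D})\phi}$, which by the definition~\eqref{eq:dispersion} of the dispersion equals $\Delta_\phi(M)\cdot\Delta_\phi(D)$. For the right-hand side I would invoke the commutation relation~\eqref{eq:lie-algebra-repres}: since $[M,D]=[\rmi\tilde{M},\frac{1}{\rmi}\tilde{D}]=[\tilde{M},\tilde{D}]$ is a purely imaginary scalar multiple of the identity, the pairing $\scalar{(MD-DM)\phi}{\phi}$ is that same scalar times $\norm{\phi}^2=1$. Hence $\frac{1}{2}\modulus{\scalar{(MD-DM)\phi}{\phi}}$ collapses to the constant $\frac{\myh}{2}$, and~\eqref{eq:uncertainty-abstract} yields exactly the claimed bound~\eqref{eq:heisenberg-uncertainty}.

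For the equality statement I would read off the sharp condition already supplied by Theorem~\ref{th:uncertainty}: equality forces $\phi$ to solve $((M-\bar{M})+\rmi r(D-\bar{D}))\phi=0$ for some real $r$. Substituting the explicit forms $M=q$ and $D=-\rmi\myhbar\frac{\rmd}{\rmd q}$ turns this into the first-order linear equation $r\myhbar\,\phi'(q)=-(q-\bar{M}-\rmi r\bar{D})\phi(q)$, whose solutions are Gaussians. For the vacuum state centred at the origin one has $\bar{M}=\bar{D}=0$, whence $\phi(q)=\rme^{-cq^2}$ with $c=(2r\myhbar)^{-1}$; square-integrability on the line forces $c\in\Space[+]{R}{}$ and simultaneously fixes the admissible sign of $r$. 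This is the Gaussian vacuum of the Schr\"odinger model, consistent with the lowest harmonic-oscillator eigenvector~\eqref{eq:hermit-poly}.

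The algebraic reduction of the two sides is routine; the points genuinely needing care are the domain questions—checking that $\phi$ lies in the domains of both $MD$ and $DM$ so the commutator pairing is legitimate, and that the boundary terms in the integration by parts behind $[M,D]=\rmi\myhbar I$ vanish—together with the normalisability argument in the equality case, where one must verify that only one sign of $r$ produces an $\FSpace{L}{2}$ solution and hence a true minimiser. I expect the domain/self-adjointness bookkeeping to be the main obstacle, though it is standard for these unbounded operators.
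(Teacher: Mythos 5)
Your proposal is correct and follows essentially the same route as the paper: both deduce the inequality by specialising Theorem~\ref{th:uncertainty} to $M$ and $D$ with the commutator $[M,D]=\rmi\myhbar I$ from~\eqref{eq:lie-algebra-repres}, and both obtain the equality case by solving the first-order equation $(M-\rmi rD)\phi=0$ to get the Gaussian $\phi(q)=\rme^{-cq^2}$ with $c=(2r\myhbar)^{-1}>0$ for square-integrability. Your additional remarks on domains and on nonzero expectation values $\bar{M},\bar{D}$ are sensible refinements that the paper passes over silently, but they do not change the argument.
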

\begin{proof}
  The relation follows from the commutator \([M,D]=\rmi\myhbar I\),
  which, in turn, is the representation of the Lie algebra
  \(\algebra{h}\) of the Heisenberg group. By
  Thm.~\ref{th:uncertainty}, the minimal uncertainty
  state in the Schr\"odinger representation is a solution of the
  differential equation: \((M-\rmi r D)\phi=0\) for some
  \(r\in\Space{R}{}\), or, explicitly:
  \begin{equation}
    \label{eq:gaussian-uncertainty-eq}
    (M-\rmi r D)\phi=-\rmi\left(q+r{\myhbar}\frac{\rmd}{\rmd q}\right)\phi(q)=0.
  \end{equation}
  The solution is the Gaussian \(\phi(q)= \rme^{-c q^2}\),
  \(c=\frac{1}{2r\myhbar}\). For \(c>0\), this function  is in
  the state space \(\FSpace{L}{2}(\Space{R}{})\). 
\end{proof}
It is common to say that the Gaussian \(\phi(q)= \rme^{-c q^2}\) represents
the ground state, which minimises the uncertainty of coordinate and
momentum.

\subsection{Right Shifts and Analyticity}

To discover some preferable mother wavelets, we use the following 
general result from~\cite{Kisil11c}*{\S~5}.  Let \(G\) be a locally
compact group and \(\uir{}{}\) be its representation in a Hilbert
space \(\FSpace{H}{}\). Let \([\oper{W}_f v](g)=\scalar{v}{\uir{}{}(g)f}\) be the
wavelet transform defined by a vacuum state \(f\in \FSpace{H}{}\).  Then, the
right shift \(R(g): [\oper{W}_f v](g')\mapsto [\oper{W}_f v](g'g)\) for
\(g\in G\) coincides with the wavelet transform
\([\oper{W}_{f_g} v](g')=\scalar{v}{\uir{}{}(g')f_g}\) defined by the vacuum
state \(f_g=\uir{}{}(g) f\).  In other words, the covariant transform
intertwines%
\index{intertwining operator}%
\index{operator!intertwining} right shifts on the group \(G\) with the
associated action \(\uir{}{}\) on vacuum states,
cf.~\eqref{eq:left-shift-itertwine}:
\begin{equation}
  \label{eq:wave-intertwines-right}
  R(g) \circ \oper{W}_f= \oper{W}_{\uir{}{}(g)f}.
\end{equation}
Although, the above observation is almost trivial, applications of the following
corollary are not.
\begin{corollary}[Analyticity of the wavelet transform,~\cite{Kisil11c}*{\S~5}] 
  \label{co:cauchy-riemann-integ}
  Let \(G\) be a group and \(dg\) be a measure on \(G\). Let
  \(\uir{}{}\) be a unitary representation of \(G\), which can be
  extended by integration to a vector space \(V\) of functions or
  distributions on \(G\).  Let a mother wavelet \(f\in \FSpace{H}{}\) satisfy the
  equation
  \begin{displaymath}
    \int_{G} a(g)\, \uir{}{}(g) f\,\rmd g=0,
  \end{displaymath}
  for a fixed distribution \(a(g) \in V\). Then  any wavelet transform
  \(\tilde{v}(g)=\scalar{v}{\uir{}{}(g)f}\) obeys the condition:
  \begin{equation}
     \label{eq:dirac-op}
     D\tilde{v}=0,\qquad \text{where} \quad D=\int_{G} \bar{a}(g)\, R(g) \,\rmd g\notingiq
  \end{equation}
  with \(R\) being the right regular representation of \(G\).
\end{corollary}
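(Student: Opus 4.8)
The plan is to verify that $D\tilde v$ vanishes pointwise on $G$ and to trace this back to the assumed annihilation $\int_G a(g)\,\uir{}{}(g)f\,\rmd g = 0$ of the mother wavelet. First I would fix $g'\in G$ and unwind the definition of $D$. By the definition of the right regular representation, $[R(g)\tilde v](g') = \tilde v(g'g)$, so that
\begin{displaymath}
  [D\tilde v](g') = \int_G \bar a(g)\,\tilde v(g'g)\,\rmd g .
\end{displaymath}

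Next I would expand $\tilde v(g'g) = \scalar{v}{\uir{}{}(g'g)f}$ and use the homomorphism property $\uir{}{}(g'g) = \uir{}{}(g')\uir{}{}(g)$ — equivalently, the intertwining identity~\eqref{eq:wave-intertwines-right}, which already packages this as $[R(g)\tilde v](g') = \scalar{v}{\uir{}{}(g')\uir{}{}(g)f}$. Since the inner product is conjugate-linear in its second slot, the scalar factor absorbs as $\bar a(g)\scalar{v}{w} = \scalar{v}{a(g)\,w}$. Pulling the fixed bounded operator $\uir{}{}(g')$ and the $g$-integration through the inner product then yields
\begin{displaymath}
  [D\tilde v](g') = \scalar{v}{\uir{}{}(g')\int_G a(g)\,\uir{}{}(g)f\,\rmd g} .
\end{displaymath}
By hypothesis the inner integral is the zero vector, so $[D\tilde v](g')=0$; as $g'$ was arbitrary, $D\tilde v = 0$.

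The only genuine content, and the step I expect to be the main obstacle, is justifying the interchange of the $g$-integration with both the inner product and the application of $\uir{}{}(g')$. The cleanest route is to note that this interchange is not an extra theorem but is built into the weak-integral definition used to extend $\uir{}{}$ from $G$ to $V$ in~\eqref{eq:rho-extended-to-L1}: the operator $\int_G a(g)\,\uir{}{}(g)\,\rmd g$ is defined precisely through its matrix coefficients $v\mapsto\int_G a(g)\scalar{v}{\uir{}{}(g)\,\cdot}\,\rmd g$. When $a$ is a genuine distribution, I would read every integral in that dual sense, pairing $a$ against the smooth matrix coefficient $g\mapsto\scalar{v}{\uir{}{}(g')\uir{}{}(g)f}$; strong continuity of $\uir{}{}$ guarantees the pairing is well defined and that the fixed $\uir{}{}(g')$ may be moved outside. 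Notably, neither positivity nor irreducibility of $\uir{}{}$ enters, so the argument carries over verbatim to the dual- and double-number representations considered elsewhere in this paper.
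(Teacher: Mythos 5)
Your proof is correct and follows essentially the same route as the paper: the paper derives the corollary from the intertwining identity~\eqref{eq:wave-intertwines-right} (right shifts on the wavelet transform correspond to the action of \(\uir{}{}\) on the mother wavelet), which is exactly the computation you perform pointwise, with the conjugate-linearity in the second slot and the weak-integral reading of~\eqref{eq:rho-extended-to-L1} handling the interchange of integration and inner product. No gaps.
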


Some applications (including discrete one) produced by the \(ax+b\) group
can be found in~\cite{Kisil12d}*{\S~6}.  We turn to the Heisenberg
group now.
\begin{example}[Gaussian and FSB transform]
  \label{ex:gaussian-fsb}
  The Gaussian\index{Gaussian} \(\phi(x)= \rme^{- c q^2/2}\) is a null-solution of the
  operator \(\myhbar c M-\rmi D\). For the centre \(Z=\{(s,0,0):\
  s\in\Space{R}{}\}\subset \Space{H}{}\), we define the section
  \(\map{s}:\Space{H}{}/Z\rightarrow \Space{H}{}\) by
  \(\map{s}(x,y)=(0,x,y)\). Then, the corresponding induced wavelet
  transform~\eqref{eq:induce-wavelet-transform} is:
  \begin{equation}
    \label{eq:fsb-transform}
   \tilde{v}(x,y)=\scalar{v}{\uir{}{}(\map{s}(x,y))f}= \int_{\Space{R}{}} v(q)\,  \rme^{\pi\rmi\myhbar xy
    -2\pi\rmi x q}\, \rme^{-c(q+{\myhbar} y)^2/2}\,\rmd q.
  \end{equation}
  The transformation intertwines the Schr\"odinger and
  Fock--Segal--Bargmann representations%
  \index{Fock--Segal--Bargmann!representation}%
  \index{representation!Fock--Segal--Bargmann}%
  \index{Heisenberg!group!Fock--Segal--Bargmann representation}%
  \index{group!Heisenberg!Fock--Segal--Bargmann representation} The
  infinitesimal generators \(X\)
  and \(Y\)
  of one-parameters subgroups \((0,t/(2\pi),0)\)
  and \((0,0,t)\)
  are represented through the right shift in~\eqref{eq:H-n-group-law}
  by
  \begin{displaymath}
    \textstyle
    R_*(X)=-\frac{1}{4\pi}y\partial_s+\frac{1}{2\pi}\partial_x,\quad
    R_*(Y)=\frac{1}{2}x\partial_s+\partial_y.
  \end{displaymath}
  For the representation induced
  by the character \(\chi_\myhbar(s,0,0)= \rme^{2\pi\rmi\myhbar s}\) we
  have \(\partial_s= 2\pi\rmi\myhbar
  I\). Cor.~\ref{co:cauchy-riemann-integ} ensures that the operator
  \begin{equation}
    \label{eq:fsb-cauchy-riemann}
    \myhbar c\cdot R_*(X)+\rmi\cdot R_*(Y)=
   -\frac{\myhbar}{2}(2\pi x+ {\rmi\myhbar c}y) +\frac{\myhbar
      c}{2\pi}\partial_x +\rmi \partial_y
  \end{equation}
  annihilate any \(\tilde{v}(x,y)\)
  from~\eqref{eq:fsb-transform}.  The
  integral~\eqref{eq:fsb-transform} is known as Fock--Segal--Bargmann
  (FSB) transform%
  \index{Fock--Segal--Bargmann!transform}%
  \index{transform!Fock--Segal--Bargmann} and in the most common case
  the values \(\myhbar=1\)
  and \(c=2\pi\)
  are used. For these, operator~\eqref{eq:fsb-cauchy-riemann} becomes
  \(-\pi(x+\rmi y)+(\partial_x+\rmi\partial _y)=-\pi
  z+2\partial_{\bar{z}}\)
  with \(z=x+\rmi y\).
  Then the function
  \(V(z)= \rme^{\pi z \bar{z}/2}\,\tilde{v}(z)= \rme^{\pi(x^2+y^2)/2}\,
  \tilde{v}(x,y)\)
  satisfies the Cauchy--Riemann equation
  \(\partial_{\bar{z}} V(z)=0\).
\end{example}
This example
shows, that the Gaussian is a preferred vacuum state (as producing
analytic functions through FSB transform) exactly for the same reason
as being the minimal uncertainty state: the both are derived from the
identity \((\myhbar c M+\rmi D) \rme^{- c q^2/2}=0\).

\subsection{Uncertainty and Analyticity}
\label{sec:uncert-analyt}
The main result of this paper is a generalisation of the previous
observation, which bridges together Cor.~\ref{co:cauchy-riemann-integ} and
Thm.~\ref{th:uncertainty}. Let \(G\), \(H\), \(\uir{}{}\) and
\(\FSpace{H}{}\) be
as before. Assume, that the homogeneous space
\(X=G/H\) has a (quasi-)invariant measure
\(d\mu(x)\)~\cite{Kirillov76}*{\S~13.2}. Then, for a function (or a
suitable distribution) \(k\) on \(X\) we can define the integrated
representation:
\begin{equation}
  \label{eq:integrated-rep}
  \uir{}{}(k)=\int_X k(x)\uir{}{}(\map{s}(x))\,\rmd \mu(x)\notingiq
\end{equation}
which is (possibly, unbounded) operators on (possibly, dense subspace
of) \(\FSpace{H}{}\). It is a homomorphism of the convolution algebra
\(\FSpace{L}{1}(G,dg)\) to an algebra of bounded operators on \(\FSpace{H}{}\).
In particular, \(R(k)\) denotes the
integrated right shifts,  for \(H=\{e\}\).

\begin{theorem}[\cite{Kisil13c}]
  Let \(k_1\) and \(k_2\) be two distributions on
  \(X\) with the respective integrated representations
  \(\uir{}{}(k_1)\) and \(\uir{}{}(k_2)\).  The following are
  equivalent:
  \begin{enumerate}
  \item A vector \(f\in \FSpace{H}{}\) satisfies the identity
    \begin{displaymath}
      \Delta_f(\uir{}{}(k_1))\cdot
    \Delta_f(\uir{}{}(k_2))=\modulus{\scalar{[\uir{}{}(k_1),\uir{}{}(k_1)]
      f}{f}}.
    \end{displaymath}
  \item The image of the wavelet transform \(\oper{W}_f: v \mapsto
    \tilde {v}(g)=\scalar{v}{\uir{}{}(g)f}\) consists of functions
    satisfying the equation \(R(k_1+\rmi r k_2) \tilde {v}=0\) for
    some \(r\in\Space{R}{}\), where \(R\) is the integrated
    form~\eqref{eq:integrated-rep} of the right regular representation
    on \(G\).
  \end{enumerate}
\end{theorem}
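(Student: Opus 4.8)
The plan is to show that \emph{each} of the two conditions is separately equivalent to a single eigen-equation for \(f\) under the integrated representation, namely \(\uir{}{}(k_1+\rmi r k_2)f=\lambda f\) for a suitable \(r\in\Space{R}{}\) and scalar \(\lambda\); matching the two reductions then yields the stated equivalence. Throughout I assume \(k_1\), \(k_2\) are chosen so that \(\uir{}{}(k_1)\) and \(\uir{}{}(k_2)\) are self-adjoint, which is what allows Thm~\ref{th:uncertainty} to be applied, and I read condition~(1) as the \emph{saturation} of that theorem (with the commutator \([\uir{}{}(k_1),\uir{}{}(k_2)]\)).

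First I would treat condition~(1). Writing \(A=\uir{}{}(k_1)\), \(B=\uir{}{}(k_2)\) and taking \(a=\scalar{Af}{f}\), \(b=\scalar{Bf}{f}\) so that \(\Delta_f\) is measured about the mean, condition~(1) is exactly the equality case of Thm~\ref{th:uncertainty}. By that theorem, equality holds if and only if there is a real \(r\) with
\begin{displaymath}
  \bigl((A-a)+\rmi r(B-b)\bigr)f=0.
\end{displaymath}
The crucial structural input is linearity of the integrated representation~\eqref{eq:integrated-rep}: \(A+\rmi r B=\uir{}{}(k_1)+\rmi r\,\uir{}{}(k_2)=\uir{}{}(k_1+\rmi r k_2)\). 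Hence condition~(1) is equivalent to the eigen-equation \(\uir{}{}(k_1+\rmi r k_2)f=(a+\rmi r b)f\).

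Next I would transport an eigen-equation for \(f\) through the wavelet transform, by the same computation that proves Cor~\ref{co:cauchy-riemann-integ}. Integrating the intertwining identity~\eqref{eq:wave-intertwines-right} against a distribution \(c\) on \(X\) gives
\begin{displaymath}
  R(c)\circ\oper{W}_f=\oper{W}_{\uir{}{}(\bar c)f},
\end{displaymath}
where the conjugation on \(c\) appears because \(\oper{W}_f\) is antilinear in the mother wavelet. Since \(\oper{W}_w=0\) forces \(w=0\) (fixing \(g'\) and varying \(v\) in \(\scalar{v}{\uir{}{}(g')w}=0\) gives \(\uir{}{}(g')w=0\), hence \(w=0\)), the operator \(R(c)\) annihilates the whole image \(\oper{W}_f\FSpace{H}{}\) if and only if \(\uir{}{}(\bar c)f=0\). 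Cor~\ref{co:cauchy-riemann-integ} supplies one direction and this injectivity the converse. Taking \(c=k_1+\rmi r k_2\) with \(k_1\), \(k_2\) real, so \(\bar c=k_1-\rmi r k_2\), and relabelling \(r\mapsto-r\), condition~(2) becomes the homogeneous eigen-equation \(\uir{}{}(k_1+\rmi r k_2)f=0\).

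The only genuine obstacle is reconciling the scalar \(\lambda=a+\rmi r b\) from~(1) with the homogeneous right-hand side of~(2). I would resolve this by \emph{centering}: replacing each \(k_i\) by \(k_i-c_i\delta\), where \(\delta\) is the point mass at the base point of \(X\) so that \(\uir{}{}(\delta)=I\) and \(R(\delta)=I\), and \(c_i=\scalar{\uir{}{}(k_i)f}{f}\). This shifts each operator by a scalar, leaves both the dispersions \(\Delta_f\) and the commutator \([\uir{}{}(k_1),\uir{}{}(k_2)]\) in~(1) unchanged, and absorbs \(\lambda\) into the homogeneous equation of~(2); equivalently, with the original \(k_i\) condition~(2) reads \(R(k_1+\rmi r k_2)\tilde v=(c_1+\rmi r c_2)\tilde v\). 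After this normalisation the two eigen-equations coincide and the equivalence follows. The residual points—self-adjointness of \(\uir{}{}(k_i)\) and weak convergence of the distributional integrals in~\eqref{eq:integrated-rep}—are routine under the standing hypotheses.
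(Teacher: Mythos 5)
Your proof is correct and follows exactly the route the paper takes: the paper's own proof is the single sentence that the theorem is an immediate consequence of combining Thm.~\ref{th:uncertainty} with Cor.~\ref{co:cauchy-riemann-integ}, which is precisely the combination you carry out. The details you supply---reading condition (1) as the equality case of the uncertainty relation and turning it into an eigen-equation, the complex conjugate and the injectivity of \(w\mapsto\oper{W}_w\) needed to get the converse direction of the corollary, and the centering by \(c_i\delta\) that reconciles the inhomogeneous eigen-equation from (1) with the homogeneous annihilation condition in (2)---are all points the paper leaves implicit, and they are handled correctly.
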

\begin{proof}
  This is an immediate consequence of a combination of
  Thm.~\ref{th:uncertainty} and Cor.~\ref{co:cauchy-riemann-integ}.
\end{proof}
Example~\ref{ex:gaussian-fsb} is a particular case of this theorem
with \(k_1(x,y)=\delta'_x(x,y)\) and \(k_2(x,y)=\delta'_y(x,y)\)
(partial derivatives of the delta function), which represent vectors
\(X\) and \(Y\) from the Lie algebra \(\algebra{h}\). The next
example will be of this type as well.

\subsection{Hardy Space on the Real Line}
\label{sec:hardy-space}

We consider the induced representation
\(\uir{}{1}\)~\eqref{eq:discrete} for \(k=1\) of the group
\(\SL\). A \(\SL\)-quasi-invariant measure on the real line is
\(\modulus{cx+d}^{-2}\,\rmd x\). Thus, the following form of the representation~\eqref{eq:discrete}
\begin{equation}
  \label{eq:discrete1}
  \uir{}{1}{}(g) f(w)=\frac{1}{cx+d}\,f\left(\frac{ax+b}{cx+d}\right),
  \quad \text{ where}\quad g^{-1}=\begin{pmatrix}a&b\\c&d
  \end{pmatrix}\notingiq
\end{equation}
is unitary in
\(\FSpace{L}{2}(\Space{R}{})\) with the Lebesgue measure \(dx\).  

We can calculate the derived representations for the basis of
\(\algebra{sl}_2\)~\eqref{eq:sl2-basis}:
\begin{align*}
  \rmd\uir{A}{1}  &= \textstyle \frac{1}{2}\cdot I+x \partial_x,\\
  \rmd\uir{B}{1}  &= \textstyle \frac{1}{2} x \cdot I+\frac{1}{2}(x^2-1)\partial_x,\\
  \rmd\uir{Z}{1}  &= - x\cdot I-(x^2+1)\partial_x.
\end{align*}

The linear combination of the above vector
fields producing \emph{ladder}  operators \( \ladder{\pm}=\pm\rmi A+B\) are,
cf.~\eqref{eq:elliptic-ladder}:
\begin{eqnarray}
  \label{eq:A+iB-e-rl}
 \rmd\uir{\ladder{\pm}}{1 } & = & \textstyle \frac{1}{2} \left( (x\pm\rmi)\cdot I
   +(x\pm \rmi)^2\cdot\partial _x \right).
\end{eqnarray}
Obviously, the function
\(f_+(x)=(x+\rmi)^{-1}\) satisfies \(\rmd\uir{\ladder{+}}{1} f_+=0\). Recalling
the commutator \([A,B]=-\frac{1}{2}Z\) we note that
\(\rmd\uir{Z}{1}f_+=-\rmi f_+\). Therefore, there is the
following identity for dispersions on this state:
\begin{displaymath}
  \textstyle
  \Delta_{f_+}(\uir{A}{1})\cdot \Delta_{f_+}(\uir{B}{1}) =\frac{1}{2}\notingiq
\end{displaymath}
with the minimal value of uncertainty among all eigenvectors of the operator
\(\rmd\uir{Z}{1}\).  

Furthermore, the vacuum state \(f_+\) generates the induced wavelet
transform for the subgroup \(K=\{ \rme^{tZ} \such t\in\Space{R}{}\}\).  We
identify \(\SL/K\) with the upper half-plane%
\index{half-plane!upper}%
\index{upper half-plane} \(\Space[+]{C}{}=\{z\in\Space{C}{} \such \Im
z > 0\}\)~\citelist{\cite{Kisil11c}*{\S~5.5} \cite{Kisil13a}}.  The map
\(\map{s}: \SL/K \rightarrow \SL\) is defined as
\(\map{s}(x+\rmi y)=\frac{1}{\sqrt{y}}
\begin{pmatrix}
  y&x\\
  0&1
\end{pmatrix}\)~\eqref{eq:s-map}. Then, the induced wavelet transform~\eqref{eq:induce-wavelet-transform} is:
\begin{align*}
  \tilde{v}(x+\rmi y)=\scalar{v}{\uir{}{1}(\map{s}(x+\rmi y)) f_+}&=
                                                             \frac{1}{2\pi\sqrt{y}}\int_{\Space{R}{}}
                                                             \frac{v(t)\,\rmd t}{\frac{t-x}{y}-\rmi}\\
                                                           &=
                                                             \frac{\sqrt{y}}{2\pi}\int_{\Space{R}{}}
                                                             \frac{v(t)\,\rmd t}{t-(x+\rmi
                                                             y)}. 
\end{align*}
Clearly, this is the Cauchy integral up to the factor
\(\sqrt{y}\), which is related to the  conformal
metric on the upper half-plane.  Similarly, we can consider the operator
\(\uir{B-\rmi A}{1}=\frac{1}{2} \left( (x\pm\rmi)\cdot I
   +(x\pm \rmi)^2\cdot\partial _x \right)\) and the function
\(f_-(z)=\frac{1}{x-\rmi}\) simultaneously solving the equations
\(\uir{B-\rmi A}{1} f_-=0\) and \(\rmd\uir{Z}{1}f_-=\rmi
f_-\). It produces the integral with the conjugated Cauchy kernel.

Finally, we can calculate the operator~\eqref{eq:dirac-op}
annihilating the image of the wavelet transform. In the coordinates
\((x+\rmi y,t)\in (\SL/K)\times K\), the restriction to the induced
subrepresentation is, cf.~\cite{Lang85}*{\S~IX.5}:
\begin{eqnarray}
  \label{eq:left-A-e-uhp}
  \linv{A}&=&
   \textstyle \frac{\rmi}{2}\sin 2t\cdot I - y \sin 2 t \cdot\partial_x- y \cos 2 t \cdot\partial_y,\\
  \linv{B}&=&
 \textstyle -\frac{\rmi }{2}\cos 2t\cdot  I+y \cos 2 t \cdot\partial_x- y \sin 2 t \cdot\partial_y.
\end{eqnarray}
Then, the left-invariant vector field corresponding to the ladder
operator contains the Cauchu--Riemann operator as the main ingredient: 
\begin{equation}
  \label{eq:CR-elliptic}
  \linv{\ladder{-}} =   \rme^{2\rmi t}(\textstyle -\frac{\rmi}{2} I+ y
  (\partial_x+\rmi\partial_y)),\qquad \text{ where}\quad \ladder{-}=\overline{\ladder{+}}=-\rmi A + B.
\end{equation}

Furthermore, if \(\linv{-\rmi A+B}{}\tilde{v}(x+\rmi y)=0\), then
\((\partial_{x}+\rmi\partial_y)(\tilde{v}(x+\rmi y)/\sqrt{y})=0\). That is,
\(V(x+\rmi y)=\tilde{v}(x+\rmi y)/\sqrt{y}\) is a holomorphic
function on the upper half-plane.

Similarly, we can treat representations of \(\SL\) in the space of
square integrable functions on the upper half-plane. The irreducible
components of this representation are isometrically
isomorphic~\cite{Kisil11c}*{\S~4--5} to the weighted Bergman spaces of
(purely poly-)analytic functions on the unit disk,
cf.~\cite{Vasilevski99}. Further connections between analytic function
theory and group representations can be found in~\cites{Kisil97c,Kisil11c}.

\subsection{Contravariant Transform and Relative Convolutions}
\label{sec:bound-oper}
For a square integrable unitary irreducible representation
\(\uir{}{}\) and a fixed admissible vector \(\psi\in V\),
the integrated representation~\eqref{eq:integrated-rep} produces the
\emph{contravariant transform}%
\index{contravariant transform}%
\index{transform!contravariant}
\(\oper{M}_\psi: \FSpace{L}{1}(G) \rightarrow V\),
cf.~\cite{Kisil98a,Kisil13a}:
\begin{equation}
  \label{eq:contravariant-transform}
  \oper{M}_\psi^{\uir{}{}} (k)= \uir{}{}(k)\psi, \qquad
  \text{ where}\quad k\in \FSpace{L}{1}(G).
\end{equation}

The contravariant transform \(\oper{M}_{\psi}^{\uir{}{}}\) intertwines the
left regular representation \( \Lambda  \)
on \( \FSpace{L}{2}(G)\) and \( \uir{}{} \):
\begin{equation}
  \label{eq:inv-transform-intertwine}
  \oper{M}_{\psi}^{\uir{}{}}\, \Lambda (g) = \uir{}{}(g)\, \oper{M}_{\psi}^{\uir{}{}}.
\end{equation}
Combining with~\eqref{eq:left-shift-itertwine}, we see that the
composition \(\oper{M}_\psi^{\uir{}{}} \circ
\oper{W}_\phi^{\uir{}{}}\) of the covariant and contravariant
transform intertwines \(\uir{}{}\) with itself.  For an irreducible
square integrable \(\uir{}{}\) and suitably normalised admissible
\(\phi\) and \(\psi\), we use the Schur's lemma
\cite[Lem.~4.3.1]{AliAntGaz00}, \cite[Thm.~8.2.1]{Kirillov76} to
conclude that:
\begin{equation}
  \label{eq:wave-trans-inverse}
  \oper{M}_\psi^{\uir{}{}} \circ
  \oper{W}_\phi^{\uir{}{}}=\scalar{\psi}{\phi} I.
\end{equation}  

Similarly to induced wavelet
transform~\eqref{eq:induce-wavelet-transform}, we may define
integrated representation and contravariant transform for a
homogeneous space. Let \(H\)
be a subgroup of \(G\)
and \(X=G/H\)
be the respective homogeneous space with a (quasi-)\-invariant measure
\(dx\)~\cite[\S~9.1]{Kirillov76}.
For the natural projection \(\map{p}: G \rightarrow X\)
we fix a continuous section
\(\map{s}: X \rightarrow G\)~\cite[\S~13.2]{Kirillov76},
which is a right inverse to \(\map{p}\).
Then, we define an operator of \emph{relative convolution}%
\index{relative convolution}%
\index{convolution!relative} on \(V\)~\cite{Kisil94e,Kisil13a},
cf.~\eqref{eq:integrated-rep}:
\begin{equation}
  \label{eq:relative-conv}
  \uir{}{}(k)=\int_{X} k(x)\,\uir{}{}(\map{s}(x))\,\rmd x\notingiq
\end{equation}
with a kernel \(k\) defined on \(X=G/H\).  
There are many important classes of operators described
by~\eqref{eq:relative-conv}, notably pseudodifferential operators
(PDO) and Toeplitz
operators~\cite{Howe80b,Kisil94e,Kisil98a,Kisil13a}.  Thus, it is
important to have various norm estimations of \(\uir{}{}(k)\). We
already mentioned a straightforward inequality
\(\norm{\uir{}{}(k)}\leq C \norm[1]{k}\) for
\(k\in\FSpace{L}{1}(G,dg)\), however, other classes are of interest as
well.

\subsection{Norm Estimations of Relative Convolutions}
\label{sec:cald-vaill-type}

If \(G\)
is the Heisenberg group and \(\uir{}{}\)
is its Schr\"odinger representation, then \(\uir{}{}(\hat{a})\)~\eqref{eq:relative-conv}
is a PDO \(a(X,D)\)
with the symbol \(a\)~\cite{Howe80b,Folland89,Kisil13a},
which is the Weyl quantization~\eqref{eq:weyl-quantisation} of a classical
observable \(a\) defined on phase space \(\Space{R}{2}\).
Here, \(\hat{a}\)
is the Fourier transform of \(a\),
as usual. The Calder\'on--Vaillancourt
theorem~\cite[Ch.~XIII]{MTaylor81} estimates \(\norm{a(X,D)}\)
by \(\FSpace{L}{\infty}\)-norm
of a finite number of partial derivatives of \(a\).

In this section we revise the method used in~\cite[\S~3.1]{Howe80b} to
prove the Calder\'on--Vaillancourt estimations. It was described as ``rather
magical'' in~\cite[\S~2.5]{Folland89}. We hope, that  a usage of the covariant
transform dispel the mystery without undermining the power of the method. 
\vskip 0.2cm

We start from the following lemma, which has a transparent proof in
terms of covariant transform, cf. ~\cite[\S~3.1]{Howe80b}
and~\cite[(2.75)]{Folland89}. For the rest of the section we assume that
\(\uir{}{}\) is an irreducible square integrable representation of an
exponential Lie group \(G\) in \(V\) and mother wavelet \(\phi,\psi
\in V\) are admissible.
\begin{lemma}
  Let \(\phi\in V\) be such that, for \(\Phi=\oper{W}_\phi \phi\), the
  reciprocal \(\Phi^{-1}\) is bounded on \(G\) or \(X=G/H\).  Then,
  for the integrated representation~\eqref{eq:integrated-rep} or
  relative convolution~\eqref{eq:relative-conv}, we have the
  inequality:
  \begin{equation}
    \label{eq:norm-inequal-repres}
    \norm{\uir{}{}(f)}\leq \norm{\Lambda \otimes R(f \Phi^{-1})}\notingiq
  \end{equation}
  where \((\Lambda \otimes R)(g): k(g')\mapsto k(g^{-1}g'g)\)
  acts on the image of \(\oper{W}_\phi\).
\end{lemma}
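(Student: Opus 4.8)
The plan is to establish the single operator factorisation
\[
  \uir{}{}(f) = \oper{M}_\phi \circ (\Lambda\otimes R)(f\Phi^{-1}) \circ \oper{W}_\phi
\]
and then read off the estimate from submultiplicativity of the operator norm, once $\phi$ is normalised (as permitted by~\eqref{eq:wave-trans-inverse}) so that the covariant transform $\oper{W}_\phi$ is an isometry and the contravariant transform $\oper{M}_\phi=\oper{W}_\phi^*$ is the corresponding co-isometry, hence a contraction. The whole point is that the pointwise factor $\Phi^{-1}$ is inserted precisely in order to be annihilated by a factor $\Phi$ produced by the reconstruction identity; no smoothness or derivatives of $f$ enter, only boundedness of $\Phi^{-1}$, which is what keeps $f\Phi^{-1}$ in a class for which $(\Lambda\otimes R)(f\Phi^{-1})$ is a legitimate bounded operator on the image of $\oper{W}_\phi$.

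First I would analyse the action of a single group element. For $v\in\FSpace{H}{}$ and $g\in G$, I combine the right-shift intertwining property~\eqref{eq:wave-intertwines-right}, $R(g)\oper{W}_\phi=\oper{W}_{\uir{}{}(g)\phi}$, with the left-shift intertwining~\eqref{eq:left-shift-itertwine} applied to the mother wavelet $\uir{}{}(g)\phi$, to obtain
\[
  (\Lambda\otimes R)(g)\,\oper{W}_\phi v
  = \Lambda(g)R(g)\oper{W}_\phi v
  = \oper{W}_{\uir{}{}(g)\phi}\,\uir{}{}(g) v .
\]
Applying $\oper{M}_\phi$ and invoking the reconstruction formula~\eqref{eq:wave-trans-inverse} in the form $\oper{M}_\phi\circ\oper{W}_{\uir{}{}(g)\phi}=\scalar{\phi}{\uir{}{}(g)\phi}\,I=\Phi(g)\,I$ yields the crucial pointwise identity $\oper{M}_\phi\,(\Lambda\otimes R)(g)\,\oper{W}_\phi v=\Phi(g)\,\uir{}{}(g)v$: conjugation by $g$, sandwiched between the two transforms, reproduces $\uir{}{}(g)$ up to the scalar weight $\Phi(g)$.

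Next I would integrate this identity against the kernel $f\Phi^{-1}$ over $G$ (or over $X=G/H$ using the section $\map{s}$ and the induced transforms of~\S\ref{sec:induc-wavel-transf}). Interchanging $\oper{M}_\phi$ with the defining integral gives
\[
  \oper{M}_\phi\,(\Lambda\otimes R)(f\Phi^{-1})\,\oper{W}_\phi v
  =\int_G f(g)\Phi^{-1}(g)\,\Phi(g)\,\uir{}{}(g)v\,\rmd g
  =\int_G f(g)\,\uir{}{}(g)v\,\rmd g=\uir{}{}(f)v ,
\]
where the cancellation $\Phi^{-1}(g)\Phi(g)=1$ recovers exactly the integrated representation~\eqref{eq:integrated-rep} (respectively the relative convolution~\eqref{eq:relative-conv}). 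This establishes the factorisation, whence $\norm{\uir{}{}(f)}\le\norm{\oper{M}_\phi}\,\norm{(\Lambda\otimes R)(f\Phi^{-1})}\,\norm{\oper{W}_\phi}\le\norm{(\Lambda\otimes R)(f\Phi^{-1})}$.

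The main obstacle I expect is analytic rather than algebraic: justifying the interchange of the (generally unbounded, weakly defined) contravariant transform $\oper{M}_\phi$ with the operator-valued integral defining $(\Lambda\otimes R)(f\Phi^{-1})$, and making precise that $f\Phi^{-1}$ defines a bounded conjugation operator on $\oper{W}_\phi\FSpace{H}{}$ — this is exactly where boundedness of $\Phi^{-1}$ is spent. On the algebraic side one only has to fix the normalisation of the admissible vector $\phi$ so that $\oper{W}_\phi$ is isometric and $\oper{M}_\phi$ is its co-isometric adjoint, which is what removes every constant from the final inequality. Once the three-line identity assembled from~\eqref{eq:wave-intertwines-right},~\eqref{eq:left-shift-itertwine} and~\eqref{eq:wave-trans-inverse} is in place, the rest is routine, so the substance of the proof is precisely that identity and the deliberate $\Phi^{-1}\Phi=1$ cancellation.
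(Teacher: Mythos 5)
Your proposal is correct and follows essentially the same route as the paper's own proof: both establish the key identity \(\oper{M}_\phi\circ(\Lambda\otimes R)(g)\circ\oper{W}_\phi=\Phi(g)\,\uir{}{}(g)\) from the intertwining properties~\eqref{eq:left-shift-itertwine}, \eqref{eq:wave-intertwines-right} and the reconstruction formula~\eqref{eq:wave-trans-inverse}, then integrate against \(f\Phi^{-1}\) and invoke the (partial) isometry of \(\oper{W}_\phi\) and \(\oper{M}_\phi\). Your remarks on normalisation and on justifying the interchange of \(\oper{M}_\phi\) with the operator-valued integral are sensible refinements of points the paper leaves implicit.
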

\begin{proof}
  We know from~\eqref{eq:wave-trans-inverse} that \(\oper{M}_\phi
  \circ \oper{W}_{\uir{}{}(g)\phi} =
  \scalar{\phi}{\uir{}{}(g)\phi} I\) on \(V\), thus:
  \begin{displaymath}
    \oper{M}_\phi \circ \oper{W}_{\uir{}{}(g)\phi} \circ \uir{}{}(g) =
    \scalar{\phi}{\uir{}{}(g)\phi} \uir{}{}(g) =\Phi (g) \uir{}{}(g). 
  \end{displaymath}
  On the other hand, the intertwining
  properties~\eqref{eq:left-shift-itertwine}
  and~\eqref{eq:wave-intertwines-right} of the wavelet transform 
  imply:
  \begin{displaymath}
    \oper{M}_\phi \circ \oper{W}_{\uir{}{}(g)\phi} \circ \uir{}{}(g) =
    \oper{M}_\phi \circ (\Lambda \otimes R)(g)  \circ\oper{W}_{\phi}.
  \end{displaymath}
  Integrating the identity \(\Phi (g) \uir{}{}(g)=  \oper{M}_\phi
  \circ (\Lambda \otimes R)(g)  \circ\oper{W}_{\phi}\) with the
  function \(f\Phi^{-1}\) and use the partial isometries \(\oper{W}_\phi\)  and
  \(\oper{M}_\phi\) we get the inequality.
\end{proof}

The Lemma is most efficient if \(\Lambda \otimes R\) acts in a simple
way. Thus, we give he following
\begin{definition}
  We say that the subgroup \(H\) has the \emph{complemented commutator
    property}, if there exists a continuous section \(\map{s}: X \rightarrow
  G\) such that:
  \begin{equation}
    \label{eq:H-upper-triangle-prop}
    \map{p}(\map{s}(x)^{-1}g \map{s}(x))=\map{p}(g),\qquad \text{ for
      all}\quad x\in X=G/H, \ g\in G.
  \end{equation}
\end{definition}
For a Lie group \(G\) with the Lie algebra \(\mathfrak{g}\) define the
Lie algebra \(\mathfrak{h}=[\mathfrak{g},\mathfrak{g}]\). The subgroup
\(H=\exp(\mathfrak{h})\) (as well as any larger subgroup) has the
complemented commutator property~\eqref{eq:H-upper-triangle-prop}. Of
course, \(X=G/H\) is non-trivial if \(H\neq G\) and this happens, for
example, for a nilpotent \(G\).  In particular, for the Heisenberg
group, its centre has the complemented commutator property.

Note, that the complemented commutator
property~\eqref{eq:H-upper-triangle-prop} implies:
\begin{equation}
  \label{eq:h-identity-defn}
  \Lambda\otimes R(\map{s}(x)): g \mapsto gh, \quad \text{ for the
  unique}\quad h=g^{-1}\map{s}(x)^{-1}g \map{s}(x)\in H.
\end{equation}
For a character \(\chi\) of the subgroup \(H\), we introduce an integral
transformation \(\wideparen{\ }:{L}_{1}(X)\rightarrow {C}(G)\):
\begin{equation}
  \label{eq:paren-transf}
  \wideparen{k}(g)=\int_X k(x)\, \chi(g^{-1}\map{s}(x)^{-1}g \map{s}(x))\,\rmd x\notingiq
\end{equation}
where \(h(x,g)=g^{-1}\map{s}(x)^{-1}g \map{s}(x)\) is in \(H\) due to
the relations~\eqref{eq:H-upper-triangle-prop}. This transformation
generalises the isotropic symbol defined for the Heisenberg group in
\cite[\S~2.1]{Howe80b}.
\begin{proposition}[\cite{Kisil13b}]
  Let a subgroup \(H\) of \(G\) have the complemented commutator
  property~\eqref{eq:H-upper-triangle-prop} and \(\uir{}{\chi}\) be
  an irreducible representation of \(G\) induced from a character
  \(\chi\) of \(H\), then
  \begin{equation}
    \label{eq:norm-ineq-Hn-operat}
    \norm{\uir{}{\chi}(f)}\leq 
    \norm[\infty]{\wideparen{f\Phi^{-1}}}\notingiq
  \end{equation}
  with the \(\sup\)-norm of the function \(\wideparen{f\Phi^{-1}}\)
  on the right.
\end{proposition}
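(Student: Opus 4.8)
The plan is to combine the inequality of the preceding Lemma with an explicit identification of $\Lambda\otimes R(f\Phi^{-1})$, on the image of $\oper{W}_\phi$, as a multiplication operator whose multiplier is exactly $\wideparen{f\Phi^{-1}}$. First I would apply the Lemma to reduce the claim to the estimate $\norm{\uir{}{\chi}(f)}\leq\norm{\Lambda\otimes R(f\Phi^{-1})}$, so that everything hinges on bounding the norm of $\Lambda\otimes R(f\Phi^{-1})$ on the subspace $\oper{W}_\phi V$.

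Next I would compute the action of the single operator $\Lambda\otimes R(\map{s}(x))$ on a function $k$ in this image. By definition $[\Lambda\otimes R(\map{s}(x))k](g)=k(\map{s}(x)^{-1}g\map{s}(x))=k(gh)$, where $h=g^{-1}\map{s}(x)^{-1}g\map{s}(x)$; the complemented commutator property~\eqref{eq:H-upper-triangle-prop} guarantees $h\in H$, as recorded in~\eqref{eq:h-identity-defn}. Since every $k\in\oper{W}_\phi V$ inherits from $\uir{}{\chi}(h)\phi=\chi(h)\phi$ the covariance $k(gh)=\chi(h)k(g)$, the operator $\Lambda\otimes R(\map{s}(x))$ reduces to pointwise multiplication by $g\mapsto\chi(g^{-1}\map{s}(x)^{-1}g\map{s}(x))$.

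Integrating this pointwise identity against the kernel $f\Phi^{-1}$ over $X$ then shows that $\Lambda\otimes R(f\Phi^{-1})$ is multiplication by the function $\wideparen{f\Phi^{-1}}$ from~\eqref{eq:paren-transf}. As a multiplication operator has operator norm equal to the $\sup$-norm of its multiplier, I obtain $\norm{\Lambda\otimes R(f\Phi^{-1})}\leq\norm[\infty]{\wideparen{f\Phi^{-1}}}$, and chaining with the first step yields~\eqref{eq:norm-ineq-Hn-operat}.

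The main obstacle is the middle step: establishing that the integrated operator $\Lambda\otimes R(f\Phi^{-1})$ genuinely collapses to a single multiplication operator on $\oper{W}_\phi V$ rather than remaining a general integral operator. This works only because two ingredients cooperate\,--\,the commutator property keeping $\map{s}(x)^{-1}g\map{s}(x)$ in the same right $H$-coset as $g$, and the induced covariance turning each such right $H$-shift into a scalar. I would also need to justify interchanging the $X$-integration with the weakly defined action for the admissible function classes, and to track the conjugation convention in $\chi$ carefully so that the resulting multiplier matches $\wideparen{f\Phi^{-1}}$ exactly.
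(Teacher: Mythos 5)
Your proposal is correct and follows essentially the same route as the paper's proof: the Lemma reduces the estimate to bounding \(\Lambda\otimes R(f\Phi^{-1})\), the complemented commutator property together with the covariance \(F(gh)=\chi(h)F(g)\) on the image of \(\oper{W}_\phi\) turns \(\Lambda\otimes R(\map{s}(x))\) into multiplication by \(\chi(h(x,g))\), and integration over \(X\) identifies \(\Lambda\otimes R(f\Phi^{-1})\) with multiplication by \(\wideparen{f\Phi^{-1}}\), whose operator norm is controlled by the \(\sup\)-norm. Your closing remarks about interchanging the integration and tracking the conjugation convention are sensible refinements of the same argument, not a departure from it.
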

\begin{proof}
  For an induced representation \(\uir{}{\chi}\)~\cite[\S~13.2]{Kirillov76}, the covariant
  transform \(\oper{W}_\phi\) maps  \(V\) to a space
  \(\FSpace[\chi]{L}{2}(G)\) of functions having the property
  \(F(gh)=\chi(h)F(g)\)~\cite[\S~3.1]{Kisil13a}. 
  From~\eqref{eq:h-identity-defn}, the restriction of \(\Lambda\otimes
  R\) to the space \(\FSpace[\chi]{L}{2}(G)\) is:
  \begin{displaymath}
        \Lambda\otimes R(\map{s}(x)): \psi(g) \mapsto \psi(gh)=\chi(h(x,g))\psi(g).
  \end{displaymath}
  In other words, \(\Lambda\otimes R\) acts by multiplication on
  \(\FSpace[\chi]{L}{2}(G)\). 
  Then, integrating the representation \(\Lambda \otimes R\) over \(X\)
  with a function \(k\) we get an operator \((L\otimes R)(k)\),
  which reduces on the irreducible component to multiplication by the
  function \(\wideparen{k}(g)\).
  Put \(k=f\Phi^{-1}\) for \(\Phi=\oper{W}_\phi \phi\). Then, from
  the inequality~\eqref{eq:norm-inequal-repres}, the norm of operator
  \(\uir{}{\chi}(f)\) can be estimated by \(\norm{\Lambda \otimes
    R(f \Phi^{-1})}=\norm[\infty]{\wideparen{f\Phi^{-1}}}\).
\end{proof}

For a nilpotent step \(2\) Lie group, the
transformation~\eqref{eq:paren-transf} is almost the Fourier
transform, cf. the case of the Heisenberg group
in~\cite[\S~2.1]{Howe80b}. This allows to estimate
\(\norm[\infty]{\wideparen{f\Phi^{-1}}}\) through
\(\norm[\infty]{\wideparen{f}}\), where \(\wideparen{f}\) is in the
essence the symbol of the respective PDO. For other groups, the
expression \(g^{-1}\map{s}(x)^{-1}g \map{s}(x)\)
in~\eqref{eq:paren-transf} contains non-linear terms and its analysis
is more difficult. In some circumstance the integral Fourier
operators~\cite[Ch.~VIII]{MTaylor81} may be useful for this purpose.

\section*{Acknowledgement}
Author is grateful to Prof.~Ivailo Mladenov for the kind invitation and
support.

\providecommand{\noopsort}[1]{} \providecommand{\printfirst}[2]{#1}
  \providecommand{\singleletter}[1]{#1} \providecommand{\switchargs}[2]{#2#1}
  \providecommand{\irm}{\textup{I}} \providecommand{\iirm}{\textup{II}}
  \providecommand{\vrm}{\textup{V}} \providecommand{\cprime}{'}
  \providecommand{\eprint}[2]{\texttt{#2}}
  \providecommand{\myeprint}[2]{\texttt{#2}}
  \providecommand{\arXiv}[1]{\myeprint{http://arXiv.org/abs/#1}{arXiv:#1}}
  \providecommand{\doi}[1]{\href{http://dx.doi.org/#1}{doi:
  #1}}\providecommand{\CPP}{\texttt{C++}}
  \providecommand{\NoWEB}{\texttt{noweb}}
  \providecommand{\MetaPost}{\texttt{Meta}\-\texttt{Post}}
  \providecommand{\GiNaC}{\textsf{GiNaC}}
  \providecommand{\pyGiNaC}{\textsf{pyGiNaC}}
  \providecommand{\Asymptote}{\texttt{Asymptote}}
\begin{bibdiv}
\begin{biblist}

\bib{AgostiniCapraraCiccotti07a}{article}{
      author={Agostini, F.},
      author={Caprara, S.},
      author={Ciccotti, G.},
       title={Do we have a consistent non-adiabatic quantum-classical
  mechanics?},
        date={2007},
        ISSN={0295-5075},
     journal={Europhys. Lett. EPL},
      volume={78},
      number={3},
       pages={Art. 30001, 6},
        note={\doi{10.1209/0295-5075/78/30001}},
      review={\MR{MR2366698 (2008k:81004)}},
}

\bib{AliAntGaz00}{book}{
      author={Ali, Syed~Twareque},
      author={Antoine, Jean-Pierre},
      author={Gazeau, Jean-Pierre},
       title={Coherent states, wavelets and their generalizations},
      series={Graduate Texts in Contemporary Physics},
   publisher={Springer-Verlag},
     address={New York},
        date={2000},
        ISBN={0-387-98908-0},
      review={\MR{2002m:81092}},
}

\bib{Arnold91}{book}{
      author={Arnol{\cprime}d, V.~I.},
       title={Mathematical methods of classical mechanics},
      series={Graduate Texts in Mathematics},
   publisher={Springer-Verlag},
     address={New York},
        date={1991},
      volume={60},
        ISBN={0-387-96890-3},
        note={Translated from the 1974 Russian original by K. Vogtmann and A.
  Weinstein, corrected reprint of the second (1989) edition},
      review={\MR{96c:70001}},
}

\bib{ArovDym08}{book}{
      author={Arov, Damir~Z.},
      author={Dym, Harry},
       title={{$J$}-contractive matrix valued functions and related topics},
      series={Encyclopedia of Mathematics and its Applications},
   publisher={Cambridge University Press},
     address={Cambridge},
        date={2008},
      volume={116},
        ISBN={978-0-521-88300-9},
      review={\MR{MR2474532}},
}

\bib{Bell08a}{book}{
      author={Bell, John~L.},
       title={A primer of infinitesimal analysis},
     edition={Second},
   publisher={Cambridge University Press},
     address={Cambridge},
        date={2008},
        ISBN={978-0-521-88718-2; 0-521-88718-6},
         url={http://dx.doi.org/10.1017/CBO9780511619625},
      review={\MR{2398446 (2009c:03075)}},
}

\bib{BocCatoniCannataNichZamp07}{book}{
      author={Boccaletti, Dino},
      author={Catoni, Francesco},
      author={Cannata, Roberto},
      author={Catoni, Vincenzo},
      author={Nichelatti, Enrico},
      author={Zampetti, Paolo},
       title={The mathematics of {Minkowski} space-time and an introduction to
  commutative hypercomplex numbers},
   publisher={Birkh\"auser Verlag},
     address={Basel},
        date={2008},
}

\bib{BoyerMiller74a}{article}{
      author={Boyer, Charles~P.},
      author={Miller, Willard, Jr.},
       title={A classification of second-order raising operators for
  {H}amiltonians in two variables},
        date={1974},
        ISSN={0022-2488},
     journal={J. Mathematical Phys.},
      volume={15},
       pages={1484\ndash 1489},
      review={\MR{0345542 (49 \#10278)}},
}

\bib{BrodlieKisil03a}{incollection}{
      author={Brodlie, Alastair},
      author={Kisil, Vladimir~V.},
       title={Observables and states in {$p$}-mechanics},
        date={2003},
   booktitle={Advances in mathematics research. vol. 5},
   publisher={Nova Sci. Publ.},
     address={Hauppauge, NY},
       pages={101\ndash 136},
        note={\arXiv{quant-ph/0304023}},
      review={\MR{MR2117375}},
}

\bib{CalzettaVerdaguer06a}{article}{
      author={Calzetta, Esteban},
      author={Verdaguer, Enric},
       title={Real-time approach to tunnelling in open quantum systems:
  Decoherence and anomalous diffusion},
        date={2006},
        ISSN={0305-4470},
     journal={J. Phys. A},
      volume={39},
      number={30},
       pages={9503\ndash 9532},
         url={http://dx.doi.org/10.1088/0305-4470/39/30/008},
      review={\MR{MR2246702 (2007f:82059)}},
}

\bib{CatoniCannataNichelatti04}{article}{
      author={Catoni, Francesco},
      author={Cannata, Roberto},
      author={Nichelatti, Enrico},
       title={The parabolic analytic functions and the derivative of real
  functions},
        date={2004},
     journal={Adv. Appl. Clifford Algebras},
      volume={14},
      number={2},
       pages={185\ndash 190},
}

\bib{CnopsKisil97a}{article}{
      author={Cnops, Jan},
      author={Kisil, Vladimir~V.},
       title={Monogenic functions and representations of nilpotent {L}ie groups
  in quantum mechanics},
        date={1999},
        ISSN={0170-4214},
     journal={Math. Methods Appl. Sci.},
      volume={22},
      number={4},
       pages={353\ndash 373},
        note={\arXiv{math/9806150}. \Zbl{1005.22003}},
      review={\MR{MR1671449 (2000b:81044)}},
}

\bib{ConstalesFaustinoKrausshar11a}{article}{
      author={Constales, Denis},
      author={Faustino, Nelson},
      author={Krau\ss{}har, Rolf~S\"oren},
       title={{F}ock spaces, {L}andau operators and the time-harmonic {M}axwell
  equations},
        date={2011},
     journal={Journal of Physics A: Mathematical and Theoretical},
      volume={44},
      number={13},
       pages={135303},
         url={http://stacks.iop.org/1751-8121/44/i=13/a=135303},
}

\bib{Cuntz01a}{article}{
      author={Cuntz, Joachim},
       title={Quantum spaces and their noncommutative topology},
        date={2001},
        ISSN={0002-9920},
     journal={Notices Amer. Math. Soc.},
      volume={48},
      number={8},
       pages={793\ndash 799},
      review={\MR{1847023 (2002g:58006)}},
}

\bib{Devis77}{book}{
      author={Davis, Martin},
       title={Applied nonstandard analysis},
   publisher={Wiley-Interscience [John Wiley \& Sons]},
     address={New York},
        date={1977},
        ISBN={0-471-19897-8},
      review={\MR{MR0505473 (58 \#21590)}},
}

\bib{deGosson08a}{article}{
      author={de~Gosson, Maurice~A.},
       title={Spectral properties of a class of generalized {L}andau
  operators},
        date={2008},
        ISSN={0360-5302},
     journal={Comm. Partial Differential Equations},
      volume={33},
      number={10--12},
       pages={2096\ndash 2104},
         url={http://dx.doi.org/10.1080/03605300802501434},
      review={\MR{MR2475331 (2010b:47128)}},
}

\bib{Dimentberg78a}{article}{
      author={Dimentberg, F.~M.},
       title={The method of screws and calculus of screws applied to the theory
  of three-dimensional mechanisms},
        date={1978},
        ISSN={0137-3722},
     journal={Adv. in Mech.},
      volume={1},
      number={3-4},
       pages={91\ndash 106},
      review={\MR{530595 (83h:70003)}},
}

\bib{Dimentberg78b}{book}{
      author={Dimentberg, F.~M.},
       title={{\cyr Teoriya vintov i ee prilozheniya} [{Theory} of screws and
  its applications]},
    language={Russian},
   publisher={``Nauka'', Moscow},
        date={1978},
      review={\MR{510196 (80b:70004)}},
}

\bib{Dirac26b}{article}{
      author={Dirac, P. A.~M.},
       title={On the theory of quantum mechanics},
        date={1926},
     journal={Proceedings of the Royal Society of London. Series A},
      volume={112},
      number={762},
       pages={661\ndash 677},
  eprint={http://rspa.royalsocietypublishing.org/content/112/762/661.full.pdf+html},
  url={http://rspa.royalsocietypublishing.org/content/112/762/661.short},
}

\bib{Dirac26a}{article}{
      author={Dirac, P. A.~M.},
       title={Quantum mechanics and a preliminary investigation of the hydrogen
  atom},
    language={English},
        date={1926},
        ISSN={09501207},
     journal={Proceedings of the Royal Society of London. Series A, Containing
  Papers of a Mathematical and Physical Character},
      volume={110},
      number={755},
       pages={561\ndash 579},
  eprint={http://rspa.royalsocietypublishing.org/content/110/755/561.full.pdf+html},
         url={http://www.jstor.org/stable/94410},
}

\bib{DiracPrinciplesQM}{book}{
      author={Dirac, P. A.~M.},
       title={The {P}rinciples of {Q}uantum {M}echanics},
     edition={4},
   publisher={Oxford University Press},
     address={London},
        date={1958},
      review={\MR{0023198 (9,319d)}},
}

\bib{DiracDirections}{book}{
      author={Dirac, Paul A.~M.},
       title={Directions in physics},
   publisher={Wiley-Interscience [John Wiley \& Sons]},
     address={New York},
        date={1978},
        ISBN={0-471-02997-1},
        note={Five lectures delivered during a visit to Australia and New
  Zealand, August--September, 1975, With a foreword by Mark Oliphant, Edited by
  H. Hora and J. R. Shepanski},
      review={\MR{0479067 (57 \#18520)}},
}

\bib{ErdelyiMagnusII}{book}{
      author={Erd{\'e}lyi, Arthur},
      author={Magnus, Wilhelm},
      author={Oberhettinger, Fritz},
      author={Tricomi, Francesco~G.},
       title={Higher transcendental functions. {V}ol. {II}},
   publisher={Robert E. Krieger Publishing Co. Inc.},
     address={Melbourne, Fla.},
        date={1981},
        ISBN={0-89874-069-X},
        note={Based on notes left by Harry Bateman, Reprint of the 1953
  original},
      review={\MR{698780 (84h:33001b)}},
}

\bib{FaddeevYakubovskii09}{book}{
      author={Faddeev, L.~D.},
      author={Yakubovski\u\i, O.~A.},
       title={Lectures on quantum mechanics for mathematics students},
    language={English},
      series={Student Mathematical Library},
   publisher={{American Mathematical Society (AMS)}},
     address={Providence, RI},
        date={2009},
      volume={47},
        note={Translated by Harold McFaden. xii, 234~p.},
}

\bib{Feynman1990qed}{book}{
      author={Feynman, R.P.},
       title={{QED}: the strange theory of light and matter},
      series={Penguin Press Science Series},
   publisher={Penguin},
        date={1990},
        ISBN={9780140125054},
         url={http://books.google.com/books?id=2X-3QgAACAAJ},
}

\bib{FeynHibbs65}{book}{
      author={Feynman, R.P.},
      author={Hibbs, A.R.},
       title={Quantum mechanics and path integral},
   publisher={McGraw-Hill Book Company},
     address={New York},
        date={{\noopsort{}}1965},
}

\bib{Folland89}{book}{
      author={Folland, Gerald~B.},
       title={Harmonic analysis in phase space},
      series={Annals of Mathematics Studies},
   publisher={Princeton University Press},
     address={Princeton, NJ},
        date={1989},
      volume={122},
        ISBN={0-691-08527-7; 0-691-08528-5},
      review={\MR{92k:22017}},
}

\bib{Gazeau09a}{book}{
      author={Gazeau, Jean-Pierre},
       title={{Coherent States in Quantum Physics}},
   publisher={Wiley-VCH Verlag},
        date={2009},
        ISBN={9783527407095},
}

\bib{Gromov90a}{book}{
      author={Gromov, N.~A.},
       title={{\cyr Kontraktsii i analiticheskie prodolzheniya klassicheskikh
  grupp. {E}dinyi podkhod}. ({R}ussian) [{C}ontractions and analytic extensions
  of classical groups. {U}nified approach]},
   publisher={Akad. Nauk SSSR Ural. Otdel. Komi Nauchn. Tsentr},
     address={Syktyvkar},
        date={1990},
      review={\MR{MR1092760 (91m:81078)}},
}

\bib{Gromov90b}{article}{
      author={Gromov, N.~A.},
       title={Transitions: Contractions and analytical continuations of the
  {C}ayley--{K}lein groups},
        date={1990},
        ISSN={0020-7748},
     journal={Int. J. Theor. Phys.},
      volume={29},
       pages={607\ndash 620},
         url={http://dx.doi.org/10.1007/BF00672035},
}

\bib{GromovKuratov05b}{article}{
      author={Gromov, N.~A.},
      author={Kuratov, V.~V.},
       title={All possible {C}ayley-{K}lein contractions of quantum orthogonal
  groups},
        date={2005},
     journal={Yadernaya Fiz.},
      volume={68},
      number={10},
       pages={1752\ndash 1762},
         url={http://dx.doi.org/10.1134/1.2121918},
      review={\MR{MR2189521 (2006g:81101)}},
}

\bib{Gromov12a}{book}{
      author={{Gromov}, N.A.},
       title={{\cyr Kontraktsii Klassicheskikh i Kvantovykh Grupp.}
  [contractions of classic and quanrum groups]},
    language={Russian},
   publisher={Moskva: Fizmatlit},
        date={2012},
        ISBN={978-5-9221-1398-4/hbk; 978-5-7691-2325-2/hbk},
}

\bib{GuentherKuzhel10a}{article}{
      author={G\"unther, Uwe},
      author={Kuzhel, Sergii},
       title={$\mathcal{{P}}\mathcal{{T}}$--symmetry, {C}artan decompositions,
  {L}ie triple systems and {K}rein space-related {C}lifford algebras},
        date={2010},
     journal={Journal of Physics A: Mathematical and Theoretical},
      volume={43},
      number={39},
       pages={392002},
         url={http://stacks.iop.org/1751-8121/43/i=39/a=392002},
}

\bib{HerranzOrtegaSantander99a}{article}{
      author={Herranz, Francisco~J.},
      author={Ortega, Ram{\'o}n},
      author={Santander, Mariano},
       title={Trigonometry of spacetimes: a new self-dual approach to a
  curvature/signature (in)dependent trigonometry},
        date={2000},
        ISSN={0305-4470},
     journal={J. Phys. A},
      volume={33},
      number={24},
       pages={4525\ndash 4551},
        note={\arXiv{math-ph/9910041}},
      review={\MR{MR1768742 (2001k:53099)}},
}

\bib{HerranzSantander02b}{article}{
      author={Herranz, Francisco~J.},
      author={Santander, Mariano},
       title={Conformal compactification of spacetimes},
        date={2002},
        ISSN={0305-4470},
     journal={J. Phys. A},
      volume={35},
      number={31},
       pages={6619\ndash 6629},
        note={\arXiv{math-ph/0110019}},
      review={\MR{MR1928852 (2004b:53123)}},
}

\bib{Howe80a}{article}{
      author={Howe, Roger},
       title={On the role of the {H}eisenberg group in harmonic analysis},
        date={1980},
        ISSN={0002-9904},
     journal={Bull. Amer. Math. Soc. (N.S.)},
      volume={3},
      number={2},
       pages={821\ndash 843},
      review={\MR{81h:22010}},
}

\bib{Howe80b}{article}{
      author={Howe, Roger},
       title={Quantum mechanics and partial differential equations},
        date={1980},
        ISSN={0022-1236},
     journal={J. Funct. Anal.},
      volume={38},
      number={2},
       pages={188\ndash 254},
      review={\MR{83b:35166}},
}

\bib{HoweTan92}{book}{
      author={Howe, Roger},
      author={Tan, Eng-Chye},
       title={Nonabelian harmonic analysis. {Applications of
  ${{\rm{S}}L}(2,{{\bf{R}}})$}},
   publisher={Springer-Verlag},
     address={New York},
        date={1992},
        ISBN={0-387-97768-6},
      review={\MR{1151617 (93f:22009)}},
}

\bib{Hudson66a}{thesis}{
      author={Hudson, Robin},
       title={Generalised translation-invariant mechanics},
        type={D. Phil. thesis},
     address={Bodleian Library, Oxford},
        date={1966},
}

\bib{Hudson04a}{incollection}{
      author={Hudson, Robin},
       title={Translation invariant phase space mechanics},
        date={2004},
   booktitle={Quantum theory: reconsideration of foundations---2},
      editor={Khrennikov, A.},
      series={Math. Model. Phys. Eng. Cogn. Sci.},
      volume={10},
   publisher={V\"axj\"o Univ. Press, V\"axj\"o},
       pages={301\ndash 314},
      review={\MR{2111131 (2006e:81134)}},
}

\bib{Khrennikov05a}{article}{
      author={Khrennikov, A.~Yu.},
       title={Hyperbolic quantum mechanics},
        date={2005},
        ISSN={0869-5652},
     journal={Dokl. Akad. Nauk},
      volume={402},
      number={2},
       pages={170\ndash 172},
      review={\MR{MR2162434 (2006d:81118)}},
}

\bib{Khrennikov01a}{article}{
      author={Khrennikov, Andrei},
       title={Linear representations of probabilistic transformations induced
  by context transitions},
        date={2001},
        ISSN={0305-4470},
     journal={J. Phys. A},
      volume={34},
      number={47},
       pages={9965\ndash 9981},
         url={http://dx.doi.org/10.1088/0305-4470/34/47/304},
        note={\arXiv{quant-ph/0106073}},
      review={\MR{1871755 (2002m:60007)}},
}

\bib{Khrennikov03a}{article}{
      author={Khrennikov, Andrei},
       title={{Hyperbolic Quantum Mechanics}},
    language={English},
        date={2003},
     journal={Adv. Appl. Clifford Algebras},
      volume={13},
      number={1},
       pages={1\ndash 9},
        note={\arXiv{quant-ph/0101002}},
}

\bib{Khrennikov08a}{article}{
      author={Khrennikov, Andrei},
       title={Hyperbolic quantization},
        date={2008},
        ISSN={0188-7009},
     journal={Adv. Appl. Clifford Algebr.},
      volume={18},
      number={3--4},
       pages={843\ndash 852},
      review={\MR{MR2490591}},
}

\bib{Khrennikov09book}{book}{
      author={Khrennikov, Andrei},
       title={Contextual approach to quantum formalism},
      series={Fundamental Theories of Physics},
   publisher={Springer},
     address={New York},
        date={2009},
      volume={160},
        ISBN={978-1-4020-9592-4},
         url={http://dx.doi.org/10.1007/978-1-4020-9593-1},
      review={\MR{2676217 (2011i:81012)}},
}

\bib{KhrennikovSegre07a}{incollection}{
      author={Khrennikov, Andrei},
      author={Segre, Gavriel},
       title={Hyperbolic quantization},
        date={2007},
   booktitle={Quantum probability and infinite dimensional analysis},
      editor={Accardi, L.},
      editor={Freudenberg, W.},
      editor={Sch\"urman, M.},
      series={QP--PQ: Quantum Probab. White Noise Anal.},
      volume={20},
   publisher={World Scientific Publishing, Hackensack, NJ},
       pages={282\ndash 287},
      review={\MR{MR2359402}},
}

\bib{KhrenVol01}{article}{
      author={Khrennikov, A.Yu.},
      author={Volovich, Ya.I.},
       title={Numerical experiment on interference for macroscopic particles},
        date={2001},
        note={\arXiv{quant-ph/0111159}},
}

\bib{Kirillov76}{book}{
      author={Kirillov, A.~A.},
       title={Elements of the theory of representations},
   publisher={Springer-Verlag},
     address={Berlin},
        date={1976},
        note={Translated from the Russian by Edwin Hewitt, Grundlehren der
  Mathematischen Wissenschaften, Band 220},
      review={\MR{54 \#447}},
}

\bib{Kirillov94a}{incollection}{
      author={Kirillov, A.~A.},
       title={Introduction to the theory of representations and noncommutative
  harmonic analysis [{\MR{90a:22005}}]},
        date={1994},
   booktitle={Representation theory and noncommutative harmonic analysis, i},
   publisher={Springer},
     address={Berlin},
       pages={1\ndash 156, 227\ndash 234},
        note={\MR{1311488}.},
      review={\MR{1 311 488}},
}

\bib{Kirillov99}{article}{
      author={Kirillov, A.~A.},
       title={Merits and demerits of the orbit method},
        date={1999},
        ISSN={0273-0979},
     journal={Bull. Amer. Math. Soc. (N.S.)},
      volume={36},
      number={4},
       pages={433\ndash 488},
      review={\MR{2000h:22001}},
}

\bib{Kisil08a}{article}{
      author={Kisil, Anastasia~V.},
       title={Isometric action of {${\rm SL}_2(\mathbb{R})$} on homogeneous
  spaces},
        date={2010},
     journal={Adv. App. Clifford Algebras},
      volume={20},
      number={2},
       pages={299\ndash 312},
        note={\arXiv{0810.0368}. \MR{2012b:32019}},
}

\bib{Kisil93c}{incollection}{
      author={Kisil, Vladimir~V.},
       title={{Clifford} valued convolution operator algebras on the
  {Heisenberg} group. {A} quantum field theory model},
        date={1993},
   booktitle={{C}lifford algebras and their applications in mathematical
  physics, proceedings of the {T}hird international conference held in
  {D}einze},
      editor={Brackx, F.},
      editor={Delanghe, R.},
      editor={Serras, H.},
      series={Fundamental Theories of Physics},
      volume={55},
   publisher={Kluwer Academic Publishers Group},
     address={Dordrecht},
       pages={287\ndash 294},
        note={\MR{1266878}},
}

\bib{Kisil94d}{incollection}{
      author={Kisil, Vladimir~V.},
       title={Quantum probabilities and non-commutative {F}ourier transform on
  the {H}eisenberg group},
        date={1995},
   booktitle={Interaction between functional analysis, harmonic analysis and
  probability ({C}olumbia, {MO}, 1994)},
      editor={Kalton, Nigel},
      editor={Saab, Elias},
      editor={Montgomery-Smith},
      series={Lecture Notes in Pure and Appl. Math.},
      volume={175},
   publisher={Dekker},
     address={New York},
       pages={255\ndash 266},
        note={\MR{97b:81060}},
}

\bib{Kisil96a}{article}{
      author={Kisil, Vladimir~V.},
       title={Plain mechanics: Classical and quantum},
        date={1996},
        ISSN={0963-2654},
     journal={J. Natur. Geom.},
      volume={9},
      number={1},
       pages={1\ndash 14},
        note={\arXiv{funct-an/9405002}},
      review={\MR{MR1374912 (96m:81112)}},
}

\bib{Kisil97c}{article}{
      author={Kisil, Vladimir~V.},
       title={Analysis in {$\mathbf{R}\sp {1,1}$} or the principal function
  theory},
        date={1999},
        ISSN={0278-1077},
     journal={Complex Variables Theory Appl.},
      volume={40},
      number={2},
       pages={93\ndash 118},
        note={\arXiv{funct-an/9712003}},
      review={\MR{MR1744876 (2000k:30078)}},
}

\bib{Kisil94e}{article}{
      author={Kisil, Vladimir~V.},
       title={Relative convolutions. {I}. {P}roperties and applications},
        date={1999},
        ISSN={0001-8708},
     journal={Adv. Math.},
      volume={147},
      number={1},
       pages={35\ndash 73},
        note={\arXiv{funct-an/9410001},
  \href{http://www.idealibrary.com/links/doi/10.1006/aima.1999.1833}{On-line}.
  \Zbl{933.43004}},
      review={\MR{MR1725814 (2001h:22012)}},
}

\bib{Kisil98a}{article}{
      author={Kisil, Vladimir~V.},
       title={Wavelets in {B}anach spaces},
        date={1999},
        ISSN={0167-8019},
     journal={Acta Appl. Math.},
      volume={59},
      number={1},
       pages={79\ndash 109},
        note={\arXiv{math/9807141},
  \href{http://dx.doi.org/10.1023/A:1006394832290}{On-line}},
      review={\MR{MR1740458 (2001c:43013)}},
}

\bib{Kisil00a}{article}{
      author={Kisil, Vladimir~V.},
       title={Quantum and classical brackets},
        date={2002},
        ISSN={0020-7748},
     journal={Internat. J. Theoret. Phys.},
      volume={41},
      number={1},
       pages={63\ndash 77},
        note={\arXiv{math-ph/0007030}.
  \href{http://dx.doi.org/10.1023/A:1013269432516}{On-line}},
      review={\MR{2003b:81105}},
}

\bib{Kisil01c}{inproceedings}{
      author={Kisil, Vladimir~V.},
       title={Two slits interference is compatible with particles'
  trajectories},
        date={2002},
   booktitle={Quantum theory: Reconsideration of foundations},
      editor={Khrennikov, Andrei},
      series={Mathematical Modelling in Physics, Engineering and Cognitive
  Science},
      volume={2},
   publisher={V\"axj\"o University Press},
       pages={215\ndash 226},
        note={\arXiv{quant-ph/0111094}},
}

\bib{Kisil02e}{article}{
      author={Kisil, Vladimir~V.},
       title={{$p$}-{M}echanics as a physical theory: an introduction},
        date={2004},
        ISSN={0305-4470},
     journal={J. Phys. A},
      volume={37},
      number={1},
       pages={183\ndash 204},
        note={\arXiv{quant-ph/0212101},
  \href{http://stacks.iop.org/0305-4470/37/183}{On-line}. \Zbl{1045.81032}},
      review={\MR{MR2044764 (2005c:81078)}},
}

\bib{Kisil05c}{article}{
      author={Kisil, Vladimir~V.},
       title={A quantum-classical bracket from {$p$}-mechanics},
        date={2005},
        ISSN={0295-5075},
     journal={Europhys. Lett.},
      volume={72},
      number={6},
       pages={873\ndash 879},
        note={\arXiv{quant-ph/0506122},
  \href{http://dx.doi.org/10.1209/epl/i2005-10324-7}{On-line}},
      review={\MR{MR2213328 (2006k:81134)}},
}

\bib{Kisil06a}{article}{
      author={Kisil, Vladimir~V.},
       title={Erlangen program at large--0: Starting with the group {${\rm
  SL}\sb 2({\bf R})$}},
        date={2007},
        ISSN={0002-9920},
     journal={Notices Amer. Math. Soc.},
      volume={54},
      number={11},
       pages={1458\ndash 1465},
        note={\arXiv{math/0607387},
  \href{http://www.ams.org/notices/200711/tx071101458p.pdf}{On-line}.
  \Zbl{1137.22006}},
      review={\MR{MR2361159}},
}

\bib{Kisil06b}{article}{
      author={Kisil, Vladimir~V.},
       title={Two-dimensional conformal models of space-time and their
  compactification},
        date={2007},
        ISSN={0022-2488},
     journal={J. Math. Phys.},
      volume={48},
      number={7},
       pages={\href{http://link.aip.org/link/?JMP/48/073506}{073506}, 8},
        note={\arXiv{math-ph/0611053}. \Zbl{1144.81368}},
      review={\MR{MR2337687}},
}

\bib{Kisil09a}{article}{
      author={Kisil, Vladimir~V.},
       title={Comment on ``{D}o we have a consistent non-adiabatic
  quantum-classical mechanics?'' by {A}gostini {F}. et al},
        date={2010},
     journal={Europhys. Lett. EPL},
      volume={89},
       pages={50005},
        note={\arXiv{0907.0855}},
}

\bib{Kisil09b}{article}{
      author={Kisil, Vladimir~V.},
       title={Computation and dynamics: {C}lassical and quantum},
        date={2010},
     journal={AIP Conference Proceedings},
      volume={1232},
      number={1},
       pages={306\ndash 312},
         url={http://link.aip.org/link/?APC/1232/306/1},
        note={\arXiv{0909.1594}},
}

\bib{Kisil07a}{article}{
      author={Kisil, Vladimir~V.},
       title={Erlangen program at large---2: {Inventing} a wheel. {The}
  parabolic one},
        date={2010},
     journal={Zb. Pr. Inst. Mat. NAN Ukr. (Proc. Math. Inst. Ukr. Ac. Sci.)},
      volume={7},
      number={2},
       pages={89\ndash 98},
        note={\arXiv{0707.4024}},
}

\bib{Kisil05a}{article}{
      author={Kisil, Vladimir~V.},
       title={Erlangen program at large--1: Geometry of invariants},
        date={2010},
     journal={SIGMA, Symmetry Integrability Geom. Methods Appl.},
      volume={6},
      number={076},
       pages={45},
        note={\arXiv{math.CV/0512416}. \MR{2011i:30044}. \Zbl{1218.30136}},
}

\bib{Kisil09c}{article}{
      author={Kisil, Vladimir~V.},
       title={{E}rlangen program at large---2 1/2: {I}nduced representations
  and hypercomplex numbers},
        date={2011},
     journal={{\cyr Izvestiya Komi nauchnogo centra UrO RAN} [Izvestiya Komi
  nauchnogo centra UrO RAN]},
      volume={1},
      number={5},
       pages={4\ndash 10},
        note={\arXiv{0909.4464}},
}

\bib{Kisil11a}{article}{
      author={Kisil, Vladimir~V.},
       title={{E}rlangen {P}rogramme at {L}arge 3.2: {L}adder operators in
  hypercomplex mechanics},
        date={2011},
     journal={Acta Polytechnica},
      volume={51},
      number={4},
       pages={\href{http://ctn.cvut.cz/ap/download.php?id=614}{44\ndash 53}},
        note={\arXiv{1103.1120}},
}

\bib{Kisil11c}{incollection}{
      author={Kisil, Vladimir~V.},
       title={{E}rlangen programme at large: an {O}verview},
        date={2012},
   booktitle={Advances in applied analysis},
      editor={Rogosin, S.V.},
      editor={Koroleva, A.A.},
   publisher={Birkh\"auser Verlag},
     address={Basel},
       pages={1\ndash 94},
        note={\arXiv{1106.1686}},
}

\bib{Kisil12a}{book}{
      author={Kisil, Vladimir~V.},
       title={Geometry of {M}\"obius transformations: {E}lliptic, parabolic and
  hyperbolic actions of {$\mathrm{SL}_2(\mathbf{R})$}},
   publisher={Imperial College Press},
     address={London},
        date={2012},
        note={Includes a live DVD. \Zbl{1254.30001}},
}

\bib{Kisil10a}{article}{
      author={Kisil, Vladimir~V.},
       title={Hypercomplex representations of the {H}eisenberg group and
  mechanics},
        date={2012},
        ISSN={0020-7748},
     journal={Internat. J. Theoret. Phys.},
      volume={51},
      number={3},
       pages={964\ndash 984},
         url={http://dx.doi.org/10.1007/s10773-011-0970-0},
        note={\arXiv{1005.5057}. \Zbl{1247.81232}},
      review={\MR{2892069}},
}

\bib{Kisil12c}{article}{
      author={Kisil, Vladimir~V.},
       title={Is commutativity of observables the main feature, which separate
  classical mechanics from quantum?},
        date={2012},
     journal={{\cyr Izvestiya Komi nauchnogo centra UrO RAN} [Izvestiya Komi
  nauchnogo centra UrO RAN]},
      volume={3},
      number={11},
       pages={4\ndash 9},
        note={\arXiv{1204.1858}},
}

\bib{Kisil12b}{article}{
      author={Kisil, Vladimir~V.},
       title={Operator covariant transform and local principle},
        date={2012},
     journal={J. Phys. A: Math. Theor.},
      volume={45},
       pages={244022},
        note={\arXiv{1201.1749}.
  \href{http://stacks.iop.org/1751-8121/45/244022}{On-line}},
}

\bib{Kisil13b}{article}{
      author={Kisil, Vladimir~V.},
       title={Boundedness of relative convolutions on nilpotent {L}ie groups},
        date={2013},
     journal={Zb. Pr. Inst. Mat. NAN Ukr. (Proc. Math. Inst. Ukr. Ac. Sci.)},
      volume={10},
      number={4--5},
       pages={185\ndash 189},
        note={\arXiv{1307.3882}},
}

\bib{Kisil13a}{article}{
      author={Kisil, Vladimir~V.},
       title={Calculus of operators: {C}ovariant transform and relative
  convolutions},
        date={2014},
     journal={Banach J. Math. Anal.},
      volume={8},
      number={2},
       pages={156\ndash 184},
         url={\url{http://www.emis.de/journals/BJMA/tex_v8_n2_a15.pdf}},
        note={\arXiv{1304.2792},
  \href{http://www.emis.de/journals/BJMA/tex_v8_n2_a15.pdf}{on-line}},
}

\bib{Kisil12d}{article}{
      author={Kisil, Vladimir~V.},
       title={The real and complex techniques in harmonic analysis from the
  point of view of covariant transform},
        date={2014},
     journal={Eurasian Math. J.},
      volume={5},
       pages={95\ndash 121},
        note={\arXiv{1209.5072}.
  \href{http://emj.enu.kz/images/pdf/2014/5-1-4.pdf}{On-line}},
}

\bib{Kisil13c}{incollection}{
      author={Kisil, Vladimir~V.},
       title={Uncertainty and analyticity},
    language={English},
        date={2015},
   booktitle={Current trends in analysis and its applications},
      editor={Mityushev, Vladimir~V.},
      editor={Ruzhansky, Michael~V.},
      series={Trends in Mathematics},
   publisher={Springer International Publishing},
       pages={583\ndash 590},
         url={http://dx.doi.org/10.1007/978-3-319-12577-0_64},
        note={\arXiv{1312.4583}},
}

\bib{Kisil14a}{article}{
      author={Kisil, Vladimir~V.},
       title={Remark on continued fractions, {M\"obius} transformations and
  cycles},
        date={2016},
     journal={{\cyr Izvestiya Komi nauchnogo centra UrO RAN} [Izvestiya Komi
  nauchnogo centra UrO RAN]},
      volume={25},
      number={1},
       pages={11\ndash 17},
         url={http://www.izvestia.komisc.ru/Archive/i25_ann.files/kisil.pdf},
        note={\arXiv{1412.1457},
  \href{http://www.izvestia.komisc.ru/Archive/i25_ann.files/kisil.pdf}{on-line}},
}

\bib{Lang85}{book}{
      author={Lang, Serge},
       title={{${\rm SL}\sb 2({\bf R})$}},
      series={Graduate Texts in Mathematics},
   publisher={Springer-Verlag},
     address={New York},
        date={1985},
      volume={105},
        ISBN={0-387-96198-4},
        note={Reprint of the 1975 edition},
      review={\MR{803508 (86j:22018)}},
}

\bib{LavrentShabat77}{book}{
      author={Lavrent{\cprime}ev, M.~A.},
      author={Shabat, B.~V.},
       title={{\cyr Problemy Gidrodinamiki i Ikh Matematicheskie Modeli}.
  [{P}roblems of hydrodynamics and their mathematical models]},
    language={Russian},
     edition={Second},
   publisher={Izdat. ``Nauka'', Moscow},
        date={1977},
      review={\MR{56 \#17392}},
}

\bib{LevyLeblond65a}{article}{
      author={L{\'e}vy-Leblond, Jean-Marc},
       title={Une nouvelle limite non-relativiste du groupe de {P}oincar\'e},
        date={1965},
     journal={Ann. Inst. H. Poincar\'e Sect. A (N.S.)},
      volume={3},
       pages={1\ndash 12},
      review={\MR{0192900 (33 \#1125)}},
}

\bib{Low09a}{article}{
      author={{Low}, S.~G.},
       title={{Noninertial Symmetry Group of Hamilton's Mechanics}},
        date={2009-03},
      eprint={0903.4397},
        note={\arXiv{0903.4397}},
}

\bib{Mackey63}{book}{
      author={Mackey, George~W.},
       title={Mathematical foundations of quantum mechanics},
   publisher={W.A.~Benjamin, Inc.},
     address={New York, Amsterdam},
        date={{\noopsort{}}1963},
}

\bib{Mazorchuk09a}{book}{
      author={Mazorchuk, Volodymyr},
       title={Lectures on {$\germ{sl}_2(\Bbb C)$}-modules},
   publisher={Imperial College Press, London},
        date={2010},
        ISBN={978-1-84816-517-5; 1-84816-517-X},
      review={\MR{2567743 (2011b:17019)}},
}

\bib{Niederer73a}{article}{
      author={Niederer, U.},
       title={The maximal kinematical invariance group of the free
  {S}chr\"odinger equation},
        date={1972/73},
        ISSN={0018-0238},
     journal={Helv. Phys. Acta},
      volume={45},
      number={5},
       pages={802\ndash 810},
      review={\MR{0400948 (53 \#4778)}},
}

\bib{Penrose78a}{inproceedings}{
      author={Penrose, Roger},
       title={The complex geometry of the natural world},
        date={1980},
   booktitle={Proceedings of the {I}nternational {C}ongress of
  {M}athematicians. {V}ol. 1, 2},
      editor={Lehto, Olli},
   publisher={Academia Scientiarum Fennica, Helsinki},
       pages={189\ndash 194},
        note={Held in Helsinki, August 15--23, 1978},
      review={\MR{562607}},
}

\bib{PercivalRichards82}{book}{
      author={Percival, Ian},
      author={Richards, Derek},
       title={{Introduction to Dynamics.}},
    language={English},
   publisher={{Cambridge etc.: Cambridge University Press. VIII, 228 p.}},
        date={1982},
}

\bib{Pilipchuk10a}{book}{
      author={Pilipchuk, Valery~N.},
       title={{Nonlinear dynamics. Between linear and impact limits.}},
    language={English},
      series={Lecture Notes in Applied and Computational Mechanics},
   publisher={{Springer}},
     address={Berlin},
        date={2010},
      volume={52},
}

\bib{Pilipchuk11a}{article}{
      author={Pilipchuk, Valery~N.},
       title={Non-smooth spatio-temporal coordinates in nonlinear dynamics},
        date={2011-01},
      eprint={1101.4597},
        note={\arXiv{1101.4597}},
}

\bib{PilipchukAndrianovMarkert16a}{article}{
      author={Pilipchuk, Valery~N.},
      author={Andrianov, Igor~V.},
      author={Markert, Bernd},
       title={Analysis of micro-structural effects on phononic waves in layered
  elastic media with periodic nonsmooth coordinates},
        date={2016},
        ISSN={0165-2125},
     journal={Wave Motion},
      volume={63},
       pages={149\ndash 169},
         url={http://dx.doi.org/10.1016/j.wavemoti.2016.01.007},
      review={\MR{3477689}},
}

\bib{Pimenov65a}{article}{
      author={Pimenov, R.I.},
       title={Unified axiomatics of spaces with maximal movement group},
    language={Russian},
        date={1965},
     journal={Litov. Mat. Sb.},
      volume={5},
       pages={457\ndash 486},
        note={\Zbl{0139.37806}},
}

\bib{Pontryagin86a}{book}{
      author={Pontryagin, L.~S.},
       title={{\cyr Obobshcheniya chisel} [{G}eneralisations of numbers]},
    language={Russian},
      series={{\cyr Bibliotechka ``Kvant''} [Library ``Kvant'']},
   publisher={``Nauka''},
     address={Moscow},
        date={1986},
      volume={54},
      review={\MR{MR886479 (88c:00005)}},
}

\bib{Porteous95}{book}{
      author={Porteous, Ian~R.},
       title={Clifford algebras and the classical groups},
      series={Cambridge Studies in Advanced Mathematics},
   publisher={Cambridge University Press},
     address={Cambridge},
        date={1995},
      volume={50},
        ISBN={0-521-55177-3},
      review={\MR{MR1369094 (97c:15046)}},
}

\bib{Prezhdo-Kisil97}{article}{
      author={Prezhdo, Oleg~V.},
      author={Kisil, Vladimir~V.},
       title={Mixing quantum and classical mechanics},
        date={1997},
        ISSN={1050-2947},
     journal={Phys. Rev. A (3)},
      volume={56},
      number={1},
       pages={162\ndash 175},
        note={\arXiv{quant-ph/9610016}},
      review={\MR{MR1459700 (99j:81010)}},
}

\bib{SrivastavaTuanYakubovich00a}{article}{
      author={Srivastava, H.~M.},
      author={Tuan, Vu~Kim},
      author={Yakubovich, S.~B.},
       title={The {C}herry transform and its relationship with a singular
  {S}turm-{L}iouville problem},
        date={2000},
        ISSN={0033-5606},
     journal={Q. J. Math.},
      volume={51},
      number={3},
       pages={371\ndash 383},
         url={http://dx.doi.org/10.1093/qjmath/51.3.371},
      review={\MR{1782100 (2001g:44010)}},
}

\bib{MTaylor81}{book}{
      author={Taylor, Michael~E.},
       title={Pseudodifferential operators},
      series={Princeton Mathematical Series},
   publisher={Princeton University Press},
     address={Princeton, N.J.},
        date={1981},
      volume={34},
        ISBN={0-691-08282-0},
      review={\MR{82i:35172}},
}

\bib{MTaylor86}{book}{
      author={Taylor, Michael~E.},
       title={Noncommutative harmonic analysis},
      series={Mathematical Surveys and Monographs},
   publisher={American Mathematical Society},
     address={Providence, RI},
        date={1986},
      volume={22},
        ISBN={0-8218-1523-7},
      review={\MR{88a:22021}},
}

\bib{ATorre08a}{article}{
      author={Torre, A},
       title={A note on the general solution of the paraxial wave equation: a
  {L}ie algebra view},
        date={2008},
     journal={Journal of Optics A: Pure and Applied Optics},
      volume={10},
      number={5},
       pages={055006 (14pp)},
         url={http://dx.doi.org/10.1088/1464-4258/10/5/055006},
}

\bib{ATorre10a}{article}{
      author={Torre, A},
       title={Linear and quadratic exponential modulation of the solutions of
  the paraxial wave equation},
        date={2010},
     journal={Journal of Optics A: Pure and Applied Optics},
      volume={12},
      number={3},
       pages={035701 (11pp)},
         url={http://stacks.iop.org/2040-8986/12/035701},
}

\bib{Ulrych05a}{article}{
      author={Ulrych, S.},
       title={Relativistic quantum physics with hyperbolic numbers},
        date={2005},
        ISSN={0370-2693},
     journal={Phys. Lett. B},
      volume={625},
      number={3--4},
       pages={313\ndash 323},
      review={\MR{MR2170329 (2006e:81103a)}},
}

\bib{Ulrych08a}{article}{
      author={Ulrych, S.},
       title={Representations of {C}lifford algebras with hyperbolic numbers},
        date={2008},
        ISSN={0188-7009},
     journal={Adv. Appl. Clifford Algebr.},
      volume={18},
      number={1},
       pages={93\ndash 114},
         url={http://dx.doi.org/10.1007/s00006-007-0057-4},
      review={\MR{MR2377525 (2009d:81139)}},
}

\bib{Ulrych10a}{article}{
      author={Ulrych, S.},
       title={Considerations on the hyperbolic complex {K}lein-{G}ordon
  equation},
        date={2010},
        ISSN={0022-2488},
     journal={J. Math. Phys.},
      volume={51},
      number={6},
       pages={063510, 8},
         url={http://dx.doi.org/10.1063/1.3397456},
      review={\MR{2676487 (2011k:81083)}},
}

\bib{Ulrych2014a}{article}{
      author={Ulrych, S.},
       title={Conformal relativity with hypercomplex variables},
        date={2014},
        ISSN={1364-5021},
     journal={Proceedings of the Royal Society of London A: Mathematical,
  Physical and Engineering Sciences},
      volume={470},
      number={2168},
  eprint={http://rspa.royalsocietypublishing.org/content/470/2168/20140027.full.pdf},
  url={http://rspa.royalsocietypublishing.org/content/470/2168/20140027},
}

\bib{Vasilevski99}{article}{
      author={Vasilevski, N.~L.},
       title={On the structure of {B}ergman and poly-{B}ergman spaces},
        date={1999},
        ISSN={0378-620X},
     journal={Integral Equations Operator Theory},
      volume={33},
      number={4},
       pages={471\ndash 488},
         url={http://dx.doi.org/10.1007/BF01291838},
      review={\MR{1682807 (2000g:46032)}},
}

\bib{Vourdas06a}{article}{
      author={Vourdas, A.},
       title={Analytic representations in quantum mechanics},
        date={2006},
        ISSN={0305-4470},
     journal={J. Phys. A},
      volume={39},
      number={7},
       pages={R65\ndash R141},
         url={http://dx.doi.org/10.1088/0305-4470/39/7/R01},
      review={\MR{MR2210163 (2007g:81069)}},
}

\bib{Wulfman10a}{book}{
      author={Wulfman, Carl~E.},
       title={{Dynamical Symmetry}},
   publisher={World Scientific},
        date={2010},
        ISBN={978-981-4291-36-1},
         url={http://www.worldscibooks.com/physics/7548.html},
}

\bib{Yaglom79}{book}{
      author={Yaglom, I.~M.},
       title={A simple non-{E}uclidean geometry and its physical basis},
      series={Heidelberg Science Library},
   publisher={Springer-Verlag},
     address={New York},
        date={1979},
        ISBN={0-387-90332-1},
        note={Translated from the Russian by Abe Shenitzer, with the editorial
  assistance of Basil Gordon},
      review={\MR{MR520230 (80c:51007)}},
}

\bib{Zachos02a}{article}{
      author={Zachos, Cosmas},
       title={Deformation quantization: Quantum mechanics lives and works in
  phase-space},
        date={2002},
        ISSN={0217-751X},
     journal={Internat. J. Modern Phys. A},
      volume={17},
      number={3},
       pages={297\ndash 316},
        note={\arXiv{hep-th/0110114}},
      review={\MR{1 888 937}},
}

\bib{Zejliger34}{book}{
      author={Zejliger, D.N.},
       title={{\cyr Kompleksnaya Linei0chataya Geometriya. Poverhnosti i
  Kongruencii.} [{C}omplex lined geometry. {S}urfaces and congruency]},
    language={Russian},
   publisher={GTTI},
     address={Leningrad},
        date={1934},
}

\end{biblist}
\end{bibdiv}

\printindex
\end{document}